\newenvironment{customlemma}[1]
  {\innercustomthm}
  {\endinnercustomthm}
\title{Incentives in Social Decision Schemes with Pairwise Comparison Preferences}
\author{Felix Brandt, Technical University of Munich \and Patrick Lederer\thanks{Corresponding author. Email address: ledererp@in.tum.de}, Technical University of Munich \and Warut Suksompong, National University of Singapore}
\newenvironment{profile}{
\smallskip
\small
\setlength{\tabcolsep}{1em}
\centering 
\medmuskip=0mu\relax
	\thickmuskip=1mu\relax
\noindent
\begin{tabular}{rlllll}}
{\end{tabular}\\[1ex]}
\newenvironment{profilenarrow}{
\smallskip
\small
\setlength{\tabcolsep}{0.5em}
\centering 
\medmuskip=0mu\relax
	\thickmuskip=1mu\relax
\noindent
\begin{tabular}{rlllll}}
{\end{tabular}\\[1ex]}
\newcommand{\seti}[2]{i\hspace{3pt}\in\hspace{3pt} \medmuskip=0mu\relax
	\thickmuskip=1mu\relax [#1\dots#2]}
\newcommand{\set}[2]{\medmuskip=0mu\relax
	\thickmuskip=1mu\relax [#1\dots#2]}
\newcommand{\X}{\ensuremath{\mathit{\mathcal{X}}}\xspace}
\begin{document}

\maketitle

\begin{abstract}
    Social decision schemes (SDSs) map the ordinal preferences of individual voters over multiple alternatives to a probability distribution over the alternatives. In order to study the axiomatic properties of SDSs, we
    lift preferences over alternatives to preferences over lotteries using the natural---but little understood---pairwise comparison (\pc) preference extension. This extension postulates that one lottery is preferred to another if the former is more likely to return a preferred outcome. We settle three open questions raised by \citet{Bran17a}: \emph{(i)} there is no Condorcet-consistent SDS that satisfies \pc-strategyproofness; \emph{(ii)} there is no anonymous and neutral SDS that satisfies \pc-efficiency and \pc-strategyproofness; and \emph{(iii)} there is no anonymous and neutral SDS that satisfies \pc-efficiency and strict \pc-participation. All three impossibilities require $m\geq 4$ alternatives and turn into possibilities when $m\leq 3$. We furthermore settle an open problem raised by \citet{ABB14b} by showing that no path of \pc-improvements originating from an inefficient lottery may lead to a \pc-efficient lottery.
\end{abstract}

\textbf{Keywords:} randomized social choice, pairwise comparison preferences, strategyproofness, participation

\textbf{JEL Classifications Code:} D7

\section{Introduction}

Incentives constitute a central aspect when designing mechanisms for multiple agents: mechanisms should incentivize agents to participate and to act truthfully. However, for many applications, guaranteeing these properties---usually called \emph{participation} and \emph{strategyproofness}---is a notoriously difficult task. This is particularly true for collective decision making, which studies the aggregation of preferences of multiple voters into a group decision, because strong impossibility theorems show that these axioms are in variance with other elementary properties \citep[see, e.g.,][]{Gibb73a,Satt75a,Moul88b}. For instance, the Gibbard-Satterthwaite theorem shows that every strategyproof voting rule is either dictatorial or imposing, and Moulin's No-Show paradox demonstrates that all Condorcet-consistent voting rules violate participation. 

A natural escape route in light of these negative results is to allow for randomization in the output of the voting rule. Rather than returning a single winner, a \emph{social decision scheme (SDS)} selects a lottery over the alternatives and the winner is eventually drawn at random according to the given probabilities. In order to study properties such as strategyproofness and participation as well as economic efficiency for SDSs, we need to make assumptions on how voters compare lotteries. The standard approach for this problem is to lift the voters' preferences over alternatives to preferences over lotteries by using the notion of \emph{stochastic dominance (\sd)}: a voter prefers a lottery to another one if the expected utility of the former exceeds that of the latter for \emph{every} utility representation consistent with his preferences over the alternatives \citep[see, e.g.,][]{Gibb77a,BoMo01a,BBEG16a}. 

Unfortunately, the negative results from deterministic social choice largely prevail when analyzing SDSs based on \sd preferences. For instance, \citet{Gibb77a} has shown that the only SDS that satisfies \sd-strategyproofness, non-imposition, and anonymity is the \emph{uniform random dictatorship (\rd)}, which chooses a voter uniformly at random and returns his favorite alternative. 
Similarly, \citet{BGP16c} have proven that Moulin's No-show paradox remains intact when defining participation based on \sd preferences. 
Independently of these negative results, the representation of preferences over lotteries via expected utility functions has come under scrutiny in decision theory \citep[e.g.,][]{Alla53a,KaTv79a,Mach89a,Anan09a}. 

As an alternative to traditional expected utility representations, some authors have proposed to postulate that one lottery is preferred to another if the former is more likely to return a preferred outcome \citep[][]{Blyt72a,Pack82a,Blav06a}.
The resulting preference extension is known as \emph{pairwise comparison (\pc)} and represents a special case of Fishburn's skew-symmetric bilinear utility functions \citep{Fish82c}. \citet{BBH15c} have shown that the No-Show paradox can be circumvented using \pc preferences. Moreover, \citet{BrBr17a} proved that \pc preferences constitute the \emph{only} domain of preferences within a rather broad class of preferences over lotteries (including all expected utility representations) that allows for preference aggregation that satisfies independence of irrelevant alternatives and efficiency, thus avoiding Arrow's impossibility. In both cases, the resulting SDS is the set of \emph{maximal lotteries (\ml}), which was proposed by \citet{Fish84a} and has recently attracted significant attention \citep{Bran13a,BBS20a,Peyr13a,Hoan17a}.\footnote{In France, maximal lotteries have been popularized under the name \emph{scrutin de Condorcet randomis\'e} (randomized Condorcet voting system).}

In this paper, we explore the limitations of collective choice with \pc preferences, in particular with regard to strategyproofness, participation, and efficiency. More specificially, we are interested in the question of whether there are attractive SDSs that satisfy \pc-strategyproofness or strict \pc-participation. The latter axiom demands that a voter is \emph{strictly} better off participating unless he is already at maximum happiness.
We prove the following theorems, all of which settle open problems raised by \citet[p.~18]{Bran17a}:

\begin{itemize}
    \item There is no Condorcet-consistent SDS that satisfies \pc-strategyproofness (\Cref{thm:CCSP}).
    \item There is no anonymous and neutral SDS that satisfies \pc-efficiency and \pc-strategyproofness (\Cref{thm:EffSP}).
    \item There is no anonymous and neutral SDS that satisfies \pc-efficiency and strict \pc-participation (\Cref{thm:strictpart}).
\end{itemize}
All three theorems hold for strict preferences and require $m\geq 4$ alternatives; we show that they turn into possibilities when $m\leq 3$ by constructing two new SDSs.
The second theorem strengthens Theorem 5 by \citet{ABBB15a}, which shows an analogous statement for weak preferences.\footnote{However, when ties are allowed in the voters' preferences, much stronger results hold: \citet{BBEG16a} have shown an analogous claim based on \sd preferences, which implies the result by \citet{ABBB15a}. Recent results \citep[e.g.,][]{BBPS21a,BBL21b} hint at the fact that even stronger impossibilities may hold for weak preferences.}

In the appendix, we furthermore settle an open problem concerning \pc-efficiency raised by \citet{ABB14b}: we construct a preference profile and a \pc-inefficient lottery $p$ such that no sequence of \pc-improvements starting from $p$ leads to a \pc-efficient lottery (\Cref{prop:cycle}).

\section{The Model}

Let $A=\{a_1, \dots, a_m\}$ be a finite set of $m$ alternatives and  $\mathbb{N}=\{1,2,3,\dots\}$ an infinite set of voters. We denote by $\mathcal{F}(\mathbb{N})$ the set of all finite and non-empty subsets of $\mathbb{N}$. Intuitively, $\mathbb{N}$ is the set of all potential voters, whereas $N\in\mathcal{F}(\mathbb{N})$ is a concrete electorate. Given an electorate $N\in \mathcal{F}(\mathbb{N})$, every voter $i\in N$ has a \emph{preference relation} $\succ_i$, which is a complete, transitive, and anti-symmetric binary relation on $A$. In particular, we do not allow for ties, which only makes our results stronger. We write preference relations as comma-separated lists and denote the set of all preference relations by $\mathcal{R}$. A \emph{preference profile} $R$ on an electorate $N\in\mathcal{F}(\mathbb{N})$ assigns a preference relation $\succ_i$ to every voter $i\in N$, i.e., $R\in \mathcal{R}^N$. When writing preference profiles, we subsume voters who share the same preference relation. To this end, we define ${\set{j}{k}}=\{i\in N\colon j\leq i\leq k\}$ and note that ${\set{j}{k}}=\emptyset$ if $j>k$. For instance, $\set{1}{3}$: $a,b,c$ means that voters $1$, $2$, and $3$ prefer $a$ to $b$ to $c$. We omit the brackets for singleton sets. Given a preference profile $R\in\mathcal{R}^N$, the \emph{majority margin} between two alternatives $x,y\in A$ is $g_{R}(x,y)=|\{i\in N\colon x\succ_i y\}|-|\{i\in N\colon y\succ_i x\}|$, i.e., the majority margin indicates how many more voters prefer $x$ to $y$ than \emph{vice versa}. Furthermore, we define $n_R(x)$ as the number of voters who prefer alternative~$x$ the most in the profile $R$.
Next, we denote by $R_{-i}=(\succ_1,\dots,\succ_{i-1}, \succ_{i+1},\dots, \succ_n)$ the profile derived from $R\in \mathcal{R}^N$ by removing voter $i\in N$. Finally, we define $\mathcal{R}^{\mathcal{F}(\mathbb{N)}}=\bigcup_{N\in \mathcal{F}(\mathbb{N})} \mathcal{R}^N$ as the set of all possible preference profiles.

The focus of this paper lies on \emph{social decision schemes (SDSs)}, which are functions that map a preference profile to a lottery over the alternatives. A \emph{lottery} $p$ is a probability distribution over the alternatives, i.e., a function $p:A\rightarrow [0,1]$ such that $p(x)\geq 0$ for all $x\in A$ and $\sum_{x\in A} p(x)=1$. The set of all lotteries on $A$ is denoted by $\Delta(A)$. Then, an SDS $f$ formally is a function of the type $f:\mathcal{R}^{\mathcal{F}(\mathbb{N})}\rightarrow \Delta(A)$. We define $f(R,x)$ as the probability assigned to $x$ by $f(R)$ and extend this notion to sets $X\subseteq A$ by letting $f(R,X)=\sum_{x\in X} f(R,x)$. 

In the next sections, we introduce various desirable properties of SDSs. An overview of these axioms and their relationships is given in \Cref{fig:overview}.

\subsection{Fairness and Decisiveness}

Two basic fairness notions are anonymity and neutrality, which require that voters and alternatives are treated equally, respectively. Formally, an SDS $f$ is \emph{anonymous} if $f(\pi(R))=f(R)$ for all electorates $N\in\mathcal{F}(\mathbb{N})$, preference profiles $R\in\mathcal{R}^N$, and permutations $\pi:N\rightarrow N$, where $R'=\pi(R)$ is defined by ${\succ'_i}={\succ_{\pi(i)}}$ for all $i\in N$. 

Analogously, \emph{neutrality} requires that $f(\pi(R))=\pi(f(R))$ for all electorates $N\in\mathcal{F}(\mathbb{N})$, preference profiles $R\in\mathcal{R}^N$, and permutations $\pi:A\rightarrow A$, i.e., $f(\pi(R))$ is equal to the distribution that, for each alternative $x\in A$, assigns probability $f(R,x)$ to alternative~$\pi(x)$. Here, $R'=\pi(R)$ is the profile such that for all $i\in N$ and $x,y\in A$, $\pi(x) \succ_i' \pi(y)$ if and only if $x \succ_i y$. 

A technical condition that many SDSs satisfy is cancellation. An SDS $f$ satisfies \emph{cancellation} if $f(R)=f(R')$ for all preference profiles $R, R'\in \mathcal{R}^{\mathcal{F}(\mathbb{N})}$ such that $R'$ is derived from $R$ by adding two voters with inverse preferences.

A natural further desideratum in randomized social choice concerns the decisiveness of SDSs: randomization should only be necessary if there is no sensible deterministic winner. This idea is, for example, captured in the notion of \emph{unanimity}, which requires that $f(R,x)=1$ for all profiles $R\in\mathcal{R}^{\mathcal{F}(\mathbb{N})}$ and alternatives $x\in A$ such that all voters in $R$ prefer $x$ the most. 

Clearly, this condition is rather weak and there are natural strengthenings, demanding that so-called absolute winners or Condorcet winners need to be returned with probability 1. An \emph{absolute winner} is an alternative $x$ that is top-ranked by more than half of the voters in $R\in \mathcal{R}^N$, i.e., $n_R(x)>\frac{|N|}{2}$. The \emph{absolute winner property} requires that $f(R,x)=1$ for all profiles $R\in\mathcal{R}^{\mathcal{F}(\mathbb{N})}$ with absolute winner $x$.

An alternative $x$ is a \emph{Condorcet winner} in a profile~$R$ if $g_{R}(x,y)>0$ for all $y\in A\setminus \{x\}$. \emph{Condorcet-consistency} requires that the Condorcet winner is chosen with probability $1$ whenever it exists, i.e., $f(R,x)=1$ for all preference profiles $R\in\mathcal{R}^{\mathcal{F}(\mathbb{N})}$ with Condorcet winner~$x$. Since absolute winners are Condorcet winners, Condorcet-consistency implies the absolute winner property, which in turn implies unanimity. 

\subsection{Preferences over Lotteries}

We assume that the voters' preferences over alternatives are lifted to preferences over lotteries via the pairwise comparison (\pc) extension \citep[see, e.g.,][]{ABB14b,ABBB15a,Bran17a,BrBr17a}. 
According to this notion, a voter prefers lottery $p$ to lottery $q$ if the probability that $p$ returns a better outcome than $q$ is at least as large as the probability that $q$ returns a better outcome than $p$, i.e.,
\[p\succsim_i^\pc q \iff \sum_{x,y\in A\colon x\succ_iy} p(x)q(y)\geq \sum_{x,y\in A\colon x\succ_iy} q(x)p(y).\]
The relation $\succsim_i^\pc$ is complete but intransitive (a phenomenon known as the Steinhaus-Trybula paradox). An appealing interpretation of \pc preferences is \emph{ex ante} regret minimization, i.e., given two lotteries, a voter prefers the one which is less likely to result in \emph{ex post} regret.

Despite the simple and intuitive definition, \pc preferences are difficult to work with and cognitively demanding on behalf of the voters because probabilities are multiplied with each other. 
We therefore introduce a variant of the \pc extension where one of the two lotteries under consideration has to be degenerate (i.e., it puts probability $1$ on a single alternative) and any pair of non-degenerate lotteries are deemed incomparable. 
To this end, we define \pcone preferences as follows: a voter $i$ \pcone-prefers lottery $p$ to lottery $q$ if $p\succsim_i^{\pc} q$ and at least one of $p$ and $q$ is degenerate. Assuming that $p(x)=1$ for some $x\in A$, this is equivalent to
\[p\succsim_i^{\pcone} q \iff \sum_{y\in A\colon x\succ_i y} q(y)\geq \sum_{y\in A\colon y\succ_i x} q(y).
\]
In other words, it only needs to be checked whether $q$ puts at least as much probability on alternatives that are worse than $x$ than on those that are better than $x$. This reduces the cognitive burden on voters and suffices for many lottery comparisons in our proofs involving \pc preferences.
It follows immediately from the definition that $p\succsim_i^\pcone q$ implies that $p\succsim_i^\pc q$ for all lotteries $p,q\in\Delta(A)$ and all preference relations $\succ_i$. Note that in contrast to \pc preferences, \pcone preferences are acyclic, i.e., there is no cycle in the strict part of $\succsim_i^{\pcone}$.

The most common way to compare lotteries when only ordinal preferences over alternatives are known is \emph{stochastic dominance (\sd)} \citep[e.g.,][]{Gibb77a,BoMo02a,BBEG16a}:
\[p\succsim_i^\sd q\iff \forall x\in A\colon\sum_{y\in A\colon y\succ_i x} p(y)\geq \sum_{y\in A\colon y\succ_i x} q(y).\]

In contrast to \pc preferences, the relation $\succsim_i^\sd$ is incomplete but transitive. Furthermore, it follows from a result by \citet{Fish84d} that $p\succsim_i^\sd q$ implies $p\succsim_i^\pc q$ for all preference relations $\succ_i$ and all lotteries $p$ and $q$ \citep[see also][]{ABB14b}. In other words, the \sd relation is a subrelation of the \pc relation. We will sometimes leverage this in our proofs because \sd preferences are easier to handle than \pc preferences.

For each $\X\in\{\pc,\pcone,\sd\}$, we say a voter \emph{strictly \X-prefers} $p$ to $q$, denoted by $p\succ_i^\X q$, if $p\succsim_i^\X q$ and not $q\succsim_i^\X p$. Note that both $p\succ_i^\sd q$ and $p\succ_i^\pcone q$ imply $p\succ_i^\pc q$.

For a better understanding of these concepts, consider a voter with the preference relation ${\succ_1}=a,b,c,d$ and three lotteries $p$, $q$, and $r$ with
\begin{align*}
    p(b)=1\text{,}\qquad\qquad
    q(a)=q(d)=\nicefrac{1}{2}\text{, and}\qquad\qquad
    r(b)=r(c)=\nicefrac{1}{2}.
\end{align*}

First, observe that $p\succ_1^\sd r$ since $p$ is derived from $r$ by moving probability from $c$ to~$b$. In particular, this implies that $p\succ_1^\pc r$ and, since $p$ is degenerate, also $p\succ_1^\pcone r$. On the other hand, \sd does not allow to compare $q$ with $p$ or $r$. 
Since $q(a)= q(d)$ and $a\succ_1 b\succ_1 d$, it follows that $q\succsim_1^\pcone p$ and $p\succsim_1^\pcone q$. This means also that $q\succsim_1^\pc p$ and $p\succsim_1^\pc q$. 
The lotteries $q$ and $r$ can only be compared via \pc preferences by computing that 
\[
\sum_{x,y\in A\colon x\succ_1 y} q(x)r(y)=q(a)r(b)+q(a)r(c)=\frac{1}{2}=r(b)q(d)+r(c)q(d)=\sum_{x,y\in A\colon x\succ_1 y} r(x)q(y).
\]
Hence, we have that $q\succsim_1^\pc r$ and $r\succsim_1^\pc q$. Note that this example also shows that $\pc$ preferences are intransitive: $r\succsim_1^\pc q$ and $q\succsim_1^\pc p$ but $p\succ_1^\pc r$.

\subsection{Efficiency}
\label{sec:efficiency}

Next, we discuss efficiency, which intuitively requires that we cannot make a voter better off without making another voter worse off. Since this axiom requires voters to compare lotteries, we need an underlying lottery extension and we therefore define efficiency depending on $\X\in \{\pc, \pcone, \sd\}$. To formalize the intuition behind this property, we say a lottery $p$ \emph{\X-dominates} another lottery~$q$ in a profile $R\in\mathcal{R}^N$ if $p\succsim_i^\X q$ for all voters $i\in N$ and $p\succ_{i^*}^\X q$ for some voter $i^*\in N$. In this case, we also say that $p$ is an \emph{\X-improvement} of $q$. Less formally, $p$ is an \X-improvement of $q$ if $p$ makes every voter weakly better off and at least one strictly better. According to our intuition $q$ can therefore not be efficient and we thus say a lottery $p$ is \emph{\X-efficient} in $R$ if it is not \X-dominated by any other lottery. Similarly, an SDS $f$ is \emph{\X-efficient} if $f(R)$ is \X-efficient for all preference profiles $R\in\mathcal{R}^{\mathcal{F}(\mathbb{N})}$.

Since $p\succsim_i^\sd q$ and $p\succsim_i^\pcone q$ imply $p\succsim_i^\pc q$ for all voters $i$ and lotteries $p$ and $q$, it follows that a lottery that is \sd-dominated or \pcone-dominated is also \pc-dominated. Hence, for every profile $R$, the set of \pc-efficient lotteries is contained in both the sets of \pcone-efficient and \sd-efficient lotteries. This means that \pc-efficiency implies \sd-efficiency and \pcone-efficiency. Moreover, both \pc-efficiency and \sd-efficiency imply the well-known concept of \emph{ex post} efficiency. In order to introduce this concept, we say an alternative $x$ \emph{Pareto-dominates} another alternative $y$ in a profile $R\in\mathcal{R}^N$ if $x\succ_i y$ for all voters $i\in N$. Recall here that ties in $\succ_i$ are not allowed. \emph{Ex post efficiency} then requires that $f(R,x)=0$ for all profiles $R\in\mathcal{R}^{\mathcal{F}(\mathbb{N})}$ and alternatives $x\in A$ that are Pareto-dominated in $R$. 

To illuminate the natural relationship between \emph{ex post} efficiency and \pc-efficiency, let us take a probabilistic view on \emph{ex post} efficiency. First, observe that there is no alternative $x$ that is preferred by all voters to an alternative drawn from an \emph{ex post} efficient lottery $p$. Hence, for any other lottery $q$, the probability that $q$ returns an outcome that is unanimously preferred to an outcome returned by $p$ is $0$, i.e., $\mathbb{P}(\forall i\in N\colon q\succ_i p)=0$, where we view the lotteries $p$ and $q$ as random variables on $A$. Conversely, if $p$ is not \emph{ex post} efficient, it follows that $\mathbb{P}(\forall i\in N\colon q\succ_i p)>0$ for the lottery $q$ derived from $p$ by shifting the probability from the Pareto-dominated alternatives to their dominator. Hence, a lottery $p$ is \emph{ex post} efficient in a profile $R\in\mathcal{R}^N$ if and only if there is no other lottery $q$ such that 
\begin{equation}
    \mathbb{P}(\forall i\in N\colon q\succ_i p)>\mathbb{P}(\forall i\in N\colon p\succ_i q).\label{eq:po}
\end{equation}

From this inequality, one immediately obtains \pc-efficiency by moving the quantification over the voters outside of the probability: a lottery $p$ is is \pc-efficient in a profile $R\in\mathcal{R}^N$ if and only if there is no other lottery $q$ such that
\begin{equation}
    \begin{split}
    & \forall i\in N\colon \mathbb{P}(q\succ_i p)\geq \mathbb{P}(p\succ_i q) \\ \land \quad &  \exists i\in N\colon \mathbb{P}(q\succ_i p)> \mathbb{P}(p\succ_i q).\label{eq:pc}
    \end{split}
\end{equation}

Despite its simple and intuitive definition, \pc-efficiency is surprisingly complex and little understood. For instance, while \citet{ABB14b} prove that the set of \pc-efficient lotteries is non-empty and connected, they also provide examples showing that this set may fail to be convex and can even be ``curved'' (i.e., it is not the union of a finite number of polytopes). Furthermore, they construct a preference profile with a \pc-dominated lottery $p$ that is not dominated by any \pc-efficient lottery. In their example, however, one can find an intermediate lottery which \pc-dominates $p$ and which is \pc-dominated by a \pc-efficient lottery. \citeauthor{ABB14b} conclude their paper by writing that ``it is an interesting open problem whether there always is a path of Pareto improvements from every [\pc-]dominated lottery to some [\pc-]undominated lottery'' \citep[][p.~129]{ABB14b}. In \Cref{subsec:cycle}, we answer this problem in the negative by providing a profile with five alternatives and eight voters, where following \emph{any} sequence of \pc-improvements from a certain lottery $p$ will lead back to $p$ and it is thus not possible to reach a \pc-efficient outcome by only applying \pc-improvements.

\subsection{Incentive-Compatibility}

The final class of axioms we consider consists of strategyproofness, and (strict) participation. Just like efficiency, both of these axioms can be defined for all lottery extensions; we thus define the concepts for $\X\in\{\pc,\pcone,\sd\}$. 

\paragraph{Strategyproofness.} Intuitively, strategyproofness demands that no voter can benefit by lying about his true preferences. Since \sd and \pcone are incomplete, there are two different ways of defining this axiom depending on how incomparable lotteries are handled. The first option, which we call \emph{\X-strategyproofness}, requires of an SDS $f$ that $f(R)\succsim^\X_i f(R')$ for all electorates $N\in\mathcal{F}(\mathbb{N})$, voters $i\in N$, and preference profiles $R,R'\in\mathcal{R}^N$ with $R_{-i}=R'_{-i}$. In particular, this means that we interpret a deviation from a lottery $p$ to another lottery $q$ as a manipulation if $p$ is incomparable to $q$ with respect to \X. 

\X-strategyproofness is predominant in the literature on \sd preferences \citep[e.g.,][]{Gibb77a,Barb79b,Barb10a}, but it becomes very prohibitive for sparse preference relations over lotteries. For instance, not even the SDS which always returns the uniform lottery over the alternatives is \pcone-strategyproof because \pcone cannot compare the uniform lottery to itself. For such preferences, the notion of weak \X-strategyproofness is more sensible: an SDS $f$ is \emph{weakly \X-strategyproof} if $f(R')\not\succ_i^{\X} f(R)$ for all electorates $N\in\mathcal{F}(\mathbb{N})$, voters $i\in N$, and preference profiles $R,R'\in\mathcal{R}^N$ with $R_{-i}=R'_{-i}$. Or, in words, weak strategyproofness requires that no voter can obtain a strictly \X-preferred outcome by lying about his true preferences.\footnote{In the literature, \X-strategyproofness is sometimes called strong \X-strategyproofness, and weak \X-strategyproofness is then called \X-strategyproofness. This is, for instance, the case in the survey by \citet{Bran17a}.} Note that, since \pc preferences are complete, \pc-strategyproofness coincides with weak \pc-strategyproofness. By contrast, for \sd and \pcone, weak \X-strategyproofness is significantly less demanding than \X-strategyproofness. 
We say an SDS is \emph{\X-manipulable} if it is not \X-strategyproof and \emph{strongly \X-manipulable} if it is not weakly \X-strategyproof. Furthermore, since strategyproofness does not require a variable electorate, we always specify the electorates for which we show that an SDS is strategyproof or manipulable. 

\paragraph{Participation.} As the last point of this section, we discuss participation axioms, which intuitively require that voters should not be able to benefit by abstaining from the election. Analogous to strategyproofness, one could formalize this condition in two ways depending on how incomparabilities between lotteries are interpreted.\footnote{As with strategyproofness, both versions are equivalent for \pc because \pc preferences are complete.} Nevertheless, we will focus in our results only on the strong notion and thus say that an SDS $f$ satisfies \emph{\X-participation} if $f(R)\succsim_i^\X f(R_{-i})$ for all electorates $N\in\mathcal{F}(\mathbb{N})$, voters $i\in N$, and preference profiles $R\in\mathcal{R}^N$. 

In this paper, we are mainly interested in \emph{strict \X-participation}, as introduced by \citet{BBH15b}, which demands of an SDS $f$ that, for all $N\in\mathcal{F}(\mathbb{N})$, $i\in N$, and $R\in\mathcal{R}^N$, it holds that $f(R)\succsim_i^\X f(R_{-i})$ and, moreover, $f(R)\succ_i^\X f(R_{-i})$ if there is a lottery $p$ with $p\succ_i^\X f(R_{-i})$. That is, whenever possible, a voter \emph{strictly} benefits from voting compared to abstaining.\bigskip

Since both $p\succsim_i^\sd q$ and $p\succsim_i^\pcone q$ imply $p\succsim_i^\pc q$, the concepts above are related for \pc, \pcone, and \sd: \sd-strategyproofness implies \pc-strategyproofness which implies weak \pcone-strategyproofness. Furthermore, strict \sd-participation is stronger than strict \pc-participation, which obviously entails \pc-participation \citep[cf.][]{Bran17a}. An overview of these relationships is given in \Cref{fig:overview}. 

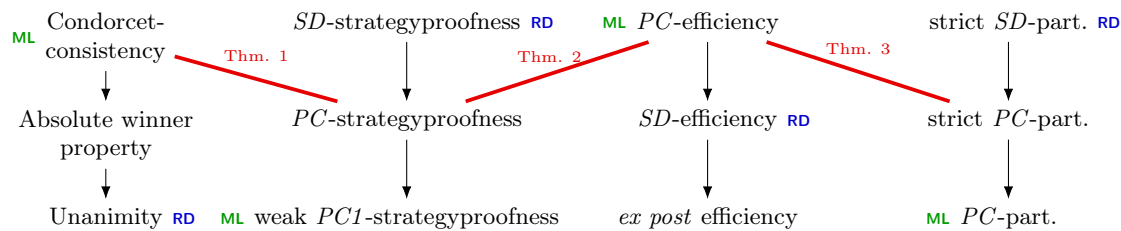
\begin{figure*}[tb]
    \centering
\begin{tikzpicture}\footnotesize
\tikzstyle{arrow}=[->,>=angle 60, shorten >=1pt,draw]
\tikzstyle{mynode}=[align=center, anchor=north]
\tikzstyle{myedge}=[color=red!90!black, line width=0.05cm]
\tikzstyle{ml}=[label={[label distance=-0.1cm]180:\textsf{\tiny\bfseries\color{green!60!black}ML}}]
\tikzstyle{rd}=[label={[label distance=-0.1cm]0:\textsf{\tiny\bfseries\color{blue!80!black}RD}}]

\def\dis{1.3} 
		\def\decisivecol{0} 
		\def\partcol{12} 
		\def\spcol{4} 
		\def\efficiencycol{8} 
	
		\draw (\efficiencycol,0) node[mynode, ml] (PC-eff) {\pc-efficiency}
		++(0,-1*\dis) node[mynode,rd](SD-eff) {\sd-efficiency}
		++(0,-1*\dis) node[mynode] (po) {\emph{ex post} efficiency};
		
		\draw[-Latex] (PC-eff) to (SD-eff);
		\draw[-Latex] (SD-eff) to (po);

		\draw (\spcol,0) node[mynode,rd] (SD-sp) {\sd-strategyproofness}
		++(0,-1*\dis) node[mynode] (PC-sp) {\pc-strategyproofness}
		++(0,-1*\dis) node[mynode,ml] (PC1-sp) {weak \pcone-strategyproofness};
		
		\draw[-Latex] (SD-sp) to (PC-sp);
		\draw[-Latex] (PC-sp) to (PC1-sp);
		
		\draw (\partcol,0) node[mynode,rd] (s-SD-part) {strict \sd-part.}
		++(0,-1*\dis) node[mynode] (s-PC-part) {strict \pc-part.}
		++(0,-\dis) node[mynode, ml] (PC-part) {\pc-part.};
		
		\draw[-Latex] (s-SD-part) to (s-PC-part);
		\draw[-Latex] (s-PC-part) to (PC-part);
		
		\draw (\decisivecol,0) node[mynode, ml] (CC) {Condorcet-\\consistency}
		++(0,-1*\dis) node[mynode] (AWP) {Absolute winner\\property}
		++(0,-1*\dis) node[mynode,rd] (Una) {Unanimity};

        \draw[-Latex] (CC) to (AWP);
        \draw[-Latex] (AWP) to (Una);
        
        \draw[myedge] (CC) edge node[above, yshift=3pt ] {\tiny Thm. \ref{thm:CCSP}} (PC-sp);
        
        \draw[myedge] (PC-sp) edge node[above, xshift=-3pt] {\tiny Thm. \ref{thm:EffSP}} (PC-eff);
        
        \draw[myedge] (PC-eff) edge node[above, yshift=3pt ] {\tiny Thm. \ref{thm:strictpart}} (s-PC-part);
		
\end{tikzpicture}
    \caption{Overview of results. An arrow from an axiom $X$ to another axiom $Y$ indicates that $X$ implies $Y$. The red lines between axioms represent impossibility theorems. Note that \Cref{thm:EffSP,thm:strictpart} additionally require anonymity and neutrality. Axioms labeled with \ml are satisfied by maximal lotteries, and axioms labeled with \rd are satisfied by the uniform random dictatorship.
    }
    \label{fig:overview}
\end{figure*}

\section{Random Dictatorship and Maximal Lotteries}
\label{sec:RD-ML}

The following two important SDSs are useful for putting our results into perspective: the uniform random dictatorship (\rd) and maximal lotteries (\ml). These SDSs are well-known and most of the subsequent claims are taken from the survey by \citet{Bran17a}. The \emph{uniform random dictatorship (\rd)} assigns probabilities proportional to $n_R(x)$, i.e., $\rd(R,x)=\frac{n_R(x)}{\sum_{y\in A}n_R(y)}$ for every alternative $x\in A$ and preference profile $R\in\mathcal{R}^{\mathcal{F}(\mathbb{N})}$. More intuitively, \rd chooses a voter uniformly at random and returns his favorite alternative as the winner. This SDS satisfies strong incentive axioms, but fails efficiency and decisiveness criteria. 

\begin{proposition}\label{prop:rd}
\rd satisfies \sd-strategyproofness, strict \sd-participation, and \sd-efficiency, but fails \pcone-efficiency and the absolute winner property.
\end{proposition}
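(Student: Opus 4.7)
The plan is to verify each of the five claims in turn, since they are largely independent.

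For \sd-strategyproofness, the key observation is that $\rd(R)$ depends only on the vector of top choices: a unilateral deviation by voter $i$ either leaves $i$'s top unchanged (no change in $\rd$) or shifts probability mass $1/|N|$ from $i$'s true top to some less-preferred alternative, which is a textbook \sd-worsening transfer. For strict \sd-participation, adding voter $i$ shifts mass onto $i$'s true top at the expense of all other alternatives, which is a weak \sd-improvement; it is strict unless $\rd(R_{-i})$ already places all probability on $i$'s top, which is precisely the case in which no lottery is strictly \sd-preferred to $\rd(R_{-i})$ by voter $i$. The \sd-efficiency of $\rd$ is a classical result (\rd can be expressed as a convex combination of dictatorial outcomes, from which \sd-efficiency can be shown via a standard upper-contour comparison), and I would simply cite it.

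For the failure of the absolute winner property, it suffices to take a three-voter profile with two voters ranking $a$ first and one voter ranking $b$ first: $a$ is an absolute winner but $\rd(R,a)=2/3<1$.

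The most substantial part is exhibiting a profile witnessing \pcone-inefficiency. I plan to use a four-alternative, four-voter profile whose $\rd$-outcome is \pcone-dominated by a degenerate lottery. Consider two voters reporting $a,b,c,d$, one voter reporting $c,b,a,d$, and one voter reporting $d,b,a,c$. Then $\rd(R) = \tfrac12 a + \tfrac14 c + \tfrac14 d$, and the Dirac $\delta_b$ is \pcone-equivalent to $\rd(R)$ for the two $a,b,c,d$-voters (the $\tfrac12$ mass above $b$ on $a$ is balanced by the combined $\tfrac14+\tfrac14$ mass below $b$ on $c$ and $d$) and strictly \pcone-preferred by voters $3$ and $4$ (for each of them only $\tfrac14$ mass sits above $b$ on their own top, while the remaining $\tfrac34$ sits below $b$), so $\rd$ is not \pcone-efficient. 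The main design obstacle is to arrange voters so that \pcone-indifferent and \pcone-strict cases coexist; once the profile is in hand, the \pc-inequalities are routine arithmetic.
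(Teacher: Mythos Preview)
Your argument is correct on all five points. The paper does not give a formal proof of this proposition; it cites the survey by \citet{Bran17a} for the three positive claims and then exhibits a single profile that witnesses both negative claims simultaneously:
\[
\text{$[1\dots 3]$: $a,b,c$\qquad $4$: $b,a,c$\qquad $5$: $c,a,b$},
\]
where $\rd(R)=(\tfrac35,\tfrac15,\tfrac15)$, $a$ is the absolute winner, and the degenerate lottery on $a$ \pcone-dominates $\rd(R)$ (voters $1$--$3$ strictly prefer it, voters $4$ and $5$ are \pcone-indifferent since the $\tfrac15$ above $a$ balances the $\tfrac15$ below).

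The substantive difference is in the counterexamples. You use two separate profiles and, for \pcone-inefficiency, four alternatives; the paper uses one profile with only three alternatives. The paper's example is a bit sharper because it shows that \rd already fails \pcone-efficiency at $m=3$ (which matters for the narrative around \Cref{prop:example-m-3} and $f^2$), and it economizes by letting the same profile do double duty. Your four-alternative construction is perfectly valid, but the design obstacle you describe---arranging coexisting \pcone-indifferent and \pcone-strict voters---can already be met with three alternatives by making the non-dominated alternative the absolute winner itself rather than a never-top-ranked compromise. Your treatment of the positive claims is more explicit than the paper's, which simply defers to the literature.
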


Clearly, since \sd-strategyproofness and strict \sd-participation imply the corresponding concepts for \pc, \rd satisfies these incentive axioms also for \pc preferences. Even more, when additionally requiring anonymity, \rd is the only SDS that satisfies \sd-strategyproofness and \sd-efficiency \citep{Gibb77a}. On the other hand, a result by \citet{Beno02a} implies that no \sd-strategyproof SDS can satisfy the absolute winner property. For \rd, this claim as well as its failure of \pcone-efficiency can be observed in the following profile.

\begin{profile}
$R$: & $\set{1}{3}$: $a,b,c$ & $4$: $b,a,c$ & $5$: $c,a,b$
\end{profile}
For this profile, $\rd(R,a)=\frac{3}{5}$ and $\rd(R,b)=\rd(R,c)=\frac{1}{5}$, but $a$ is the absolute winner and the lottery that puts probability $1$ on $a$ \pcone-dominates $\rd(R)$.

The set of maximal lotteries for profile $R$ is defined as 
\[
\ml(R)= \{p\in \Delta(A)\colon \sum_{x,y\in A} p(x)q(y)g_{R}(x,y)\geq 0 \text{ for all } q\in \Delta(A)\}.
\]
$\ml(R)$ is non-empty by the minimax theorem and almost always a singleton. In particular, if the number of voters is odd, there is always a unique maximal lottery. In case of multiple maximal lotteries, the claim that \ml satisfies weak \pcone-strategyproofness requires a mild tie-breaking assumption: a degenerate lottery may only be returned if it is the unique maximal lottery. For all other claims, ties can be broken arbitrarily.
As the next proposition shows, \ml satisfies strong efficiency and decisiveness notions but is rather manipulable. 

\begin{proposition}\label{prop:ml}
\ml satisfies \pc-efficiency, \pc-participation, Condorcet-consistency, and weak \pcone-strategyproofness, but fails \pc-strategyproofness and strict \pc-participation.
\end{proposition}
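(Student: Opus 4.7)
The plan is to handle the four positive claims by exploiting the defining inequality of \ml, namely that $p\in\ml(R)$ iff $\sum_{x,y}p(x)q(y)g_R(x,y)\ge 0$ for every $q\in\Delta(A)$, and to dispose of the two negative claims by small worked examples.

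For \pc-efficiency, let $p=\ml(R)$ and suppose for contradiction that some lottery $q$ \pc-dominates $p$. Rewriting the voterwise \pc-inequality $\sum_{x\succ_i y}q(x)p(y)\ge\sum_{x\succ_i y}p(x)q(y)$ as $\sum_{x,y}q(x)p(y)\,g_{\succ_i}(x,y)\ge 0$ and summing over $i\in N$ (with strict inequality for at least one $i$) yields $\sum_{x,y}q(x)p(y)\,g_R(x,y)>0$. By skew-symmetry of $g_R$ this equals $-\sum_{x,y}p(x)q(y)\,g_R(x,y)$, so the test lottery $q$ witnesses $\sum_{x,y}p(x)q(y)\,g_R(x,y)<0$, contradicting $p\in\ml(R)$. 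Condorcet-consistency is then immediate: when $x$ is a Condorcet winner, $\sum_{y}q(y)g_R(x,y)\ge 0$ for every $q$, with equality only if $q=\delta_x$, so $\delta_x$ is the unique maximal lottery and hence selected under any tie-breaking rule.

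For \pc-participation, let $p=\ml(R)$ and $p'=\ml(R_{-i})$. The identity $g_R=g_{R_{-i}}+g_{\succ_i}$ gives
\[\sum_{x,y}p(x)p'(y)\,g_R(x,y)=\sum_{x,y}p(x)p'(y)\,g_{R_{-i}}(x,y)+\sum_{x,y}p(x)p'(y)\,g_{\succ_i}(x,y).\]
The left-hand side is $\ge 0$ because $p$ is maximal for $R$ (test lottery $p'$), and by skew-symmetry and maximality of $p'$ for $R_{-i}$ (test lottery $p$) the first term on the right is $\le 0$. Hence the second term is $\ge 0$, which is exactly $p\succsim_i^\pc p'$. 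Weak \pcone-strategyproofness is the most delicate part and is where I expect the main technical obstacle: one must combine the above argument (now with $p=\ml(R)$ for the truthful report and $q=\ml(R')$ for a misreport by voter $i$) with the tie-breaking convention that a degenerate lottery is returned only if it is uniquely maximal. Specifically, apply the participation-style derivation to the pair $(R,R')$ to obtain $\sum_{x,y}p(x)q(y)g_{\succ_i}(x,y)\ge 0$; then argue that strict \pcone-preference of $q$ over $p$ would, by degeneracy of one of the lotteries and the tie-breaking rule, upgrade this weak inequality to a strict reverse inequality, a contradiction. The case analysis (whether $p$ or $q$ is the degenerate one) is where care is needed, and I would check each subcase directly from the definition of $\succ_i^{\pcone}$.

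For the two negative claims it suffices to give a small profile. To refute \pc-strategyproofness one may use a profile with four alternatives and a majority cycle among three of them in which a voter's truthful report yields a uniform lottery on the cycle whose \pc-comparison with the manipulated outcome can be pushed strictly in the manipulator's favour; I would simply tabulate $g_R$, compute $\ml(R)$ and $\ml(R')$ for a candidate manipulation $R'$, and verify the strict \pc-inequality. To refute strict \pc-participation one exhibits a profile in which $\ml(R)=\ml(R_{-i})$ while some alternative top-ranked by $i$ could be chosen with positive probability by a different lottery that $i$ strictly \pc-prefers to $\ml(R_{-i})$; constructions with a small Condorcet cycle plus a voter whose preferences do not shift the majority margins suffice. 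In both cases I would cite \citet{Bran17a} for known examples rather than constructing new ones from scratch.
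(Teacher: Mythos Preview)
Your arguments for \pc-efficiency, Condorcet-consistency, and \pc-participation are correct and essentially match the standard derivations the paper alludes to (the paper itself only gestures at them via the probabilistic reformulation in \Cref{sec:RD-ML} and otherwise cites \citet{BBS20a}).

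There is, however, a genuine error in your sketch for weak \pcone-strategyproofness. You claim that ``applying the participation-style derivation to the pair $(R,R')$'' yields $\sum_{x,y}p(x)q(y)\,g_{\succ_i}(x,y)\ge 0$. It does not. In the participation setting one has $g_R=g_{R_{-i}}+g_{\succ_i}$, but in the manipulation setting the relevant identity is $g_R-g_{R'}=g_{\succ_i}-g_{\succ'_i}$. Combining maximality of $p$ for $R$ (tested against $q$) with maximality of $q$ for $R'$ (tested against $p$) gives only
\[
\sum_{x,y}p(x)q(y)\,g_{\succ_i}(x,y)\;\ge\;\sum_{x,y}p(x)q(y)\,g_{\succ'_i}(x,y),
\]
not the nonnegativity of the left-hand side alone. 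Indeed, if your claimed inequality held, it would say $p\succsim_i^\pc q$ for every misreport, i.e.\ full \pc-strategyproofness of \ml---which directly contradicts the negative half of the very proposition you are proving. So the degeneracy/tie-breaking case analysis you defer to cannot be a mere bookkeeping step; it must do real work starting from the weaker displayed inequality (or from a different argument altogether). The paper does not supply this argument either but defers to \citet{BBS20a}; you should do the same or actually carry out the case analysis from the correct inequality above.

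For the two negative claims your plan is workable but overcomplicated relative to the paper. For \pc-manipulability the paper exhibits an explicit three-alternative, five-voter example (not four alternatives), computing $\ml(R)$ and $\ml(R')$ directly. For the failure of strict \pc-participation no Condorcet cycle is needed: the paper simply observes that \ml is Condorcet-consistent, so a voter joining an electorate with a Condorcet winner that is not her top choice cannot change the (degenerate) outcome, yet she strictly \pc-prefers some other lottery to it.
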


References for all claims except the one concerning strict \pc-participation are given by \citet{BBS20a}. The failure of strict \pc-participation is straightforward because \ml is Condorcet-consistent and a voter may be unable to change the Condorcet winner by joining the electorate. \citet{BBS20a} show that \ml is \pc-manipulable in most profiles that admit no weak Condorcet winner. This is, for example, the case in the profiles $R$ and $R'$ below, where Voter $4$ can \pc-manipulate by deviating from $R$ to $R'$. 

\begin{profile}
$R$: & $\{1,2\}$: $a,b,c$ & $\{3,4\}$: $b,c,a$ & $5$: $c,a,b$\\
$R'$: & $\{1,2\}$: $a,b,c$ & $3$: $b,c,a$ & $\{4,5\}$: $c,a,b$
\end{profile}

The unique maximal lotteries in $R$ and $R'$, respectively, are $p$ and $q$ with $p(a)=q(c)=\frac{3}{5}$ and $p(b)=p(c)=q(a)=q(b)=\frac{1}{5}$. Since ${\succ_i}={\succ_i'}$ for all $i\in \{1,2,3,5\}$ and $q \succ_4^\pc p$, Voter $4$ can \pc-manipulate by deviating from $R$ to $R'$.

On the other hand, \ml is very efficient. In fact, maximality of lotteries can be seen as an efficiency notion itself. To this end, note that a lottery $p$ is in $\ml(R)$ if and only if there is no lottery $q$ such that a voter that is uniformly drawn from $N$ is more likely to prefer an outcome drawn from $q$ to an outcome drawn from $p$ than vice versa \citep{BrBr17a}. More formally, let $I$ denote a uniformly distributed random variable on the voters, and interpret $p$ and $q$ as independent random variables on the alternatives. Then, $p\in \ml(R)$ if and only if there is no lottery $q$ such that 
\begin{equation}
    \mathbb{P}(q\succ_I p)>\mathbb{P}(p\succ_I q).
\end{equation}

This condition is equivalent to the definition of $\ml$ because 
\[
\mathbb{P}(q\succ_I p)=\sum_{x,y\in A} q(x)p(y)\frac{|\{i\in N\colon x\succ_i y\}|}{|N|}
\]
and thus $\mathbb{P}(q\succ_I p)>\mathbb{P}(p\succ_I q)$ if and only if $\sum_{x,y\in A} q(x)p(y)g_R(x,y)>0$. When comparing this to the definitions of \pc-efficiency and \emph{ex post} efficiency presented in \Cref{eq:po,eq:pc}, it can be seen that the ``only'' difference between maximal lotteries and \pc-efficient lotteries is that for maximal lotteries a voter is uniformly drawn at random while for \pc-efficiency the inequality has to hold for all voters. This immediately implies that \ml satisfies \pc-efficiency.

\section{Results}

We are now ready to present our results. 
The results for \pc-strategyproofness are given in \Cref{subSec:SP} while those for strict \pc-participation are given in \Cref{subSec:Part}. For the sake of readability, we defer all lengthy proofs to the appendix.

\subsection[PC-strategyproofness]{\pc-strategyproofness}\label{subSec:SP}

In this section, we show that every Condorcet-consistent and every anonymous, neutral, and \pc-efficient SDS is \pc-manipulable when there are $m\geq 4$ alternatives. These results show that no SDS simultaneously satisfies \pc-strategyproofness and some of the desirable properties of maximal lotteries. Moreover, since \pc-strategyproofness is weaker than \sd-strategyproofness, the incompatibility of \pc-strategyproofness and Condorcet-consistency is a strengthening of the well-known impossibility of Condorcet-consistent and \sd-strategyproof SDSs.
The second results is somewhat surprising: while anonymity, neutrality, \sd-strategyproofness, and \sd-efficiency characterize the uniform random dictatorship, 
the axioms become incompatible when moving from \sd to \pc. Both impossibilities require $m\geq 4$ alternatives and we show that they turn into possibilities when $m\leq 3$. 

\begin{restatable}{theorem}{CCSP}\label{thm:CCSP}
Every Condorcet-consistent SDS is \pc-manipulable if $|N|\geq 5$ is odd and $m\geq 4$.
\end{restatable}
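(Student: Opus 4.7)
The plan is to argue by contradiction. Suppose $f$ is a Condorcet-consistent and \pc-strategyproof SDS on some odd electorate $N$ with $|N|\geq 5$ and $m\geq 4$ alternatives. The goal is to exhibit a profile $R_0$ without a Condorcet winner such that the constraints imposed on $p := f(R_0)$ by \pc-strategyproofness (together with Condorcet-consistency applied to profiles reached from $R_0$ by single-voter deviations) are mutually unsatisfiable.

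I would focus on the base case $|N|=5$, $m=4$, and obtain the rest by padding: extra alternatives are placed at the bottom of every voter's preferences (preserving the Condorcet structure and leaving the same deviations available), and additional pairs of voters are inserted so as not to perturb the relevant majority margins. For the base case I would take $R_0$ on $\{a,b,c,d\}$ whose majority tournament restricted to $\{a,b,c\}$ is a Condorcet cycle with unit margins, with $d$ placed carefully near the bottom so that single-voter swaps can tip the cycle into a Condorcet winner. For each voter $i$ and each alternative $x\in\{a,b,c\}$, I would identify a minimal deviation by $i$ from $R_0$ that makes $x$ the Condorcet winner in the resulting profile $R_{i,x}$. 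Condorcet-consistency forces $f(R_{i,x})=\delta_x$, and \pc-strategyproofness in both directions yields the two inequalities
\begin{align*}
\sum_{z:\, z\succ_i x} p(z) &\geq \sum_{z:\, x\succ_i z} p(z), \\
\sum_{z:\, x\succ_i' z} p(z) &\geq \sum_{z:\, z\succ_i' x} p(z),
\end{align*}
where $\succ_i$ and $\succ_i'$ are voter $i$'s preferences in $R_0$ and in $R_{i,x}$, respectively. I would then combine these inequalities across all available deviations, together with $\sum_x p(x)=1$ and nonnegativity, to derive a contradiction.

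The main obstacle is choosing $R_0$ so that the collected inequalities are jointly infeasible. The second inequality above is trivial whenever $x$ is top-ranked in $\succ_i'$, so the deviations must be engineered so that the new Condorcet winner lands in the middle of the deviator's deviated preferences -- this is precisely where $m\geq 4$ is needed, because with three alternatives there is no room to place $x$ both above some alternative and below another in $\succ_i'$. Moreover, a fully symmetric Condorcet cycle admits the uniform lottery on $\{a,b,c\}$ as a solution to the resulting constraints, so $R_0$ must be asymmetric enough (using $d$'s placement) to force the asymmetries in the inequalities that rule out a consistent $p$. If the constraints arising from a single base profile turn out to be satisfiable in isolation, the argument can be closed by chaining $R_0$ to an auxiliary cyclic profile $R_0'$ through a single-voter change and propagating constraints on $f(R_0)$ and $f(R_0')$ via \pc-strategyproofness until the combined system becomes unsatisfiable.
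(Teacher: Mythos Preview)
Your plan matches the paper's proof: reduce to $|N|=5$ and $m=4$ by padding with inverse-preference pairs and bottom-ranked alternatives, take a cyclic base profile, use single-voter deviations to neighbouring profiles with Condorcet winners, and extract linear constraints on $f$ at the base profile via \pc-strategyproofness in both directions (exactly your two displayed inequalities).

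One clarification is worth making. The single-profile system is never infeasible in this setting; rather, the constraints pin the lottery down \emph{uniquely}, to the uniform distribution on three of the four alternatives. Hence your ``fallback'' is not a fallback but the crux. The paper constructs two base profiles $R^1$ and $R^5$ that differ only in one voter's preference, shows that the constraints force $f(R^1)$ to be uniform on $\{a,b,d\}$ and $f(R^5)$ to be uniform on $\{a,b,c\}$, and then observes that the connecting voter \sd-prefers (hence \pc-prefers) $f(R^1)$ to $f(R^5)$, giving the manipulation. A minor point about your intuition for the construction: $d$ is not a near-bottom dummy used only to create room in the deviator's ranking; it participates actively in the majority cycle at $R^1$ and receives probability $\tfrac{1}{3}$ there, while $c$ is the alternative forced to zero.
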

\begin{proof}
		Assume for contradiction that there is a Condorcet-consistent and \pc-strategyproof SDS $f$ for $m\geq 4$ alternatives. Subsequently, we focus on the electorate $N=\{1,\dots, 5\}$ because we can generalize the result to any larger electorate with an odd number of voters by adding pairs of voters with inverse preferences. These voters do not change the Condorcet winner and hence will not affect our analysis.
		
		As the first step, consider the profiles $R^1$ to $R^4$. The $*$ symbol is a placeholder for all missing alternatives.

	    \begin{profile}
	          $R^{1}$: & $1$: $a,b,d,c,*$ & $2$: $d, b, a, c, *$ & $3$: $a,d,c,b,*$
	          & $4$: $*,c,d,b,a$ & $5$: $c,b,a,d,*$\\
			  $R^{2}$: & $1$: $a,b,d,c,*$ & $2$: $d, b, a, c, *$ & \color{black}{$3$: $a,b,c,d,*$}
			  & $4$: $*,c,d,b,a$ & $5$: $c,b,a,d,*$\\
			  $R^{3}$: & $1$: $a,b,d,c,*$ & $2$: $d, b, a, c, *$ & $3$: $a,d,c,b,*$
			  & \color{black}{$4$: $*,c,d,a,b$} & $5$: $c,b,a,d,*$\\
			  $R^{4}$: & $1$: $a,b,d,c,*$ & $2$: $d, b, a, c, *$ & \color{black}{$3$: $d,a,c,b,*$}
			  & $4$: $*,c,d,b,a$ & $5$: $c,b,a,d,*$
	    \end{profile}
	    
		Note that $b$ is the Condorcet winner in $R^2$, $a$ in $R^3$, and $d$ in $R^4$. Thus, Condorcet-consistency entails that $f(R^2,b)=f(R^3,a)=f(R^4,d)=1$. In contrast, there is no Condorcet winner in $R^1$ and we use \pc-strategyproofness to derive $f(R^1)$. In more detail, since $f(R^2)$, $f(R^3)$, and $f(R^4)$ are degenerate, it suffices to use weak \pcone-strategyproofness. For instance, this axiom postulates that 
		$\sum_{x\in A\colon x\succ_3^2 b} f(R^1,x) \leq \sum_{x\in A\colon b\succ_3^2 x} f(R^1,x)$, as otherwise Voter $3$ can \pc-manipulate by deviating from $R^2$ to $R^1$. Equivalently, this means that
		\begin{align}\label{eq:1}
			&f(R^1,a)\leq f(R^1,A\setminus \{a,b\}).
		\end{align}
		Analogously, weak \pcone-strategyproofness between $R^3$ and $R^1$ and between $R^1$ and $R^4$ entails the following inequalities because Voter $4$ in $R^3$ needs to \pcone-prefer $f(R^3)$ to $f(R^1)$ and Voter~$3$ in $R^1$ needs to \pcone-prefer $f(R^1)$ to $f(R^4)$.
		\begin{align}
			&f(R^1,A\setminus\{a,b\}) \leq f(R^1,b)\label{eq:2}\\
			&f(R^1,A\setminus\{a,d\}) \leq f(R^1,a)\label{eq:3}
		\end{align}
		
	Chaining the inequalities together, we get
	$f(R^1,A\setminus\{a,d\}) \leq f(R^1,a) \leq f(R^1,A\setminus \{a,b\}) \leq f(R^1,b)$, so $f(R^1, A\setminus\{a,b,d\})=0$. Simplifying (\ref{eq:1}),  (\ref{eq:2}), and (\ref{eq:3}) then results in  $f(R^1,a)\leq f(R^1,d)\leq f(R^1,b)\leq f(R^1,a)$, so $f(R^1,a)=f(R^1,b)=f(R^1,d)=\frac{1}{3}$.
	
	Next, we analyze the profiles $R^5$ to $R^8$.
	
    \begin{profile}
          $R^{5}$: & $1$: $a,b,d,c,*$ & $2$: $b, d, a, c, *$ & $3$: $a,d,c,b,*$ 
          & $4$: $*,c,d,b,a$ & $5$: $c,b,a,d,*$\\

		  $R^{6}$: & $1$: $a,b,d,c,*$ & $2$: $b, d, a, c, *$ & $3$: $a,d,c,b,*$ 
		  & $4$: $*,c,d,b,a$ & \color{black}{$5$: $b,c,a,d,*$}\\
		  $R^{7}$: & $1$: $a,b,d,c,*$ & $2$: $b, d, a, c, *$ & $3$: $a,d,c,b,*$ 
		  & $4$: $*,c,d,a,b$ & $5$: $c,b,a,d,*$\\

		  $R^{8}$: & $1$: $a,b,d,c,*$ & $2$: $b, c, d, a, *$ & $3$: $a,d,c,b,*$ 
		  & $4$: $*,c,d,b,a$ & $5$: $c,b,a,d,*$
    \end{profile}
	Just as for the profiles $R^1$ to $R^4$, there is no Condorcet winner in $R^5$, whereas $b$ is the Condorcet winner in $R^6$, $a$ in $R^7$, and $c$ in $R^8$. Consequently, Condorcet-consistency requires that $f(R^6,b)=f(R^7,a)=f(R^8,c)=1$. Next, we use again weak \pcone-strategyproofness to derive $f(R^5)$. In particular, we infer the following inequalities as Voter $5$ in $R^5$ needs to \pcone-prefer $f(R^5)$ to $f(R^6)$, Voter $4$ in $R^7$ needs to \pcone-prefer $f(R^7)$ to $f(R^5)$, and Voter $2$ in $R^8$ needs to \pcone-prefer $f(R^8)$ to $f(R^5)$. 
	\begin{align}
		&f(R^5,A\setminus \{b,c\})\leq f(R^5,c)\\
		&f(R^5,A\setminus \{a,b\})\leq f(R^5,b)\\
		&f(R^5, b) \leq f(R^5,A\setminus \{b,c\})
	\end{align}
	
    Analogous computations as for $R^1$ now show that $f(R^5,a)=f(R^5,b)=f(R^5,c)=\frac{1}{3}$.
    Finally, note that $R^1$ and $R^5$ only differ in the preference of Voter $2$. This means that Voter $2$ can \pc-manipulate by deviating from $R^5$ to $R^1$ since he even \sd-prefers $f(R^1)$ to $f(R^5)$. Hence, $f$ fails \pc-strategyproofness, which contradicts our assumptions. 
\end{proof}

Before proving the incompatibility of \pc-efficiency and \pc-strategyproofness, we first state an auxiliary claim which establishes that the absolute winner property, \pc-efficiency, and \pc-strategyproofness are incompatible. The involved proof of this lemma is deferred to \Cref{subsec:EffSP}.

\begin{restatable}{lemma}{AWPSP}\label{lem:AWPSP}
Every \pc-efficient SDS that satisfies the absolute winner property is \pc-manipulable if $|N|\geq 3$, $|N|\not\in\{4,6\}$, and $m\geq 4$. 
\end{restatable}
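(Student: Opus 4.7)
The plan is to follow the blueprint of the proof of \Cref{thm:CCSP}: design a short list of preference profiles, pin down $f$ on most of them via the decisiveness axiom (here, the absolute winner property rather than Condorcet-consistency), extract inequalities on the remaining ``central'' profiles from weak \pcone-strategyproofness, and finally exhibit a single-voter deviation between two central profiles that produces a \pc-improvement. The new ingredient is \pc-efficiency, which must compensate for the fact that the absolute winner property is strictly weaker than Condorcet-consistency: on a central profile $R^*$, \pc-efficiency will be invoked to pin down $f(R^*)$ beyond what the strategyproofness chain alone provides.

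Concretely, I would first handle $m=4$; larger $m$ reduces to this case by appending alternatives that are bottom-ranked by every voter, since these are Pareto-dominated and hence receive probability $0$ under any \pc-efficient SDS, and they are invisible to the absolute winner property. Within $m=4$, I would start from a small odd electorate and craft a central profile $R^*$ admitting no absolute winner but with several single-voter neighbours that do. The absolute winner property fixes $f$ to a degenerate lottery on every neighbour, and weak \pcone-strategyproofness between each neighbour and $R^*$ yields an inequality $f(R^*,L_x^i)\leq f(R^*,U_x^i)$, where $U_x^i$ and $L_x^i$ are the alternatives ranked above and below the absolute winner $x$ by the deviating voter $i$. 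Chaining the inequalities collapses $f(R^*)$ into a low-dimensional face of $\Delta(A)$; inside this face, \pc-efficiency is used to rule out all surviving candidates except one specific lottery, and comparing that lottery with the analogously determined $f(\tilde R^*)$ for a sibling central profile differing by a single voter yields the desired strict \pc-improvement, contradicting \pc-strategyproofness.

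To lift this base case to all admissible electorate sizes $|N|\geq 3$ with $|N|\notin\{4,6\}$, I would not use inverse voter pairs: inverting a voter's preferences flips that voter's \pc-preference, so the candidate dominating lottery on $R^*$ may cease to be \pc-dominating in the enlarged profile. Instead, I would add blocks of voters whose preferences are copies of those already appearing in the base construction, choosing the block so that no alternative crosses the absolute-winner threshold $\lfloor|N|/2\rfloor+1$ on $R^*$ while the threshold is still crossed on each relevant neighbour after a single voter's switch. A direct parity and counting check explains the exclusion of $|N|\in\{4,6\}$: at those sizes, the threshold makes it impossible to keep $R^*$ free of an absolute winner while simultaneously ensuring that every required neighbour has one.

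The hardest step is extracting enough mileage from \pc-efficiency on $R^*$, since, as recalled in \Cref{sec:efficiency}, the \pc-efficient set is non-convex and may even be curved, so there is no clean polyhedral description to combine with the strategyproofness inequalities. I would address this by testing \pc-dominance of $f(R^*)$ only against an explicit finite family of candidate dominating lotteries whose supports are dictated by the profiles already appearing in the chain, reducing the \pc-efficiency condition on $R^*$ to a small system of linear inequalities that can be combined with the strategyproofness inequalities in the same book-keeping style as in the proof of \Cref{thm:CCSP}.
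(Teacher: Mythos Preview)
Your overall architecture matches the paper's: two ``central'' profiles $R$ and $R'$ differing in a single voter, $f$ pinned down on each to a specific lottery via a combination of the absolute winner property on neighbours, \pc-strategyproofness, and \pc-efficiency, and then a \pc-manipulation between $R$ and $R'$. Your observation that inverse voter pairs cannot be used here (because they would disturb the \pc-efficiency arguments) is correct and important. One structural difference: the paper does not prove a base case and lift; it parameterises the profiles directly by $n$ (voter groups of the form $[1\dots\frac{n-1}{2}]$, etc.) and runs the argument uniformly for all odd $n\geq 3$, with a separate and considerably more involved construction for even $n\geq 8$.

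Where your plan is genuinely thin is the \pc-efficiency step, and the order in which it interacts with the strategyproofness chain. In the paper, \pc-efficiency is invoked \emph{before} the chain, to kill one coordinate (e.g., $f(R,d)=0$), but this only works once a preliminary contradiction argument has established that some other coordinate is strictly positive (e.g., $f(R,c)>0$); the dominating lotteries are not drawn from a finite test set but from an explicit one-parameter $\epsilon$-perturbation (\Cref{lem:pcpref}) engineered so that all voters but one are exactly \pc-indifferent, and that construction needs the positivity hypothesis. After the support is reduced to three alternatives, the neighbour inequalities you describe yield only $f(R,c)\geq f(R,b)\geq f(R,a)$; closing the cycle with $f(R,a)\geq f(R,c)$ requires a further sequence of intermediate profiles $\bar R^i$ together with a separate lemma (\Cref{lem:3alts}) that propagates a strict coordinate inequality through a chain of single-voter deviations when the support stays inside a fixed triple. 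Your proposal does not anticipate either device, and without them the chaining you sketch leaves a one-dimensional family of candidates for $f(R^*)$ rather than a unique lottery.
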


Note that \Cref{lem:AWPSP} is a rather strong impossibility itself and, in particular, does not require anonymity or neutrality. Based on this lemma, we now show that every anonymous, neutral, and \pc-efficient SDS is \pc-manipulable. It is sufficient to show that the given axioms imply the absolute winner property since the result then follows from \Cref{lem:AWPSP}.

\begin{restatable}{theorem}{EffSP}\label{thm:EffSP}
Every anonymous and neutral SDS that satisfies \pc-efficiency is \pc-manipulable if $|N|\geq 3$, $|N|\not\in\{4,6\}$, and $m\geq 4$.
\end{restatable}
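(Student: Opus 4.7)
The strategy is to reduce to Lemma~\ref{lem:AWPSP}. That lemma already derives a contradiction from \pc-efficiency, the absolute winner property, and \pc-strategyproofness (with no appeal to fairness) on exactly the stated electorates, so it suffices to show that any anonymous, neutral, \pc-efficient, and \pc-strategyproof SDS $f$ satisfies the absolute winner property. The goal therefore becomes: given a profile $R$ with absolute winner $x$, prove $f(R,x) = 1$.

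For this, I would introduce a symmetric companion profile $R^\star$ (of a size in the allowed range) in which $x$ is again an absolute winner but the voters not top-ranking $x$ are arranged in cyclic triples with preferences $a,b,c,x,\ast$, $b,c,a,x,\ast$, $c,a,b,x,\ast$, where $a,b,c$ are three distinguished non-$x$ alternatives and $\ast$ is a fixed tail of the remaining $m-4$ alternatives. Anonymity together with neutrality under the $3$-cycle on $\{a,b,c\}$ forces $f(R^\star,a) = f(R^\star,b) = f(R^\star,c)$, and \emph{ex post} efficiency (implied by \pc-efficiency) rules out probability on the Pareto-dominated tail, so $f(R^\star) = \alpha\delta_x + \gamma(\delta_a + \delta_b + \delta_c)$ with $\alpha + 3\gamma = 1$. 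The remaining task is to show $\gamma = 0$, which I would establish by \pc-strategyproofness: consider the profile obtained from $R^\star$ by having one cyclic voter swap his ranking of $x$ against an adjacent alternative, and iterate such swaps to walk from $R^\star$ to a unanimously $x$-topped profile where \emph{ex post} efficiency forces $\delta_x$. Propagating the resulting bilinear \pc-strategyproofness inequalities backwards through the chain pins down $\alpha = 1$. A final analogous \pc-strategyproofness chain from $R^\star$ to $R$ transfers the conclusion, giving $f(R,x) = 1$.

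The main obstacle is the $\gamma = 0$ step. A Steinhaus--Trybula-type phenomenon makes symmetric candidate dominators of $f(R^\star)$ mutually \pc-incomparable with $f(R^\star)$, so \pc-efficiency alone cannot pin down $\alpha$; \pc-strategyproofness must be used in combination with it. Since each single-voter swap yields only one bilinear inequality for $f$ at the new profile, the chain of deviations has to be arranged carefully so that the inequalities compose to force degeneracy, and each intermediate profile must retain $x$ as an absolute winner to stay within the regime where the argument is meaningful. The parity restrictions $|N|\notin\{4,6\}$ are inherited from Lemma~\ref{lem:AWPSP} and ensure the intermediate sizes we visit stay in range. Once $f(R,x) = 1$ is established for every absolute-winner profile $R$, Lemma~\ref{lem:AWPSP} yields the desired contradiction.
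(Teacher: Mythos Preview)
Your high-level reduction to Lemma~\ref{lem:AWPSP} matches the paper exactly: both argue that anonymity, neutrality, \pc-efficiency, and \pc-strategyproofness together imply the absolute winner property, and then invoke Lemma~\ref{lem:AWPSP}. The gap is in how you construct the companion profile and obtain $\gamma=0$.

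In your $R^\star$ the non-$x$-top voters rank $x$ fourth (preferences $a,b,c,x,\ast$, etc.). This means every such voter \emph{strictly} \pc-prefers the mixed lottery $\alpha\delta_x+\gamma(\delta_a+\delta_b+\delta_c)$ to $\delta_x$ whenever $\gamma>0$, so \pc-efficiency yields nothing---as you note. Your remedy is a \pc-strategyproofness chain from $R^\star$ to a unanimously $x$-topped profile and back, but this is where the argument breaks down. Each swap that lifts $x$ in one cyclic voter's ranking destroys the $3$-cycle symmetry, so you no longer know $f$ at the intermediate profiles; each step gives one bilinear inequality relating two \emph{unknown} lotteries, and there is no mechanism that forces these to compose into $\gamma=0$ at $R^\star$. (Concretely: a voter with truthful ranking $a,b,c,x,\ast$ who deviates to $a,b,x,c,\ast$ gets an inequality that is perfectly compatible with $\gamma>0$ at both endpoints.) You also implicitly need the number of cyclic voters divisible by~$3$, which is not addressed.

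The paper avoids all of this by choosing the companion profile differently: it uses only a \emph{two}-fold symmetry between $b$ and $c$, and crucially places the absolute winner $a$ \emph{second} in every non-$a$-top voter's ranking (voters report $b,a,c,\ast$ or $c,a,b,\ast$). Then any lottery with $p(b)=p(c)>0$ is \pc-dominated by $\delta_a$---the $b$- and $c$-top voters are \pc-indifferent because $a$ sits between their top and bottom among $\{a,b,c\}$, while the $a$-top voters strictly prefer $\delta_a$. Thus anonymity, neutrality, and \pc-efficiency alone give $f(R^1,a)=1$ in one stroke; no chain is needed for this step. Only afterwards does the paper use \pc-strategyproofness chains, but now each step starts from a known degenerate lottery $\delta_a$, and the one-sided inequalities (``if $f(R,a)=1$ and a voter who top-ranks $a$ deviates, $a$ must still get probability~$1$'') propagate cleanly to arbitrary absolute-winner profiles. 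Your construction lacks this anchoring step, and without it the chain argument does not close.
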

\begin{proof}
We prove the claim for even $|N|$; the argument for odd $|N|$ is much more involved and deferred to the appendix. Our goal is to show that every SDS that satisfies \pc-efficiency, \pc-strategyproofness, anonymity, and neutrality for an electorate with an even number of voters $|N|\geq 8$ also satisfies the absolute winner property. Then, \Cref{lem:AWPSP} shows that no such SDS exists. To this end, suppose that there is an SDS $f$ that satisfies all given axioms and consider the following profile $R^1$, where the $*$ symbol indicates that all missing alternatives are added in an arbitrary fixed order.

\begin{profile}
       $R^1$: & $1$: $a,b,c,*$ & $2$: $a,c,b,*$
	    & $\set{3}{\frac{n}{2}+1}$: $b,a,c,*$ & $\set{\frac{n}{2}+2}{n}$: $c,a,b,*$
\end{profile}

First, all alternatives except $a$, $b$, and $c$ are Pareto-dominated and thus \pc-efficiency requires that $f(R^1,x)=0$ for $x\not\in \{a,b,c\}$. Moreover, $b$ and $c$ are symmetric in $R^1$ and anonymity and neutrality therefore imply that $f(R^1,b)=f(R^1,c)$. Finally, note that every lottery $p$ with $p(b)=p(c)>0$ is \pc-dominated by the lottery $q$ with $q(a)=1$. Hence, it follows from \pc-efficiency, anonymity, and neutrality that $f(R^1,a)=1$. 

Next, consider the profile $R^2$, in which the voters in $[3\dots \frac{n}{2}+1]$ report $a,c,b$ instead of $b,a,c$, and Voter $1$ reports $a,c,b$ instead of $a,b,c$. 

\begin{profile}
       $R^2$: & $1$: $a,c,b,*$ & $2$: $a,c,b,*$
	    & $\set{3}{\frac{n}{2}+1}$: $a,c,b,*$ & $\set{\frac{n}{2}+2}{n}$: $c,a,b,*$
\end{profile}

A repeated application of \pc-strategyproofness shows that $a$ must still be chosen with probability $1$ in $R^2$ because $a$ is the favorite alternative of the deviator after every step. In more detail, if $a$ is assigned probability $1$ in profile $R$ and a voter deviates to a profile $R'$ by top-ranking $a$, $a$ must still have probability $1$ because otherwise, the voter can manipulate in $R'$ by going back to $R$. 

Finally, observe that $c$ Pareto-dominates all alternatives but $a$ in $R^2$. Using this fact, we go to the profile $R^3$ by letting the voters in $\set{\frac{n}{2}+2}{n}$ one after another change their preference relation to $c,b,*,a$.

\begin{profile}
       $R^3$: & $1$: $a,c,b,*$ & $2$: $a,c,b,*$
	    & $\set{3}{\frac{n}{2}+1}$: $a,c,b,*$ & $\set{\frac{n}{2}+2}{n}$: $c,b,*,a$
\end{profile}

We claim that $f(R^3,a)=1$. Indeed, \pc-efficiency shows for $R^3$ and all intermediate profiles that only $a$ and $c$ can have positive probability as all other alternatives are Pareto-dominated. Moreover, \pc-strategyproofness shows for every step that, if $a$ is originally chosen with probability $1$, then $c$ must have probability $0$ after the manipulation because every manipulator prefers $c$ to $a$ and no other alternative gets any positive probability. Hence, we infer that $f(R^3,a)=1$.

Finally, note that the voters who top-rank $a$ can now reorder the alternatives in $A\setminus \{a\}$ arbitrarily and the voters who bottom-rank $a$ can even reorder all alternatives without affecting the outcome.  In more detail, if $a$ is chosen with probability $1$ and a voter top-ranks $a$ after the manipulation, \pc-strategyproofness requires that $a$ still is assigned probability $1$ because the voter can otherwise manipulate by switching back to his original preference relation. Similarly, if a voter bottom-ranks alternative $a$ and $a$ is assigned probability $1$, he cannot affect the outcome by deviating because any other outcome induces a \pc-manipulation. Hence, it follows that $f(R,a)=1$ for all profiles in which the voters in $[1\dots\frac{n}{2}+1]$ top-rank $a$. Since anonymity allows us to rename the voters and neutrality to exchange the alternatives, this means that $f$ satisfies the absolute winner property.
\end{proof}

Since both \Cref{thm:CCSP,thm:EffSP} require $m\geq 4$ alternatives, there is still hope for a possibility when $m\leq 3$. Indeed, for $m=2$, 
\ml satisfies Condorcet-consistency, \pc-efficiency, \pc-strategyproofness, anonymity, and neutrality. However, as shown in \Cref{sec:RD-ML}, \ml fails \pc-strategyproofness when $m= 3$.
We therefore construct another SDS that satisfies all given axioms. To this end, let $\textit{CW}(R)$ be the set of Condorcet winners in $R$, and $\textit{WCW}(R)=\{x\in A\colon g_R(x,y)\geq 0 \text{ for all }y\in A\setminus \{x\}\}$ the set of weak Condorcet winners if $\textit{CW}(R)=\emptyset$, and $\textit{WCW}(R)=\emptyset$ otherwise. Then, define the SDS $f^1$ as follows.
\[f^1(R)=\begin{cases}
[x:1] &\text{if } \textit{CW}(R)=\{x\}\\
[x:\frac{1}{2}; y:\frac{1}{2}]&\text{if } \textit{WCW}(R)=\{x,y\}\\
[x:\frac{3}{5}; y:\frac{1}{5}; z:\frac{1}{5}] &\text{if } \textit{WCW}(R)=\{x\}\\
[x:\frac{1}{3}; y:\frac{1}{3}; z:\frac{1}{3}] &\text{otherwise}
\end{cases}
\]

Note that, in the absence of majority ties, $f^1$ boils down to the rather natural SDS that selects the Condorcet winner with probability~1 and returns the uniform lottery otherwise. This SDS was already proposed by \citet{Pott70a} to achieve strategyproofness in the case of three alternatives. As we show, $f^1$ extends this SDS to profiles with majority ties while preserving a number of desirable properties. In particular, $f^1$ is the \emph{only} SDS for $m=3$ alternatives that satisfies cancellation and the axioms of \Cref{thm:EffSP}. We defer the proof of this claim to \Cref{subsec:propositions}. Moreover, $f^1$ is clearly Condorcet-consistent. 

\begin{restatable}{proposition}{possp}\label{prop:m3sp}
For $m=3$, $f^1$ is the only anonymous and neutral SDS that satisfies \pc-efficiency, \pc-strategyproofness, and cancellation. 
\end{restatable}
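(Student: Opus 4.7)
The plan is to fix an arbitrary SDS $f$ satisfying the five axioms and check $f(R)=f^1(R)$ by case analysis on the structure of $\textit{CW}(R)$ and $\textit{WCW}(R)$. Three tools drive the argument in every case: profile automorphisms force symmetries of $f(R)$ via anonymity and neutrality; \pc-strategyproofness propagates the value $f(R_0)$ to any profile $R$ that differs from $R_0$ in a single voter's preference, by turning a would-be \pc-improvement into a manipulation; and \pc-efficiency excludes any $f(R)$ lying in an explicit dominated region of $\Delta(A)$. Cancellation is used throughout to normalise profiles so that the automorphisms required by the first tool become visible.

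The Condorcet-winner case is handled first by adapting the argument from the proof of \Cref{thm:EffSP} to $m=3$. In a profile $R^\ast$ in which at least half of the voters top-rank $x$ and the rest split symmetrically between the two orderings that bottom-rank $x$, anonymity and neutrality give $f(R^\ast,y)=f(R^\ast,z)$, and then \pc-efficiency forces $f(R^\ast,x)=1$ because $[x\!:\!1]$ \pc-dominates every lottery with $p(y)=p(z)>0$ in $R^\ast$. Iterated single-voter \pc-strategyproofness extends this conclusion to every absolute-winner profile, and a final step combining two single-voter deviations (which force $f(R,y)=f(R,z)$ via strategyproofness) with the same \pc-efficiency argument (which kills any positive probability on $\{y,z\}$) promotes it to every profile with unique Condorcet winner $x$.

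If there is no weak Condorcet winner, neutrality applied to a cyclically symmetric Condorcet cycle forces the uniform lottery, and single-voter \pc-strategyproofness combined with \pc-efficiency extends this to arbitrary cycles. When $\textit{WCW}(R)=\{x,y\}$, exposing the $x$-$y$ symmetry via cancellation, together with a \pc-efficiency argument that kills any probability on the majority-dominated alternative $z$, yields $f(R)=[x\!:\!\tfrac12;y\!:\!\tfrac12]$.

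The main obstacle is the case $\textit{WCW}(R)=\{x\}$, where the fractions $3/5,1/5,1/5$ are not implied by any profile automorphism and must be pinned down numerically. My plan is to identify two single-voter deviations from such an $R$ into profiles already handled by the Condorcet-winner and two-weak-winner cases; \pc-strategyproofness translates these deviations into linear inequalities on $(f(R,x),f(R,y),f(R,z))$, and combining them with the \pc-efficiency constraint (which rules out every lottery \pc-dominated by $[x\!:\!3/5;y\!:\!1/5;z\!:\!1/5]$ in $R$) should admit only the solution $(3/5,1/5,1/5)$. Verifying that the two required neighbours can always be constructed, and that the resulting system is tight, is the delicate final step.
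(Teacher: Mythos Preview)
Your high-level decomposition by $\textit{CW}/\textit{WCW}$ structure matches the paper's, and the first three cases are roughly on the right track (though your treatment of the Condorcet-winner case is vaguer than the paper's, which simply observes that cancellation reduces every such profile to an absolute-winner profile, bypassing the ``two single-voter deviations forcing $f(R,y)=f(R,z)$'' that you never quite specify).  Two genuine gaps remain.

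First, you omit half of the proposition: you argue uniqueness but never verify that $f^1$ itself satisfies the five axioms.  The non-routine part is \pc-strategyproofness of $f^1$, which the paper establishes by an explicit case analysis over all transitions between the four branches of the definition; without this, the characterisation is incomplete.

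Second, your plan for the single-weak-Condorcet-winner case does not work as stated.  After cancellation the minimal profile has $n_1$ voters with $a,b,c$, $n_3$ with $b,c,a$, $n_5$ with $c,a,b$ and $n_1=n_3+n_5$.  A single-voter deviation (say a $c,a,b$-voter switching to $c,b,a$, which then cancels against an $a,b,c$-voter) lands you in the profile with parameters $(n_1-1,n_3,n_5-1)$, which is \emph{still} a single-WCW profile unless $n_5=1$.  So you cannot in general reach a $\textit{CW}$ or $2$-$\textit{WCW}$ profile in one step; the paper handles this by an induction on $n_3+n_5$, with the base case $n_3=n_5=1$ giving the only instance where both neighbouring deviations hit $2$-$\textit{WCW}$ profiles.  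Moreover, your proposed third constraint---that \pc-efficiency excludes all lotteries \pc-dominated by $[x\!:\!\tfrac35;y\!:\!\tfrac15;z\!:\!\tfrac15]$---is empty in these profiles: the voter who bottom-ranks $x$ strictly \pc-prefers, for instance, $[x\!:\!\tfrac12;y\!:\!\tfrac14;z\!:\!\tfrac14]$ to $[x\!:\!\tfrac35;y\!:\!\tfrac15;z\!:\!\tfrac15]$, so the latter \pc-dominates nothing useful.  The paper instead obtains the missing inequality $p(c)\ge p(b)$ from \pc-strategyproofness against a deviation that makes $x$ an absolute winner, and together with the two $2$-$\textit{WCW}$ inequalities (propagated through the induction) this pins down $(3/5,1/5,1/5)$ uniquely.
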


\begin{remark}
All axioms are required for \Cref{thm:CCSP} and all axioms with the possible exception of neutrality are required for \Cref{thm:EffSP}. \ml only fails \pc-strategyproofness, dictatorships only fail anonymity and Condorcet-consistency, and the uniform random dictatorship only fails \pc-efficiency and Condorcet-consistency. 
The number of alternatives required for \Cref{thm:CCSP,thm:EffSP} is tight as shown by $f^1$.
We conjecture that \Cref{thm:EffSP} holds even without neutrality. 
\end{remark}

\begin{remark}
It is open whether \Cref{thm:CCSP} also holds for even $|N|$. However, when additionally requiring the mild condition of homogeneity (which requires that splitting each voter into $k$ clones with the same preferences does not affect the outcome), the statement holds also for even $|N|\geq 10$. This can be shown by duplicating all voters in the proof and adding some additional profiles in the derivation from $R^1$ to $R^5$.
\end{remark}

\begin{remark}
For the well-known class of pairwise SDSs, which only depend on the majority margins to compute the outcomes, \pc-strategyproofness, unanimity, and homogeneity imply Condorcet-consistency. This follows by carefully inspecting the proof of Lemma 12 by \citet{BrLe21a}. Hence, \Cref{thm:CCSP} implies that no pairwise SDS satisfies unanimity, \pc-strategyproofness, and homogeneity.
\end{remark}

\begin{remark}
Remarkably, the proof of \Cref{thm:CCSP} never uses the full power of \pc-strategyproofness. Instead, every step either uses weak \pcone-strategyproofness or weak \sd-strategyproofness and thus, our proof shows actually a stronger but more technical result where \pc-strategyproofness is replaced with weak \sd-strategyproofness and weak \pcone-strategyproofness.
Interestingly, the Condorcet rule, which chooses the Condorcet winner if it exists and randomizes uniformly over all alternatives otherwise, is Condorcet-consistent and weakly \sd-strategyproof, and \ml is Condorcet-consistent and weakly \pcone-strategyproof. 
\end{remark}

\subsection[Strict PC-participation]{Strict \pc-participation}\label{subSec:Part}

In this section, we show that strict \pc-participation is incompatible with \pc-efficiency. 

\begin{theorem}
\label{thm:strictpart}
No anonymous and neutral SDS satisfies both \pc-efficiency and strict \pc-participation if $m\geq 4$.
\end{theorem}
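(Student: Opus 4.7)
My plan is a two-step reduction. First, I would show that any anonymous, neutral, \pc-efficient SDS satisfying strict \pc-participation must satisfy the absolute winner property. Second, I would derive a contradiction by showing that the absolute winner property is directly incompatible with strict \pc-participation whenever $|N|\geq 2$.

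The second step is the cleaner of the two. Let $R$ be a unanimous profile of $n\geq 2$ voters all with the preference $a, b, c, \ldots$. Then \pc-efficiency forces $f(R)=[a:1]$ (all other alternatives are Pareto-dominated by $a$). Now add a voter $i$ with top alternative $b$, forming $R' = R \cup \{i\}$. Since $n \geq 2$, alternative $a$ remains the absolute winner in $R'$ (it still receives $n > (n+1)/2$ top-rankings), so the absolute winner property forces $f(R')=[a:1]$. But strict \pc-participation requires $f(R')\succ_i^\pc f(R) = [a:1]$, since $[b:1]\succ_i^\pc [a:1]$ witnesses a strictly \pc-preferred lottery for voter $i$. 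As $f(R') = [a:1]$ is not strictly preferred by $i$ to itself, this is a contradiction.

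The first step---deriving the absolute winner property---is harder and would follow the structure of \Cref{thm:EffSP}. I would begin with the profile $R^1$ from that proof (one voter with $a,b,c,*$, one with $a,c,b,*$, plus equal-sized groups of $b,a,c,*$ and $c,a,b,*$ voters), where \pc-efficiency together with the $b\leftrightarrow c$ symmetry from anonymity and neutrality forces $f(R^1)=[a:1]$ without using any incentive axiom. I would then construct a sequence of voter removals and additions (replacing the preference-change modifications used in \Cref{thm:EffSP}) and use strict \pc-participation together with \pc-efficiency and symmetry at each step to propagate the constraint $f=[a:1]$ across a chain of profiles, eventually reaching every profile in which $a$ is the absolute winner. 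The key observation is that whenever $f(R)=[a:1]$ and a voter $i$ with top $y\neq a$ is removed, strict \pc-participation requires $[a:1]\succsim_i^\pc f(R_{-i})$ with strictness whenever some lottery is strictly \pc-preferred by $i$ to $f(R_{-i})$; a direct computation rules out $f(R_{-i})=[z:1]$ for any $z$ that $i$ prefers to $a$, and together with \pc-efficiency this often pins $f(R_{-i})$ down.

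The main obstacle is that strict \pc-participation gives only a one-sided constraint linking $f(R)$ and $f(R_{-i})$, in contrast to the direct two-profile comparison provided by \pc-strategyproofness. Intermediate outcomes $f(R_{-i})$ must be pinned down (or at least narrowly constrained) through additional symmetry or efficiency arguments, which likely requires a longer chain of profiles than in \Cref{thm:EffSP} and careful case analysis based on electorate parity. The $m\geq 4$ hypothesis enters through the use of the profile $R^1$, which needs at least four alternatives to support the $b\leftrightarrow c$ symmetry while keeping the remaining alternatives as Pareto-dominated placeholders.
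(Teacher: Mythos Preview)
Your second step is correct: the absolute winner property is indeed incompatible with strict \pc-participation via the simple two-profile argument you give. However, your first step---deriving the absolute winner property from the other axioms---is not a proof but a plan, and it has a genuine gap. You correctly identify that strict \pc-participation gives only a one-sided constraint, but you do not show how to overcome this; the sketch of ``propagating $f=[a:1]$ via chains of voter additions and removals'' remains unsubstantiated. More seriously, your claim that $m\geq 4$ enters through the profile $R^1$ is wrong: the $b\leftrightarrow c$ symmetry argument forcing $f(R^1)=[a:1]$ works already for $m=3$ (the extra alternatives are Pareto-dominated filler). Since the theorem is \emph{false} for $m=3$ (\Cref{prop:example-m-3}: the SDS $f^2$ satisfies all four axioms yet fails the absolute winner property), your step~1 must genuinely require $m\geq 4$ at some point you have not located. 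Without that, your outline cannot distinguish the true case $m\geq 4$ from the false case $m=3$.

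The paper's proof is far more direct and bypasses the absolute winner property entirely. It builds a profile $R^1$ on ten voters in which $a$ Pareto-dominates $d$ and $b,c$ are symmetric, so \pc-efficiency with anonymity and neutrality force $f(R^1)=[a:1]$. Adding an eleventh voter with preference $d,a,b,c$ gives $R^2$, and strict \pc-participation forces $f(R^2,d)>f(R^2,b)+f(R^2,c)$. Adding a twelfth voter with $d,a,c,b$ gives $R^3$, in which $b,c,d$ are now \emph{three-way} symmetric; the same efficiency-plus-symmetry argument again forces $f(R^3)=[a:1]$. But voter~12's participation constraint from $R^3$ back to $R^2$ then requires $f(R^2,b)+f(R^2,c)\geq f(R^2,d)$, a contradiction. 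The fourth alternative $d$ is essential precisely for this three-way symmetry in $R^3$---that is where $m\geq 4$ actually enters, not in the starting profile.
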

\begin{proof}
We establish a stronger statement with \pcone-efficiency instead of \pc-efficiency.

Assume for contradiction that there is a neutral and anonymous SDS $f$ that satisfies both \pcone-efficiency and strict \pc-participation. We first prove the impossibility for the case $m=4$ and explain how to generalize the impossibility to more alternatives at the end of this proof. First, consider the following profile with ten voters.

    \begin{profile}
          $R^{1}$: & $1$: $a,b,c,d$ & $2$: $a,b,d,c$ & $3$: $a,c,b,d$
          & $4$: $a,c,d,b$ & $5$: $a,d,b,c$ \\ &$6$: $a,d,c,b$ 
          & $7$: $b,a,c,d$ & $8$: $b,a,d,c$ & $9$: $c,a,b,d$
          & $10$: $c,a,d,b$
    \end{profile}
    
    Observe that $b$ and $c$ are symmetric in $R^1$ and thus, anonymity and neutrality imply that $f(R^1,b)=f(R^1,c)$. Moreover, since $a$ Pareto-dominates $d$, it can be checked that every lottery $p$ with $p(b)=p(c)>0$ or $p(d)>0$ is \pcone-dominated by the lottery $q$ that puts probability $1$ on $a$. Indeed, voters $1$ to $6$ strictly \pcone-prefers $q$ to $p$ since $a$ is their favorite alternative, and voters $7$ to $10$ \pcone-prefer $q$ to $p$ since $p(b)=p(c)$. Hence, \pcone-efficiency requires that $f(R^1,b) = f(R^1,c) = f(R^1,d)=0$, which means that $f(R^1,a) = 1$.

Next, consider profile $R^2$, which is obtained by adding voter~$11$ with the preference relation $d,a,b,c$ to $R^1$. We infer from strict \pc-participation that $f(R^2,d) > f(R^2,b) + f(R^2,c)$.

Finally, consider profile $R^3$, which is obtained by adding voter~$12$ with the preference relation $d,a,c,b$ to $R^2$.
Observe that $b$, $c$, and $d$ are symmetric in $R^3$, so by neutrality and anonymity, $f(R^3,b) = f(R^3,c) = f(R^3,d)$.
If $f(R^3,b) = f(R^3,c) = f(R^3,d) > 0$, then $f$ is not \pcone-efficient because all voters strictly prefer the degenerate lottery that puts probability $1$ on $a$.
Hence, $f(R^3,b) = f(R^3,c) = f(R^3,d) = 0$, which means that $f(R^3,a) = 1$.
Since $f(R^2,d) > f(R^2,b) + f(R^2,c)$, voter~$12$ has a disincentive to participate in $R^3$, thereby contradicting the strict \pc-participation of $f$.

Finally, for extending this impossibility to more than $m=4$ alternatives, we simply add the new alternatives at the bottom of the preference relations of all voters in a fixed order. \pcone-efficiency, anonymity, and neutrality still require for $R^1$ and $R^3$ that $a$ obtains probability~$1$, and it thus is easy to check that the impossibility still holds. 
\end{proof}

By contrast,  multiple SDSs are known to satisfy \sd-efficiency and strict \sd-participation \citep[see][]{BBH15b}. Moreover, \Cref{thm:strictpart} can be seen as a complement to the work of \citet{BBH15c} who showed that \ml satisfies both \pc-participation and \pc-efficiency. In particular, maximal lotteries satisfy a maximal degree of participation subject to \pc-efficiency. 

Since \Cref{thm:strictpart} requires $m\ge 4$, a natural question is whether the impossibility also holds for $m\le 3$. As we demonstrate, this is not the case. If $m=2$, it is easy to see that the uniform random dictatorship satisfies all axioms of \Cref{thm:strictpart}. For $m=3$, however, the uniform random dictatorship fails \pcone-efficiency (see \Cref{sec:RD-ML}).
In light of this, we construct a new SDS that satisfies all axioms used in \Cref{thm:strictpart} and \pc-efficiency. To this end, let $B$ denote the set of alternatives that are never bottom-ranked. Then, the SDS $f^2$ is defined as follows: return the uniform random dictatorship if $|B|\in \{0,2\}$; otherwise (i.e., $|B| = 1$), we delete the alternatives $x\in A\setminus B$ that minimize $n_R(x)$ (if there is a tie, delete both alternatives) and return the outcome of the uniform random dictatorship for the reduced profile. As the following proposition shows, $f^2$ indeed satisfies all axioms of \Cref{thm:strictpart} when $m=3$; the proof is deferred to \Cref{subsec:propositions}.

\begin{restatable}{proposition}{pospart}\label{prop:example-m-3}
For $m = 3$, $f^2$ satisfies anonymity, neutrality, \pc-efficiency, and strict \pc-participation.
\end{restatable}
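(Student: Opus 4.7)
The plan is to verify the four axioms in turn. Anonymity and neutrality follow immediately from the construction: the set $B$, the counts $n_R(x)$, the tie-breaking rule on $A\setminus B$, and \rd are all symmetric in voters and equivariant under permutations of alternatives.

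For \pc-efficiency, I would perform a case analysis on $|B|$. If $|B|=2$, the unique alternative outside $B$ is bottom-ranked by every voter and hence Pareto-dominated; $f^2(R)=\rd(R)$ assigns it probability~$0$, and any redistribution between the two remaining alternatives is strictly opposed by one of the two voter types. If $|B|=1$, $f^2(R)$ either places probability one on the unique $b\in B$ (the tie case) or has two-element support $\{b,x\}$ with $x\in A\setminus B$ the alternative of larger $n_R(x)$; direct verification via the PC-preference formula rules out any PC-improvement, exploiting that $b$ is never bottom-ranked, so every voter ranks $b$ at top or middle. If $|B|=0$, then $f^2(R)=\rd(R)$, and I would combine the individual PC-preference inequalities (weighted by voter-type counts) with the feasibility constraints $r_x\ge -p_x$ to derive a contradiction for any putative strict PC-improvement.

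For strict \pc-participation, the first step is to characterize PC-maximal lotteries. For voter~$i$ with anti-symmetric preference matrix $A_i$ (where $A_i[x,y]=1$ if $x\succ_i y$, $-1$ if $y\succ_i x$, and $0$ if $x=y$), a lottery $p$ is PC-maximal iff $A_ip\le 0$ componentwise (take $q$ to be a vertex of the simplex in the definition of PC-dominance); for $m=3$, this forces $p$ to place probability one on voter~$i$'s top choice~$t_i$. Consequently, strict \pc-participation amounts to showing $f^2(R)\succsim_i^{\pc} f^2(R_{-i})$ in general, and $f^2(R)\succ_i^{\pc} f^2(R_{-i})$ whenever $f^2(R_{-i})$ does not place probability one on~$t_i$.

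The second step is a case analysis tracking how $|B|$ changes when voter~$i$ is added. Since voter~$i$'s ballot introduces at most one new bottom-ranking, $|B_R|\in\{|B_{R_{-i}}|,|B_{R_{-i}}|-1\}$. When both profiles fall into the $|B|=0$ regime (or both into $|B|=2$), $f^2$ coincides with \rd, and strict \sd-participation of \rd (\Cref{prop:rd})---combined with the fact that the degenerate lottery on~$t_i$ SD-dominates $f^2(R_{-i})$ whenever $f^2(R_{-i})$ is not already that lottery---implies strict \pc-participation via the implication $p\succ_i^{\sd} q\Rightarrow p\succ_i^{\pc} q$. The main obstacle lies in the mixed transitions $(|B_{R_{-i}}|,|B_R|)\in\{(1,1),(1,0),(2,1),(3,2)\}$, together with the subcase of $(1,1)$ where the tie-breaking on $A\setminus B$ flips; here $f^2(R)$ and $f^2(R_{-i})$ can differ in support or in structural form, and I would compute both lotteries explicitly in terms of the voter-type counts and verify voter~$i$'s PC-preference inequality by direct substitution into the formula, using that voter~$i$'s bottom choice in $R$ either was already bottom-ranked in $R_{-i}$ or is precisely the element leaving~$B$.
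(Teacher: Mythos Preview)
Your overall architecture matches the paper's: anonymity and neutrality are immediate, \pc-efficiency is handled by a case split on $|B|$, and strict participation is handled by tracking how $|B|$ changes when voter $i$ joins. Two substantive differences are worth flagging.

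First, for strict participation the paper actually proves the \emph{stronger} claim that $f^2$ satisfies strict \sd-participation, and only then invokes $\succ_i^{\sd}\Rightarrow\succ_i^{\pc}$. Your plan to work with \pc directly is fine in principle, but the paper's route is cleaner: since both the \sd-maximal and \pc-maximal lottery for voter $i$ coincide (the point mass on $t_i$, as you correctly argue), the ``strict'' clause triggers under exactly the same condition, so proving \sd-participation loses nothing and makes the inequalities monotone rather than bilinear. Also, your listed transition $(3,2)$ cannot occur: with $m=3$ and at least one voter, some alternative is bottom-ranked, so $|B|\le 2$ always.

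Second, your \pc-efficiency argument for $|B|=0$ is the one place where the proposal is genuinely thin. The phrase ``combine the individual \pc-preference inequalities (weighted by voter-type counts)'' suggests summing $r^{\top}A_ip\ge 0$ over voters; that yields $r^{\top}G_Rp>0$ with $G_R$ the majority-margin matrix, which is \emph{not} by itself a contradiction for $p=\rd(R)$. The paper instead proves a standalone sufficient condition (\Cref{lem:eff3}): it observes which coordinates of $q-p$ are positive/negative, infers from $q\succsim_i^{\pc}p$ that certain preference orders \emph{cannot} be present, deduces that the remaining voter types come in opposed pairs whose \pc-inequalities sum to zero (forcing all of them to be equalities), and thereby rules out a strict improvement. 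That structural step---identifying forced voter types from the sign pattern of $q-p$---is what makes the argument go through, and it is missing from your sketch.
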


\begin{remark}
Both \pc-efficiency and \pc-participation are required for \Cref{thm:strictpart} since \ml and \rd satisfy all but one of the axioms. 
Whether anonymity and neutrality are required is open. 
\end{remark}

\begin{remark}
\label{remark:weak-PCeff}
\Cref{thm:strictpart} still holds when replacing \pc-efficiency with \emph{weak \pc-efficiency} and letting $m\ge 5$. (A lottery fails to be weakly \pc-efficient if there is another lottery in which all voters are \emph{strictly} better off.) The incompatibility with strict \pc-participation can be shown by a proof that uses a preference profile with five alternatives and 18 voters that are joined by 6 further voters, but is otherwise similar to that of \Cref{thm:strictpart}.
\end{remark}

\begin{remark}
$f^2$ also satisfies strict \sd-participation, which shows that efficiency and participation are compatible even in their strongest forms when $m=3$. However, these axioms do not uniquely characterize $f^2$.
\end{remark}

\section{Conclusion}

We have studied incentive properties of social decision schemes (SDSs) based on the pairwise comparison (\pc) lottery extension,
and answered open questions raised by \citet{Bran17a} by proving three strong impossibilities. 
In particular, we showed that \pc-strategyproofness and strict \pc-participation are incompatible with \pc-efficiency and Condorcet-consistency when there are at least four alternatives (see also \Cref{fig:overview}). When there are fewer than four alternatives, the axioms are shown to be compatible via the introduction of two new SDSs.
We also settled an open problem by \citet{ABB14b} by showing that there exist profiles and \pc-inefficient lotteries such that it is not possible to reach a \pc-efficient outcome by repeatedly moving from a \pc-dominated lottery to one of its dominators.

We highlight three important aspects and consequences of our results. First, when moving from the standard approach of stochastic dominance (\sd) to \pc, previously compatible axioms become incompatible. In particular, \Cref{thm:EffSP,thm:strictpart} become possibilities when using \sd preferences since all given axioms are satisfied by the uniform random dictatorship. Secondly, unlike Arrow's impossibility and the No-Show paradox, the Gibbard-Satterthwaite impossibility cannot be circumvented via \pc preferences. 
In fact, our results demonstrate limitations of \pc preferences that stand in contrast to the previous positive findings \citep{BBH15c,BrBr17a}. 
In light of the shown tradeoff between incentive-compatibility and efficiency, two SDSs stand out: the uniform random dictatorship because it satisfies \pc-strategyproofness and strict \pc-participation, and maximal lotteries because it satisfies \pcone-strategyproofness, Condorcet-consistency, \pc-efficiency, and \pc-participation.

There are only few opportunities to further strengthen these results. In some cases, it is unclear whether anonymity or neutrality---two fairness properties that are often considered imperative---are required. Two challenging questions concerning \Cref{thm:EffSP} are whether anonymity can be weakened to non-dictatorship and whether \pc-efficiency can be replaced with weak \pc-efficiency (cf.~\Cref{remark:weak-PCeff}). However, if true, any such statement would require quite different proof techniques.

\section*{Acknowledgements}

This work was supported by the Deutsche Forschungsgemeinschaft under grants \mbox{BR 2312/11-2} and \mbox{BR 2312/12-1}, by the Singapore Ministry of Education under grant number MOE-T2EP20221-0001, and by an NUS Start-up Grant. A preliminary version of this paper appeared in the Proceedings of the 31st International Joint Conference on Artificial Intelligence (Vienna, Austria).

\def\bibfont{\small}

\newpage

\appendix
\section{Omitted Proofs}

In this section, we present the proofs omitted from the main body. In particular, we prove in \Cref{subsec:cycle} that it is possible that no path of \pc-improvements originating from a \pc-inefficient lottery leads to a \pc-efficient lottery. Furthermore, we discuss the proofs of \Cref{thm:EffSP} and \Cref{lem:AWPSP} in \Cref{subsec:EffSP}. Finally, \Cref{subsec:propositions} contains the proofs of \Cref{prop:m3sp,prop:example-m-3}. 

\subsection[PC-Efficiency Cycle]{\pc-Efficiency Cycle}\label{subsec:cycle}

\citet{ABB14b} left as an open problem whether for every \pc-inefficient lottery $p$ there is a sequence of \pc-improvements that leads to a \pc-efficient lottery. We disprove this assertion by giving a profile $R$ and a lottery $p$ such that it is not possible to reach a \pc-efficient lottery $q$ from $p$ by repeatedly applying \pc-improvements.

\begin{proposition}\label{prop:cycle}
	There is a profile $R$ and a lottery $p$ such that no sequence of \pc-improvements that starts at $p$ leads to a \pc-efficient lottery. 
\end{proposition}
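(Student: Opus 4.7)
The plan is to answer the question of \citet{ABB14b} negatively by exhibiting an explicit profile $R$ on $m = 5$ alternatives with $n = 8$ voters, together with a specific \pc-inefficient lottery $p$, such that no iterated sequence of \pc-improvements beginning at $p$ ever reaches a \pc-efficient lottery. The central idea is to construct a \emph{cycle} of \pc-improvements at the population level, i.e., lotteries $p = p_0, p_1, \ldots, p_{k-1}, p_k = p_0$ with $p_{j+1}$ \pc-dominating $p_j$ for every $j$. Such a cycle is not inconsistent, despite how it sounds, precisely because the individual \pc preference relations are intransitive (the Steinhaus-Trybula paradox, highlighted earlier in the paper): a single voter can strictly \pc-prefer $p_1$ to $p_0$ and $p_2$ to $p_1$ while still strictly \pc-preferring $p_0$ to $p_2$, and one can engineer the profile so that these intransitivities line up across the electorate to yield a unanimous weak improvement (strict for at least one voter) at every step of the cycle.

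The construction proceeds in three stages. First I would fix the tournament of majority margins on the five alternatives; a symmetric, rotationally invariant tournament is attractive because it naturally supports a three-cycle of lotteries. Second, I would choose candidate lotteries $p_0, p_1, p_2$ each supported on a different three-element subset of $A$, with masses tuned so that the bilinear expression $\sum_{x \succ_i y}\bigl(p_{j+1}(x)p_j(y) - p_j(x)p_{j+1}(y)\bigr)$ is nonnegative for every voter $i$ and strictly positive for at least one voter, for each of the pairs $(p_0, p_1)$, $(p_1, p_2)$, and $(p_2, p_0)$. Third, I would specify the eight individual preferences compatible with the chosen tournament so that these inequalities are simultaneously satisfied; this reduces to a finite tabular verification once the data are fixed.

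The main obstacle is that merely exhibiting one such three-cycle is not enough: a priori, from $p_0$ one could improve to some lottery $q \notin \{p_0, p_1, p_2\}$ and then from $q$ reach a \pc-efficient lottery outside the cycle. What must be established is that the \emph{entire} set of lotteries reachable from $p$ under iterated \pc-improvements is disjoint from the set of \pc-efficient lotteries. My proposed strategy is to identify an invariant region $S \ni p$ such that every $q \in S$ is \pc-dominated by some member of the cycle, so that (i) no $q \in S$ is \pc-efficient, and (ii) any \pc-improvement of $q$ can itself be re-dominated by a cycle member and hence is either in $S$ or, inductively, still not \pc-efficient. Equivalently, I would show that every lottery reachable from $p$ remains \pc-dominated, which directly precludes reaching a \pc-efficient target. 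Certifying such an invariant, ideally by exploiting the rotational symmetry of the constructed tournament together with the convexity of the dominance inequalities in each coordinate, is the delicate core of the argument, and is where I expect the bulk of the technical work to sit.
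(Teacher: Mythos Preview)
Your high-level plan---build a three-cycle of \pc-improvements on five alternatives with eight voters and then argue that one cannot escape it---matches the paper's, and you correctly flag the real obstacle: a single cycle does not by itself rule out a branching improvement that leaves the cycle and lands on a \pc-efficient lottery. However, the closure mechanism you propose does not actually close this gap.

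You suggest taking $S$ to be (essentially) the set of lotteries that are \pc-dominated by some cycle vertex $p_j$, and then arguing that any \pc-improvement of a $q\in S$ is again in $S$. But nothing forces this: if $q$ is dominated by $p_1$ and $q'$ is an arbitrary \pc-improvement of $q$, there is no reason $q'$ should be dominated by any $p_j$. ``Re-dominated by a cycle member'' is not a property that propagates along improvement edges, so the induction you sketch has no anchor (from $q'$ one improves to $q''$, and so on). Invoking rotational symmetry or convexity of the bilinear inequalities does not help, because the set of lotteries dominated by a fixed $p_j$ is neither convex nor closed under the improvement relation.

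What the paper actually establishes is much sharper and is the idea your plan is missing: the invariant set is exactly the union of three line segments $[p^1,p^2]\cup[p^2,p^3]\cup[p^3,p^1]$, and for every point on a segment the \emph{only} \pc-improvements are points strictly further along that same segment. This uniqueness is obtained by pairing voters who agree on only one or two ordered pairs of alternatives and summing their \pc inequalities; the sum collapses to a constraint that forces any dominating lottery $q$ to satisfy rigid equalities (for instance $q(d)=q(e)=0$ and $q(a)=q(b)$ when improving from a point on $[p^1,p^2]$). The profile is engineered precisely so that three such voter-pairings exist, one per segment. It is this algebraic ``tunnelling'' device---not an abstract invariant region cut out by domination from the cycle vertices---that makes the argument go through.
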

\begin{proof}
    Before we present the profile and the lotteries that will constitute our counterexample, we discuss two general claims on \pc preferences. To this end, consider two lotteries $p$ and $q$ and two voters $i$ and $j$ and suppose that both voters $i$ and $j$ prefer $q$ to $p$ according to \pc. By using the definition of \pc preferences and partitioning alternatives also with respect to the preference of voter $j$, we derive the following inequality for voter $i$.
    \begin{align*}
	    &\sum_{x,y\in A\colon x \succ_i y \,\land\, x\succ_j y} q(x)p(y) + \sum_{x,y\in A\colon x \succ_i y \,\land\, y\succ_j x} q(x)p(y)\\
	    \geq &\sum_{x,y\in A\colon x \succ_i y \,\land\, x\succ_j y} p(x)q(y) + \sum_{x,y\in A\colon x \succ_i y \,\land\, y\succ_j x} p(x)q(y)	
	\end{align*}
	
	By exchanging the roles of voter $i$ and $j$, we also infer the subsequent inequality.
	\begin{align*}
	    &\sum_{x,y\in A\colon x \succ_j y \,\land\, x\succ_i y} q(x)p(y) + \sum_{x,y\in A\colon x \succ_j y \,\land\, y\succ_i x} q(x)p(y)\\
	    \geq &\sum_{x,y\in A\colon x \succ_j y \,\land\, x\succ_i y} p(x)q(y) + \sum_{x,y\in A\colon x \succ_j y \,\land\, y\succ_i x} p(x)q(y)
	\end{align*}
	
	Now, by summing up these two inequalities and cancelling common terms, we infer our first key insight: two voters $i$ and $j$ simultaneously \pc-prefer $q$ to $p$ only if
	\begin{align}
	    \sum_{x,y\in A\colon x \succ_i y \,\land\, x\succ_j y} q(x)p(y) 
	    \geq \sum_{x,y\in A\colon x \succ_i y \,\land\, x\succ_j y} p(x)q(y).\label{eq:main1}
	\end{align}
	
	On the other hand, if $\sum_{x,y\in A\colon x \succ_i y \,\land\, x\succ_j y} q(x)p(y)=\sum_{x,y\in A\colon x \succ_i y \,\land\, x\succ_j y} p(x)q(y)=0$, our two initial inequalities simplify to 
	 \begin{align}
	    \sum_{x,y\in A\colon x \succ_i y \,\land\, y\succ_j x} q(x)p(y) 
	    = \sum_{x,y\in A\colon x \succ_i y \,\land\, y\succ_j x} p(x)q(y).\label{eq:main2}
	\end{align}
	
	Using \eqref{eq:main1} and \eqref{eq:main2}, we will show that we cannot reach a \pc-efficient lottery from $p^1$ where $p^1(a)=p^1(b)=\frac{1}{2}$ by only making \pc-improvements according to the profile $R$ shown below.
	
	\begin{profile}
		$R$: & $1$: $b,d,c,a,e$ & $2$: $a,e,c,b,d$ & $3$: $d,c,a,b,e$ & $4$: $e,c,a,b,d$\\
			 & $5$: $b,d,e,c,a$ & $6$: $a,e,d,c,b$ & $7$: $b,e,d,c,a$ & $8$: $a,d,e,c,b$
	\end{profile}
	
	To prove this claim, we proceed in three steps, which essentially show that from $p^1$ we can only go towards the lottery $p^2$ with $p^2(c)=1$, from $p^2$ only towards the lottery~$p^3$ with $p^3(d)=p^3(e)=\frac{1}{2}$, and from $p^3$ only towards $p^1$. Hence, we cycle through \pc-inefficient lotteries and never reach an efficient one.\medskip 
	
	\textbf{Step 1:} As the first step, we show that every lottery $p$ with $p(a)=p(b)>0$ and $p(d)=p(e)=0$ is only \pc-dominated by lotteries $q$ with $q(a)=q(b)<p(a)=p(b)$ and $q(d)=q(e)=0$. For this, we observe that voters $1$ and $2$ only agree that $a$ is preferred to $e$ and that is $b$ is preferred to $d$. By \eqref{eq:main1}, $q$ can \pc-dominate $p$ only if
	\[q(b)p(d)+q(a)p(e)\geq p(b)q(d)+p(a)q(e).\] 
	Since $p(d)=p(e)=0$ and $p(a)=p(b)>0$ by assumption, this inequality can only be true if $q(d)=q(e)=0$. Hence, we have that $q(b)p(d)+q(a)p(e)= p(b)q(d)+p(a)q(e)=0$. \Cref{eq:main2} and the fact that $p(d)=p(e)=q(d)=q(e)=0$ therefore show that 
	\[q(b)p(c)+q(b)p(a)+q(c)p(a)= p(b)q(c)+p(b)q(a)+p(c)q(a).\] 
	
	Finally, since $p(a)=p(b)>0$, it is easy to derive that this equation holds only if $q(a)=q(b)$. Using this fact and $q(d)=q(e)=0$, it follows that voters $3$ and $4$ \pc-prefer $q$ to $p$ only if $q(c)\geq p(c)$. Since $p\neq q$ if $q$ \pc-dominates $p$, we thus infer that $q$ dominates $p$ in $R$ only if $q(a)=q(b)<p(a)=p(b)$ and $q(d)=q(e)=p(d)=p(e)=0$. Finally, it is easy to verify that all voters indeed \pc-prefer such a lottery $q$ to $p$. This argument shows that every lottery $p$ with $p=\lambda p^1+(1-\lambda) p^2$ is only \pc-dominated by another lottery of the same form but with smaller $\lambda$. In particular, if we apply this argument at $p^1$, we eventually have to go to $p^2$ when we aim to reach a \pc-efficient lottery by applying \pc-improvements.\bigskip
	
	\textbf{Step 2:} Next, we prove that every lottery $p$ with $p(a)=p(b)=0$, $p(c)>0$, and $p(d)=p(e)$ is only \pc-dominated by lotteries $q$ with $q(a)=q(b)=0$ and $q(d)=q(e)>p(d)=p(e)$. 
	To this end, consider the preferences of voters $3$ and $4$ and note that these voters only agree on the preferences between $c$ and $a$, $c$ and $b$, and $a$ and $b$. Hence, \eqref{eq:main1} requires that 
	\begin{align*}
		q(c)p(a)+q(c)p(b)+q(a)p(b)\geq p(c)q(a)+p(c)q(b)+p(a)q(b).
	\end{align*}
	
	Since $p(a)=p(b)=0$ and $p(c)>0$ by assumption, this inequality can be true only if $q(a)=q(b)=0$. We hence infer that $q(c)\Big(p(a)+p(b)\Big)+q(a)p(b)= p(c)\Big(q(a)+q(b)\Big)+p(a)q(b)=0$. Consequently, \Cref{eq:main2} and the fact that $p(a)=p(b)=q(a)=q(b)=0$ imply that 
	\begin{align*}
		q(d)p(c)+q(d)p(e)+q(c)p(e)= p(d)q(c)+p(d)q(e)+p(c)q(e).
	\end{align*}
	
    Since $p(d)=p(e)$ by assumption, this equation can hold only if $q(d)=q(e)$. This insight and the fact that $q(a)=q(b)=0$ imply that voters $5$ to $8$ only \pc-prefer $q$ to $p$ if $q(d)=q(e)\geq p(d)=p(e)$. Since $p\neq q$ if $q$ \pc-dominates $p$, this inequality must be strict. Finally, it is not difficult to verify that every lottery $q$ that satisfies $q(a)=q(b)=0$ and $q(d)=q(e)>p(d)=p(e)$ indeed \pc-dominates $p$. In particular, this means that every lottery $p$ with $p=\lambda p^2 + (1-\lambda)p^3$ is only \pc-dominated by a lottery $q$ with $p=\lambda' p^2 + (1-\lambda')p^3$ and $\lambda'<\lambda$. Hence, if we are at $p^2$ and try to find a \pc-efficient lottery by applying \pc-improvements, we will inevitably arrive at $p^3$.\bigskip
	
	\textbf{Step 3:} As the last step, we prove that every lottery $p$ with $p(a)=p(b)$, $p(c)=0$, and $p(d)=p(e)>0$ is only \pc-dominated by lotteries $q$ with $q(a)=q(b)>p(a)=p(b)$, $q(c)=0$, and $q(d)=q(e)$. For this, we consider voters $5$ to $8$ and note that voters $5$ and $6$ agree only on the preference between $d$ and $c$ and between $e$ and $c$. Moreover, the same is true for voters~$7$ and $8$ and we thus derive from \eqref{eq:main1} that 
	 \begin{align*}
		 q(d)p(c)+q(e)p(c)\geq p(d)q(c)+p(e)q(c).
	\end{align*}
	
Using the fact that $p(c)=0$ and $p(d)=p(e)>0$, this inequality can be true only if $q(c)=0$. Hence, we have that $q(d)p(c)+q(e)p(c)= p(d)q(c)+p(e)q(c)=0$. Using \Cref{eq:main2} and the fact that $q(c)=p(c)=0$, we derive the following two equations. The first one corresponds to voters $5$ and $6$, whereas the second one corresponds to voters~$7$ and $8$.
	\begin{align}
		& q(b)\Big(p(d)+p(e)+p(a)\Big)+q(d)\Big(p(e)+p(a)\Big)+q(e)p(a)\nonumber\\
		&= p(b)\Big(q(d)+q(e)+q(a)\Big)+p(d)\Big(q(e)+q(a)\Big)+p(e)q(a)\label{eq:aux1}\\
		     &q(b)\Big(p(e)+p(d)+p(a)\Big)+q(e)\Big(p(d)+p(a)\Big)+q(d)p(a)\nonumber\\
		&= p(b)\Big(q(e)+q(d)+q(a)\Big)+p(e)\Big(q(d)+q(a)\Big)+p(d)q(a)\nonumber
	\end{align}
	
	Subtracting these two equations from each other yields that $q(d)p(e)-q(e)p(d)=p(d)q(e)-p(e)q(d)$. Since $p(d)=p(e)>0$, we derive from this equation that $q(d)=q(e)$. Using this observation and the assumptions that $p(a)=p(b)$ and $p(d)=p(e)>0$, we can infer from \Cref{eq:aux1} that $q(a)=q(b)$. In summary, we therefore have that $q(c)=0$, $q(a)=q(b)$, and $q(d)=q(e)$. Next, voters $1$ and $2$ only \pc-prefer $q$ to $p$ if $q(a)=q(b)\geq p(a)=p(b)$, which means that $q(a)=q(b)>p(a)=p(b)$ because $p\neq q$. Finally, it is easily seen that all voters indeed \pc-prefer $q$ to $p$. This means that a lottery $p$ with $p=\lambda p^3+(1-\lambda) p^1$ is only \pc-dominated by another lottery $q$ with the same form but smaller $\lambda$. As a consequence, when applying \pc-improvements from $p^3$ to reach a \pc-efficient outcome, one will inevitably reach $p^1$.
\end{proof}

\subsection[Proof of Theorem 2]{Proof of \Cref{thm:EffSP}}\label{subsec:EffSP}

Next, we prove \Cref{thm:EffSP} and start by presenting three auxiliary lemmas. In more detail, we initiate our analysis by investigating the consequences of \pc-strategyproofness for lotteries with small support. To this end, let the \emph{support} of a lottery be $\supp(p)=\{x\in A\colon p(x)>0\}$. Our first lemma then focuses on the case where $|\supp(f(R))|\leq 2$ and states that, if the support does not change after a manipulation and the manipulator does not reorder the alternatives in the support, the outcome is not allowed to change.

\begin{lemma}\label{lem:2alts}
    Let $R,R'\in \mathcal{R}^N$ for some $N\in\mathcal{F}(\mathbb{N})$, $i\in N$, and $a,b\in A$ such that $R_{-i}=R_{-i}'$ and $a \succ_i b$ iff $a\succ_i'b$. Then, every \pc-strategyproof SDS $f$ satisfies $f(R)=f(R')$ if $\supp(f(R))\subseteq \{a,b\}$ and $\supp(f(R'))\subseteq \{a,b\}$.
\end{lemma}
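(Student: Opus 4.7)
The plan is to reduce the \pc-comparison between $f(R)$ and $f(R')$ to a simple numerical comparison of their weight on $a$, exploiting the fact that both lotteries are supported on the two-element set $\{a,b\}$.

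First I would record the following elementary observation: for any two lotteries $p,q$ with $\supp(p),\supp(q)\subseteq\{a,b\}$, and any preference relation $\succ$ with $a\succ b$, the double sum in the definition of \pc collapses to a single term. Specifically,
\[
\sum_{x,y\in A\colon x\succ y} p(x)q(y)=p(a)q(b),\qquad \sum_{x,y\in A\colon x\succ y} q(x)p(y)=q(a)p(b),
\]
because all other pairs $(x,y)$ have $p(x)q(y)=0$ or $q(x)p(y)=0$. Using $p(b)=1-p(a)$ and $q(b)=1-q(a)$, the condition $p\succsim^\pc q$ simplifies to $p(a)\geq q(a)$. Symmetrically, if $b\succ a$, then $p\succsim^\pc q$ iff $p(b)\geq q(b)$. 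In particular, the \pc-relation on lotteries supported in $\{a,b\}$ depends only on how the voter ranks $a$ versus $b$, and it is a total order (with ties only between lotteries putting identical weight on $a$).

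Next, I would apply the hypothesis $a\succ_i b\Leftrightarrow a\succ'_i b$ to conclude that $\succsim_i^\pc$ and $\succsim_i^{\pc\prime}$ (the \pc-relations induced by $\succ_i$ and $\succ'_i$) coincide when restricted to lotteries supported in $\{a,b\}$. Assume without loss of generality that $a\succ_i b$ (the other case is symmetric). \pc-strategyproofness applied to voter $i$ in the profile $R$ contemplating a deviation to $\succ'_i$ gives $f(R)\succsim_i^\pc f(R')$, hence $f(R)(a)\geq f(R')(a)$ by the above observation. \pc-strategyproofness applied to voter $i$ in the profile $R'$ contemplating a deviation to $\succ_i$ gives $f(R')\succsim_i^{\pc\prime} f(R)$, which by the coincidence of the two relations on the relevant lotteries yields $f(R')(a)\geq f(R)(a)$. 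Combining these, $f(R)(a)=f(R')(a)$, whence $f(R)(b)=f(R')(b)$, and since both lotteries put zero mass outside $\{a,b\}$ we conclude $f(R)=f(R')$.

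There is no real obstacle here: once the reduction of \pc to weight-comparison on a two-element support is made explicit, the double application of \pc-strategyproofness immediately pins down both weights. The only thing to be slightly careful about is that strategyproofness is invoked with respect to the true preferences in each direction, which is why the hypothesis that $\succ_i$ and $\succ'_i$ agree on the pair $\{a,b\}$ is essential -- it ensures the two inequalities are about the same numerical quantity $f(R)(a)$ versus $f(R')(a)$.
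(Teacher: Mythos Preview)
Your proof is correct and follows essentially the same approach as the paper's. The only cosmetic difference is that the paper phrases it as a proof by contradiction (assuming $f(R)\neq f(R')$ and deriving a manipulation), whereas you give a direct argument obtaining two opposite inequalities from the two applications of \pc-strategyproofness; the underlying reasoning is identical.
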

\begin{proof}
Let $R, R'\in\mathcal{R}^N$, $N\in\mathcal{F}(\mathbb{N})$, $i\in N$, and $a,b\in A$ be two distinct alternatives. Without loss of generality, we assume that $a \succ_ib$ and $a\succ_i'b$; the case where voter $i$ prefers $b$ to $a$ is symmetric. Moreover, let $f$ denote a \pc-strategyproof SDS and assume that $\supp(f(R))\subseteq \{a,b\}$ and $\supp(f(R'))\subseteq \{a,b\}$. Now, assume for contradiction that $f(R)\neq f(R')$. Since both lotteries only put positive probability on $a$ and $b$, this means either that $f(R,a)<f(R',a)$ and $f(R,b)>f(R',b)$, or that $f(R,a)>f(R',a)$ and $f(R,b)<f(R',b)$. First, suppose that $f(R,a)<f(R',a)$. Then, $f(R')\succ_i^\pc f(R)$, and voter $i$ can thus \pc-manipulate by deviating from $R$ to $R'$. On the other hand, if $f(R,a)>f(R',a)$, voter $i$ can \pc-manipulate by deviating from $R'$ to $R$ as he \pc-prefers $f(R)$ to $f(R')$ with respect to $\succ_i'$. Hence, both cases result in a \pc-manipulation, contradicting the \pc-strategyproofness of $f$. This proves that $f(R)=f(R')$. 
\end{proof}

Next, we analyze \pc-strategyproofness when $|\supp(f(R))|\leq 3$. In this case, only a significantly weaker implication holds: if $\supp(f(R))\subseteq \{a,b,c\}$, $\supp(f(R'))\subseteq \{a,b,c\}$, $f(R,a)<f(R,c)$, and $a\succ_i b \succ_i c$ for some voter $i$, then this voter cannot change the fact that $a$ gets less probability than $c$ in the resulting lottery. 

\begin{lemma}\label{lem:3alts}
	Let $R,R'\in \mathcal{R}^N$ for some $N\in\mathcal{F}(\mathbb{N})$, $i\in N$, and $a,b,c\in A$ such that $R_{-i}=R_{-i}'$ and $a\succ_i b \succ_i c$. Then, every \pc-strategyproof SDS $f$ satisfies $f(R',a)<f(R',c)$ if $f(R,a)<f(R,c)$, $\supp(f(R))\subseteq \{a,b,c\}$, and $\supp(f(R'))\subseteq \{a,b,c\}$.
\end{lemma}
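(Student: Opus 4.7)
The plan is to translate \pc-strategyproofness at voter $i$ directly into a single polynomial inequality in the six probabilities, and then argue by contradiction that $f(R',a) \geq f(R',c)$ is untenable.

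Let $p = f(R)$ and $q = f(R')$, and write $(\alpha,\beta,\gamma) = (p(a),p(b),p(c))$ and $(\alpha',\beta',\gamma') = (q(a),q(b),q(c))$. Both are probability vectors summing to $1$, and by hypothesis $\alpha < \gamma$. First I would invoke \pc-strategyproofness: if $q \succ_i^\pc p$, then voter $i$ in $R$ could \pc-manipulate by reporting $\succ_i'$, so completeness of \pc forces $p \succsim_i^\pc q$. Because both lotteries are supported in $\{a,b,c\}$ and $a \succ_i b \succ_i c$, only the ordered pairs $(a,b), (a,c), (b,c)$ contribute to the \pc-comparison, which therefore unpacks to $\alpha + \beta\gamma' \geq \alpha' + \beta'\gamma$. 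Substituting $\beta = 1 - \alpha - \gamma$ and $\beta' = 1 - \alpha' - \gamma'$ and collecting terms, I would rewrite this in the equivalent form
\[
(\alpha - \alpha')(1 - \gamma) + (\gamma' - \gamma)(1 - \alpha) \geq 0. \qquad (\star)
\]

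Next I would suppose for contradiction that $\alpha' \geq \gamma'$. Since $1 - \gamma \geq 0$, multiplying $\alpha' - \alpha \geq \gamma' - \alpha$ by $1 - \gamma$ gives $(\alpha' - \alpha)(1 - \gamma) \geq (\gamma' - \alpha)(1 - \gamma)$. Combining with $(\star)$ rewritten as $(\gamma' - \gamma)(1 - \alpha) \geq (\alpha' - \alpha)(1 - \gamma)$ yields $(\gamma' - \gamma)(1 - \alpha) \geq (\gamma' - \alpha)(1 - \gamma)$. Expanding and cancelling the common $\alpha\gamma$ term leaves the clean factorization $(\gamma - \alpha)(\gamma' - 1) \geq 0$. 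Because $\gamma - \alpha > 0$ by hypothesis, this forces $\gamma' \geq 1$, hence $\gamma' = 1$; then $\alpha' \geq \gamma' = 1$ forces $\alpha' = 1$, contradicting $\alpha' + \beta' + \gamma' = 1$.

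I expect the main obstacle to be spotting the right algebraic manipulation. The raw constraint $\alpha + \beta\gamma' \geq \alpha' + \beta'\gamma$ mixes linear and bilinear terms and does not obviously interact with the ordering of $\alpha'$ and $\gamma'$; the form $(\star)$ is what makes the contradictory hypothesis $\alpha' \geq \gamma'$ useful, because it separates the $\alpha$-variables from the $\gamma$-variables weighted by the nonnegative coefficients $1 - \gamma$ and $1 - \alpha$. Once the factorization $(\gamma - \alpha)(\gamma' - 1) \geq 0$ emerges, the contradiction is immediate and no separate handling of corner cases such as $\gamma = 1$ or $\gamma' = 0$ is needed.
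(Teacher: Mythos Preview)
Your proof is correct and follows the same overall strategy as the paper: extract the single \pc-strategyproofness inequality $p\succsim_i^{\pc} q$ for voter $i$ with $a\succ_i b\succ_i c$, and show algebraically that it is incompatible with $p(a)<p(c)$ and $q(a)\geq q(c)$. The only difference is in the algebra: the paper rewrites the inequality in two equivalent ways and sums them to obtain the symmetric factorization $(p(a)-p(c))(1+q(b))\geq (q(a)-q(c))(1+p(b))$, from which the contradiction is immediate, whereas you chain your form $(\star)$ with the hypothesis $\alpha'\geq\gamma'$ to reach $(\gamma-\alpha)(\gamma'-1)\geq 0$; both routes are short and neither requires any additional ideas.
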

\begin{proof}
	Let $R, R'\in\mathcal{R}^N$ for some $N\in\mathcal{F}(\mathbb{N})$, $i\in N$, and $a,b,c\in A$ be three distinct alternatives such that $R_{-i}=R_{-i}'$ and $a\succ_i b\succ_i c$. Furthermore, consider a \pc-strategyproof SDS $f$ and suppose that $\supp(f(R))\subseteq \{a,b,c\}$ and $\supp(f(R'))\subseteq \{a,b,c\}$. For simplicity, we define $p=f(R)$ and $q=f(R')$ and assume for contradiction that $p(a)<p(c)$ and $q(a)\geq q(c)$.
	Next, we use \pc-strategyproofness to relate $p$ and $q$. In particular, we infer the following equation from the \pc-strategyproofness between $R$ and $R'$. Note that the alternatives $x\in A\setminus\{a,b,c\}$ can be omitted as $p(x)=q(x)=0$.
		\begin{align*}
			p(a)q(b)+p(a)q(c)&+p(b)q(c)
			\geq q(a)p(b)+q(a)p(c)+q(b)p(c)
		\end{align*}
	
	Using the fact that $1=q(a)+q(b)+q(c)$ and $1=p(a)+p(b)+p(c)$, we have two possibilities of rewriting this inequality.
		\begin{align*}
			p(a)(1-q(a))+p(b)q(c)&\geq q(a)(1-p(a))+q(b)p(c)\\
			\iff p(a) +p(b)q(c)&\geq q(a)+q(b)p(c)
		\end{align*}
		\begin{align*}
			p(a)q(b) + (1-p(c))q(c) &\geq q(a)p(b) + (1-q(c))p(c)\\
			\iff p(a)q(b) + q(c) &\geq q(a)p(b) + p(c)
		\end{align*}
	
	Summing up these two inequalities results in the following inequality. 
		\begin{align*}
			p(a)(1+q(b)) +  q(c)(1+p(b)) &\geq q(a)(1+p(b)) + p(c)(1+q(b))\\
		    \iff (p(a)-p(c))(1+q(b)) &\geq (1+p(b))(q(a)-q(c))
		\end{align*}
	
		Our assumption that $p(a)<p(c)$ implies that the left-hand side of the inequality is smaller than $0$. On the other hand, we have $q(a)\geq q(c)$, so the right-hand side is non-negative. This is a contradiction, which proves that our assumptions on $p$ and $q$ are in conflict with \pc-strategyproofness. Hence, if $f(R,a)< f(R,c)$, then $f(R',a)< f(R',c)$.
	\end{proof}

    For our proofs, we also need insights on circumstances under which a lottery is \pc-inefficient. To this end, we analyze in the next lemma when voters prefer certain lotteries to each other. In order to succinctly formalize these results, we define the rank of an alternative $x$ in a preference relation $\succ_i$ as $r(\succ_i,x)=1+|\{y\in A\setminus \{x\}\colon y\succ_i x\}|$. In particular, if $r(\succ_i,x)<r(\succ_i,y)$, then $x\succ_i y$ due to the transitivity of preference relations.
	
	\begin{lemma}\label{lem:pcpref}
	Let $A=\{w,x,y,z\}$, let $p$ denote a lottery on $A$ with $p(x)>0$ and $p(y)>0$, and define $q$ as $q(x)=p(x)-\frac{\epsilon}{p(x)+p(z)}$, $q(y)=p(y)-\frac{\epsilon}{p(y)+p(z)}$, $q(z)=p(z)+\frac{\epsilon}{p(x)+p(z)}+\frac{\epsilon}{p(y)+p(z)}$, and $q(w)=p(w)$, where $\epsilon>0$ is sufficiently small so that $q(x)\geq 0$, $q(y)\geq 0$. Then, the following \pc-preferences hold:
	\begin{enumerate}
	    \item If $z\succ_i x$ and $z\succ_i y$, then $q\succ_i^\pc p$.
	    \item If $x\succ_i z$ and $y\succ_i z$, then $p \succ_i^\pc q$.
	    \item Assume that $x\succ_i z\succ_i y$ or $y\succ_i z\succ_i x$.
	    \begin{enumerate}
	        \item If $p(w)=0$ or $r(\succ_i, w)\in \{1,4\}$, then $p\succsim_i^\pc q$ and $q\succsim_i^\pc p$.
	        \item If $r(\succ_i, w)=2$ and $p(w)>0$, then $p\succ_i^\pc q$.
	        \item If $r(\succ_i, w)=3$ and $p(w)>0$, then $q\succ_i^\pc p$.
	    \end{enumerate}
	\end{enumerate}
	\end{lemma}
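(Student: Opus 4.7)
The key tool is the linear rewriting
\[ q(a)p(b)-p(a)q(b) \;=\; \Delta_a\,p(b)-\Delta_b\,p(a), \qquad \Delta_a:=q(a)-p(a), \]
which reduces $q\succsim_i^\pc p$ to the inequality $S_i:=\sum_{a\succ_i b}(\Delta_a p(b)-\Delta_b p(a))\ge 0$ (with strict \pc-preference corresponding to strict inequality, since \pc is complete). From the definition of $q$ one has $\Delta_w=0$, $\Delta_x=-\epsilon/(p(x)+p(z))<0$, $\Delta_y=-\epsilon/(p(y)+p(z))<0$, and $\Delta_z=\epsilon/(p(x)+p(z))+\epsilon/(p(y)+p(z))>0$, so in particular $\Delta_x+\Delta_y+\Delta_z=0$. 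I would first compute the three atomic pairwise terms $T_{zx}:=\Delta_z p(x)-\Delta_x p(z)=\epsilon+\epsilon p(x)/(p(y)+p(z))$, $T_{zy}:=\Delta_z p(y)-\Delta_y p(z)=\epsilon+\epsilon p(y)/(p(x)+p(z))$ (both strictly positive), and $T_{xy}:=\Delta_x p(y)-\Delta_y p(x)$, and verify the crucial telescoping identity $T_{zx}=T_{zy}+T_{xy}$ by direct algebra.

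Next I would handle the $w$-pairs uniformly. Since $\Delta_w=0$, every pair $(a,w)$ or $(w,a)$ with $a\in\{x,y,z\}$ contributes $\pm\Delta_a\,p(w)$; hence whenever $p(w)=0$ or $r(\succ_i,w)\in\{1,4\}$ all three $w$-pairs enter with the same sign and the total telescopes to $\pm p(w)(\Delta_x+\Delta_y+\Delta_z)=0$. In the remaining configurations $r(\succ_i,w)\in\{2,3\}$ the $w$-block is a specific combination of $\Delta_x,\Delta_y,\Delta_z$ times $p(w)$ which, using $\Delta_x+\Delta_y+\Delta_z=0$, collapses to a multiple of a single $\Delta_a$.

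With this bookkeeping the case analysis is routine. In case~1 ($z\succ_i x,y$) the $\{x,y,z\}$-block equals $T_{zx}+T_{zy}+\sigma T_{xy}$ with $\sigma\in\{+1,-1\}$; using the identity, both sign choices give a value of the form $2\epsilon+2\epsilon\,p(\cdot)/(p(\cdot)+p(z))\ge 2\epsilon$, and a short direct check shows the $w$-block is non-negative for $r(w)\in\{2,3\}$ as well. Hence $S_i>0$, so $q\succ_i^\pc p$. Case~2 follows by reversing $\succ_i$: this negates $S_i$ and places us in case~1, so $S_i<0$ and $p\succ_i^\pc q$. For case~3, assume $x\succ_i z\succ_i y$ (the sub-case $y\succ_i z\succ_i x$ is analogous). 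The $\{x,y,z\}$-block is $-T_{zx}+T_{xy}+T_{zy}=0$ by the identity, so $S_i$ equals the $w$-block alone: it is $0$ in~3(a); in~3(b) the preferences $x,w,z,y$ give $(\Delta_x-\Delta_y-\Delta_z)p(w)=2\Delta_x p(w)<0$, yielding $p\succ_i^\pc q$; in~3(c) the preferences $x,z,w,y$ give $(\Delta_x+\Delta_z-\Delta_y)p(w)=-2\Delta_y p(w)>0$, yielding $q\succ_i^\pc p$.

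The main obstacle is organizational rather than mathematical: keeping the signs and the rank of $w$ straight across the seven or so sub-configurations. Working exclusively in terms of the three $\Delta$'s and repeatedly invoking $\Delta_x+\Delta_y+\Delta_z=0$ together with the telescoping identity $T_{zx}=T_{zy}+T_{xy}$ reduces each sub-case to a single-line calculation.
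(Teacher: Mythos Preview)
Your argument is correct. The core identity you call the ``telescoping identity'' $T_{zx}=T_{zy}+T_{xy}$ is exactly equivalent to the balance equation the paper establishes by direct expansion, namely $p(x)(q(z)+q(y))+p(z)q(y)=q(x)(p(z)+p(y))+q(z)p(y)$; both say that the $\{x,y,z\}$-contribution to $S_i$ vanishes when $z$ sits between $x$ and $y$. The organizational difference is that the paper disposes of Claims~1 and~2 in one line by observing that $q$ \sd-dominates $p$ (probability only moves upward from $x,y$ to $z$), which spares it your sub-case check of the $w$-block for $r(w)\in\{2,3\}$; conversely, your $\Delta$-formalism with $\Delta_x+\Delta_y+\Delta_z=0$ makes the $w$-block analysis in Claim~3 a touch cleaner than the paper's case-by-case expansion. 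Neither approach has a real advantage in length or depth---they are two packagings of the same computation.
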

	\begin{proof}
	Let $p$ and $q$ be defined as in the lemma. Claims 1 and 2 follow immediately since a voter $i$ with $z\succ_i x$ and $z\succ_i y$ (resp. $x\succ_i z$ and $y\succ_i z$) strictly \sd-prefers $q$ to $p$ (resp. $p$ to $q$).
	
	For Claim 3, suppose that $x\succ_i z\succ_i y$; the case that $y\succ_i z\succ_i x$ is symmetric. The key insight for this case is that
	
	\begin{align*}
	    &p(x)\Big(q(z)+q(y)\Big)+p(z)q(y)\\
	    &=p(x)\Big(p(z)+p(y)+\frac{\epsilon}{p(x)+p(z)}\Big)+p(z)\Big(p(y)-\frac{\epsilon}{p(y)+p(z)}\Big)\\
	    &=p(x)\Big(p(z)+p(y)\Big)+p(z)p(y)+\frac{\epsilon p(x)}{p(x)+p(z)}-\frac{\epsilon p(z)}{p(y)+p(z)}\\
	    &=p(x)\Big(p(z)+p(y)\Big)+p(z)p(y)+\Big(\epsilon-\frac{\epsilon p(z)}{p(x)+p(z)}\Big)-\Big(\epsilon-\frac{\epsilon p(y)}{p(y)+p(z)}\Big)\\
	    &=p(x)\Big(p(z)+p(y)\Big)+p(z)p(y)-\frac{\epsilon p(z)}{p(x)+p(z)}+\frac{\epsilon p(y)}{p(y)+p(z)}+\frac{\epsilon p(y)-\epsilon p(y)}{p(x)+p(z)}\\
	    &=\Big(p(x)-\frac{\epsilon}{p(x)+p(z)}\Big)\Big(p(z)+p(y)\Big)+\Big(p(z)+\frac{\epsilon}{p(x)+p(z)}+\frac{\epsilon}{p(y)+p(z)}\Big)p(y)\\
	    &=q(x)\Big(p(z)+p(y)\Big) +q(z)p(y).
	\end{align*}
	
	As a consequence of this equation, voter $i$'s preference between $q$ and $p$ only depends on $p(w)$ and $r(\succ_i, w)$. First, if $p(w)=q(w)=0$, the \pc-comparison between $p$ and $q$ with respect to $\succ_i$ reduces exactly to the above equation. Hence, $p\succsim_i^\pc q$ and $q\succsim_i^\pc p$ if $p(w)=q(w)=0$. We therefore suppose that $p(w)=q(w)>0$ and proceed with a case distinction with respect to $r(\succ_i, w)$. In this analysis, we use $\Delta_{p\rightarrow q}^{xyz}=p(x)\Big(q(z)+q(y)\Big)+p(z)q(y)$ and $\Delta_{q\rightarrow p}^{xyz}=q(x)\Big(p(z)+p(y)\Big)+q(z)p(y)$ for a shorter notation.
    \begin{itemize}
        \item First, suppose that $r(\succ_i, w)=1$, i.e., $\succ_i=w,x,z,y$. Since $p(w)=q(w)$, $p(x)+p(y)+p(z)=1-p(w)$, and $q(x)+q(y)+q(z)=1-q(w)$, it is easy to verify that $p(w)\Big(q(x)+q(y)+q(z)\Big)+\Delta_{p\rightarrow q}^{xyz}= q(w)\Big(p(x)+p(y)+p(z)\Big)+\Delta_{q\rightarrow p}^{xyz}$. This implies that $p\succsim_i^\pc q$ and $q\succsim_i^\pc p$.
        \item As the second case, suppose that $r(\succ_i, w)=4$, i.e., $\succ_i=x,z,y,w$. For the same reason as in the first case it follows that $\Delta_{p\rightarrow q}^{xyz}+\Big(p(x)+p(y)+p(z)\Big)q(w)=\Delta_{q\rightarrow p}^{xyz}+\Big(q(x)+q(y)+q(z)\Big)p(w)$ and thus, $p\succsim_i^\pc q$ and $q\succsim_i^\pc p$. \item Next, suppose that $r(\succ_i, w)=2$, i.e., $\succ_i=x,w,z,y$. Then, $p\succ_i^\pc q$ as
        \begin{align*}
            \sum_{u,v\in A\colon u\succ_i v} p(u)q(v) &=p(x)q(w)+p(w)q(y)+p(w)q(z)+\Delta_{p\rightarrow q}^{xyz}\\
            &=p(x)p(w)+p(w)\Big(p(y)+p(z)+\frac{\epsilon}{p(x)+p(z)}\Big)+\Delta_{p\rightarrow q}^{xyz}\\
            &=\Big(p(x)+\frac{\epsilon}{p(x)+p(z)}\Big)p(w)+p(w)\Big(p(y)+p(z)\Big)+\Delta_{p\rightarrow q}^{xyz}\\
            &>q(x)p(w)+q(w)\Big(p(y)+p(z)\Big)+\Delta_{q\rightarrow p}^{xyz}\\
            &=\sum_{u,v\in A\colon u\succ_i v} q(u)p(v).
        \end{align*}
        \item As the last case, suppose that $r(\succ_i, w)=3$, i.e., $\succ_i = x,z,w,y$. In this case, a symmetric inequality as in the previous case proves that $q\succ_i^\pc p$.\qedhere
    \end{itemize}
	\end{proof}
	
	Note that \Cref{lem:pcpref} also holds for all lotteries $q$ and $p$ with $q(z)>0$ and $p(x)=q(x)+\frac{\epsilon'}{q(x)+q(z)}$, $p(y)=q(y)+\frac{\epsilon'}{q(y)+q(z)}$, $p(z)=q(z)-\frac{\epsilon'}{q(x)+q(z)}-\frac{\epsilon'}{q(y)+q(z)}$, and $p(w)=q(w)$ ($\epsilon'>0$ is again sufficiently small so that $p$ is a well-defined lottery on $\{w,x,y,z\}$). The reason for this is that $q(x)=p(x)-\frac{\epsilon}{p(x)+p(z)}$, $q(y)=p(y)-\frac{\epsilon}{p(y)+p(z)}$, and $q(z)=p(z)+\frac{\epsilon}{p(x)+p(z)}+\frac{\epsilon}{p(y)+p(z)}$ for $\epsilon=\epsilon'-\frac{\epsilon'^2}{(q(x)+q(z))\cdot(q(y)+q(z))}$. For instance, this follows for $p(x)$ from the following equation.  
	\begin{align*}
	    &p(x)-\frac{\epsilon}{p(x)+p(z)} =p(x)-\frac{\epsilon'-\frac{\epsilon'^2}{(q(x)+q(z))\cdot(q(y)+q(z))}}{q(x)+q(z)-\frac{\epsilon'}{q(y)+q(z)}}\\
	    &=p(x)-\frac{\epsilon'(q(x)+q(z))-\frac{\epsilon'^2}{q(y)+q(z)}}{(q(x)+q(z))\cdot(q(x)+q(z)-\frac{\epsilon'}{q(y)+q(z)})}=p(x)-\frac{\epsilon'}{q(x)+q(z)}=q(x)
	\end{align*}

	Finally, we are now ready to prove \Cref{lem:AWPSP}. We prove this statement separately for electorates with an odd number of voters and for those with an even number of voters.
	
	\begin{customlemma}{1a)}\label{lem:SPEffnodd}
	    Every \pc-efficient SDS that satisfies the absolute winner property is \pc-manipulable if $|N|\geq 3$ is odd and $m\geq 4$.
	\end{customlemma}
	\begin{proof}
	    Consider an arbitrary electorate $N\in\mathcal{F}(\mathbb{N})$ with an odd number of voters $n=|N|\geq 3$ and suppose there are $m\geq 4$ alternatives. We assume for contradiction that there is an \pc-efficient SDS $f$ that satisfies the absolute winner property and \pc-strategyproofness for $N$. 
	    In the sequel, we will focus on profiles on the alternatives $\{a,b,c,d\}$; all other alternatives are always ranked below these alternatives and therefore Pareto-dominated. Hence, \pc-efficiency entails for all subsequent profiles that $f(R,x)=0$ for all $x\in A\setminus \{a,b,c,d\}$, which means that these alternatives do not affect our further analysis. In slight abuse of notation, we therefore assume that $A=\{a,b,c,d\}$.
	    
	    We derive a contradiction by focusing on the profiles $R$ and $R'$ shown below. Specifically, our goal is to show that $f(R,a)=f(R,b)=f(R,c)=\frac{1}{3}$ and $f(R',a)=f(R',c)=f(R',d)=\frac{1}{3}$. This implies that voter $\frac{n+1}{2}$ can \pc-manipulate by switching from $R'$ to~$R$ as he even \sd-prefers $f(R)$ to $f(R')$, i.e., these claims result in a contradiction to \pc-strategyproofness.
	    
        \begin{profile}
		    $R$: & $\set{1}{\frac{n-1}{2}}$: $a,d,b,c$ & $\frac{n+1}{2}$: $b,c,d,a$
		    & $\set{\frac{n+3}{2}}{n}$: $c,a,d,b$\\
		    $R'$: & $\set{1}{\frac{n-1}{2}}$: $a,d,b,c$ & $\frac{n+1}{2}$: $b,d,c,a$
		    & $\set{\frac{n+3}{2}}{n}$: $c,a,d,b$
	    \end{profile}
	    
	    \textbf{Claim 1: $f(R,a)=f(R,b)=f(R,c)=\frac{1}{3}$}
	    
	    For proving this claim, our first goal is to establish that $f(R,c)>0$. Hence, assume for contradiction that this is not the case, i.e., $f(R,c)=0$. For deriving a contradiction to this assumption, we consider the profiles $R^1$ and $R^2$ shown below.
	    
	    \begin{profile}
	           $R^1$: & $\set{1}{\frac{n-1}{2}}$: $a,d,b,c$ & $\frac{n+1}{2}$: $b,c,d,a$
		    & $\set{\frac{n+3}{2}}{n-1}$: $c,a,d,b$ & $n$: $a,c,d,b$\\
		    $R^2$: & $\set{1}{\frac{n-1}{2}}$: $a,d,b,c$ & $\frac{n+1}{2}$: $c,b,d,a$
		    & $\set{\frac{n+3}{2}}{n}$: $c,a,d,b$\\
	    \end{profile}
	    
	    First, note that $a$ is top-ranked by more than half of the voters in $R^1$ and $c$ by more of half of the voters in $R^2$. Hence, the absolute winner property requires that $f(R^1, a)=f(R^2,c)=1$. On the other hand, $R^1$ is derived from $R$ by letting voter $n$ swap $a$ and $c$. Hence, \pc-strategyproofness, or more precisely \pcone-strategyproofness, from $R$ to $R^1$ implies that $f(R,c)\geq f(R, b)+f(R,d)$. Because we assume that $f(R,c)=0$, this means that $f(R,b)=f(R,d)=0$ and $f(R,a)=1$. On the other hand, the profile~$R^2$ is derived from $R$ by letting voter $\frac{n+1}{2}$ swap $b$ and $c$. Hence, \pc-strategyproofness requires that $f(R,b)\geq f(R,a)+f(R,d)$, which conflicts with $f(R,a)=1$. Thus, the initial assumption that $f(R,c)=0$ is incorrect, i.e., it holds that $f(R,c)>0$. 
	    
	    Departing from this insight, \pc-efficiency entails that $f(R,d)=0$. In more detail, \Cref{lem:pcpref} proves that every lottery $q$ with $q(d)>0$ and $q(c)>0$ is \pc-inefficient for $R$ because it is dominated by the lottery $p$ with $p(a)=q(a)+\frac{\epsilon}{q(a)+q(d)}$, $p(b)=q(b)+\frac{\epsilon}{q(b)+q(d)}$, $p(c)=q(c)$, and $p(d)=q(d)-\frac{\epsilon}{q(a)+q(d)}-\frac{\epsilon}{q(b)+q(d)}$. Indeed, Case 3a) of this lemma shows that all voters but $\frac{n+1}{2}$ are indifferent between $p$ and $q$, whereas Case~3b) implies that voter $\frac{n+1}{2}$ strictly prefers $p$ to $q$ (see also the text after \Cref{lem:pcpref}). Since we already know that $f(R,c)>0$, it follows therefore that $f(R,d)=0$.
	    
	    Next, note that the inequalities derived from \pc-strategy\-proofness on $R^1$ and $R^2$ remain valid, even if $f(R,c)>0$. Combined with the fact that $f(R,d)=0$, this means that $f(R,c)\geq f(R,b)\geq f(R,a)$. Hence, we prove Claim 1 by showing that $f(R,a)\geq f(R,c)$. Consider for this the profiles $\bar R^i$ for $i\in \{0,\dots, \frac{n-1}{2}\}$, which are defined as follows.
	    
	    \begin{profile}
	        $\bar R^i$: & $\set{1}{i}$: $a,d,b,c$ & $\set{i+1}{\frac{n-1}{2}}$: $b,a,d,c$
	        & $\frac{n+1}{2}$: $b,c,d,a$ & $\set{\frac{n+3}{2}}{n}$: $c,a,d,b$
	    \end{profile}
	    
	    First, note that $\bar R^{\frac{n-1}{2}}=R$ and that $f(\bar R^0, b)=1$ because $\frac{n+1}{2}$ voters report $b$ as their favorite alternative in this profile. Furthermore, \Cref{lem:pcpref} shows that $f(\bar R^i, d)=0$ for every profile $\bar R^i$ with $i<\frac{n-1}{2}$ because all lotteries $q$ with $q(d)>0$ fail \pc-efficiency for $\bar R^i$. Indeed, the lottery $p$ with $p(a)=q(a)+\frac{\epsilon}{q(a)+q(d)}$, $p(b)=q(b)+\frac{\epsilon}{q(b)+q(d)}$, $p(c)=q(c)$, and $p(d)=q(d)-\frac{\epsilon}{q(a)+q(d)}-\frac{\epsilon}{q(b)+q(d)}$ \pc-dominates $q$. Finally, by a repeated application of \Cref{lem:3alts}, we derive that $f(R,a)\geq f(R,c)$. To this end, consider a fixed index $i\in \{1,\dots, \frac{n-1}{2}\}$. If $f(\bar R^i,a)<f(\bar R^i,c)$, this lemma requires that $f(\bar R^{i-1},a)<f(\bar R^{i-1},c)$. Hence, if $f(R,a)<f(R,c)$, we can repeatedly apply this argument to derive that $f(\bar R^0, a)<f(\bar R^0,c)$. However, this contradicts the absolute winner property, and thus we must have that $f(R,a)\geq f(R,c)$. This proves Claim~1.\bigskip

	    \textbf{Claim 2: $f(R',a)=f(R',c)=f(R',d)=\frac{1}{3}$}
	    
	    As the second claim, we prove that $f$ assigns a probability of $\frac{1}{3}$ to $a$, $c$, and $d$ in $R'$. For this, we proceed analogously to Claim 1 and first show that $f(R',c)>0$. Assume for contradiction that $f(R',c)=0$ and consider the profiles $R^1$ and $R^2$ shown below. 
	    
	    \begin{profile}
		    $R^1$: & $\set{1}{\frac{n-1}{2}}$: $a,d,b,c$ & $\frac{n+1}{2}$: $b,d,c,a$
		    & $\set{\frac{n+3}{2}}{n-1}$: $c,a,d,b$ & $n$: $a,c,d,b$\\
		    $R^2$: & $\set{1}{\frac{n-1}{2}}$: $a,d,b,c$ & $\frac{n+1}{2}$: $c,d,b,a$
		    & $\set{\frac{n+3}{2}}{n}$: $c,a,d,b$
	    \end{profile}
	    
	    First, note that $f(R^1,a)=1$ and $f(R^2, c)=1$ because of the absolute winner property. Next, observe that $R^1$ is derived from $R'$ by letting voter $n$ swap $a$ and $c$. Hence, \pc-strategyproofness requires that $f(R',c)\geq f(R',b)+f(R',d)$. Since $f(R',c)=0$ by assumption, it follows that $f(R',b)=f(R',d)=0$ and $f(R',a)=1$. On the other hand, we derive $R^2$ from $R'$ by letting voter $\frac{n+1}{2}$ deviate. Hence, \pc-strategyproofness implies that $f(R',b)+f(R',d)\geq f(R',a)$, which contradicts $f(R',a)=1$. This shows that the initial assumption $f(R',c)=0$ is wrong, i.e., it must be that $f(R',c) > 0$.
	    
	    As the next step, we will infer from \pc-efficiency that $f(R', b)=0$. Assume that this is not the case, i.e., there is a \pc-efficient lottery $p$ with $p(b)>0$ and $p(c)>0$. Now, if $p(a)>0$, then $p$ is \pc-dominated by the lottery $q$ with $q(a)=p(a)-\frac{\epsilon}{p(a)+p(d)}$, $q(b)=p(b)-\frac{\epsilon}{p(b)+p(d)}$, $q(c)=p(c)$, and $q(d)=p(d)+\frac{\epsilon}{p(a)+p(d)}+\frac{\epsilon}{p(b)+p(d)}$ (where $\epsilon>0$ is so small that $q$ is a well-defined lottery). On the other hand, if $p(a)=0$, then $p$ is \pc-dominated by the lottery $q$ with $q(a)=p(a)$, $q(b)=p(b)-\frac{\epsilon}{p(b)+p(d)}$, $q(c)=p(c)-\frac{\epsilon}{p(c)+p(d)}$, 
 and $q(d)=p(d)+\frac{\epsilon}{p(b)+p(d)}+\frac{\epsilon}{p(c)+p(d)}$. Both of these claims are straightforward to verify with \Cref{lem:pcpref}. Since $p$ is \pc-inefficient in both cases, it follows that $f(R',b)=0$. 
	    
	    Just as for $R$, we can use the fact that $f(R',b)=0$ to simplify the inequalities derived from \pc-strategyproofness on $R^1$ and $R^2$. In particular, we infer that $f(R',c)\geq f(R',d)\geq f(R',a)$ from these observations. Hence, Claim 2 will follow by showing that $f(R',a)\geq f(R',c)$. Consider for this the profiles $\bar R^i$ for $i\in \{0,\dots, \frac{n-1}{2}\}$, which are defined as follows.
	    
	    \begin{profile}
		    $\bar R^{i}$: & $\set{1}{i}$: $a,d,b,c$ & $\set{i+1}{\frac{n-1}{2}}$: $d,a,b,c$ 
		    & $\frac{n+1}{2}$: $b,d,c,a$ & $\set{\frac{n+3}{2}}{n}$: $c,a,d,b$
			\end{profile}
			
		First, observe that $f(\bar R^i, b)=0$ for all $i\in \{0,\dots, \frac{n-3}{2}\}$ because of \pc-efficiency and \pc-strategyproofness. Indeed, assume for contradiction that this is not true, i.e., $f(\bar R^i, b)>0$ for some $i\in \{0,\dots, \frac{n-3}{2}\}$. First, we show that this assumption implies that $f(\bar R^i,a)=0$ and $f(\bar R^i, c)=0$ because of \pc-efficiency. For this, note that every lottery $p$ with $p(a)>0$ and $p(b)>0$ is \pc-dominated by the lottery $q$ with $q(a)=p(a)-\frac{\epsilon}{p(a)+p(d)}$, $q(b)=p(b)-\frac{\epsilon}{p(b)+p(d)}$, $q(c)=p(c)$, and $q(d)=p(d)+\frac{\epsilon}{p(a)+p(d)}+\frac{\epsilon}{p(b)+p(d)}$ in $\bar R^i$. Indeed, \Cref{lem:pcpref} shows that the voters $j\in\{i+1, \dots, \frac{n-1}{2}\}$ strictly \pc-prefer $q$ to $p$ and all other voters at least weakly \pc-prefer $q$ to $p$. Moreover, using again \Cref{lem:pcpref}, it is easy to see that every lottery $p$ with $p(a)=0$, $p(b)>0$, and $p(c)>0$ is \pc-dominated by the lottery $q$ with $q(a)=0$, $q(b)=p(b)-\frac{\epsilon}{p(b)+p(d)}$, $q(c)=p(c)-\frac{\epsilon}{p(c)+p(d)}$, 
 and $q(d)=p(d)+\frac{\epsilon}{p(b)+p(d)}+\frac{\epsilon}{p(c)+p(d)}$. Hence, if $f(\bar R^i,b)>0$, we derive that $f(\bar R^i,a)=0$ and $f(\bar R^i,c)=0$, which means that $\supp(f(\bar R^i))\subseteq \{b,d\}$. However, this entails that one of the voters $j\in \{\frac{n+1}{2},\dots,n\}$ can \pc-manipulate. Consider for this the subsequent preference profiles $\bar R^{i,j}$ for $j\in \{\frac{n+1}{2},\dots,n\}$ and $i\in \{0,\dots, \frac{n-3}{2}\}$.
		
		\begin{profile}
		    $\bar R^{i,j}$: & $\set{1}{i}$: $a,d,b,c$ & $\set{i+1}{\frac{n-1}{2}}$: $d,a,b,c$
		    & $\frac{n+1}{2}$: $b,d,c,a$\\ & $\set{\frac{n+3}{2}}{j}$: $c,a,d,b$ 
		    & $\set{j+1}{n}$: $d,a,b,c$
			\end{profile}
			
	    Note that $\bar R^{i,n} = \bar R^{i}$, and that $f(\bar R^{i,\frac{n+1}{2}},d)=1$ because more than half of the voters report $d$ as their favorite choice. On the other hand, we claim that for $j\in \{\frac{n+3}{2},\dots,n\}$, if $f(\bar R^{i,j},b)>0$, then $f(\bar R^{i, j-1}, b)>0$. Observe for this that the voter types in $\bar R^i$ and $\bar R^{i,j}$ coincide, and thus \pc-efficiency also requires that $f(\bar R^{i,j},a)=f(\bar R^{i,j},c)=0$ if $f(\bar R^{i,j},b)>0$. Moreover, \pc-strategyproofness requires that the deviating voter $j$ \pc-prefers $f(\bar R^{i,j})$ to $f(\bar R^{i,j-1})$. Now, if $f(\bar R^{i,j-1},b)=0$, deviating to $\bar R^{i,j-1}$ is a \pc-manipulation for voter $j$ because $\supp(f(\bar R^{i,j}))$ consists of his worst two alternatives. Hence, $f(\bar R^{i,j},b)>0$ implies that $f(\bar R^{i,j-1},b)>0$ and, by repeatedly applying this argument, we infer that if $f(\bar R^{i},b)>0$, then $f(\bar R^{i, \frac{n+1}{2}}, b)>0$. However, this contradicts $f(\bar R^{i, \frac{n+1}{2}}, d)=1$, so we have that $f(\bar R^i,b)=0$ for all $i\in \{0, \dots, \frac{n-3}{2}\}$.
	    
	    In particular, this argument proves also for the profile $\bar R^0$ that $f(\bar R^0,b)=0$. We show next that $f(\bar R^0,d)=1$. Consider for this the profile $\hat R$ derived from $\bar R^0$ by letting voter~$\frac{n+1}{2}$ swap $b$ and $d$. 
	    
	     \begin{profile}
		    $\hat R$: & $\set{1}{\frac{n-1}{2}}$: $d,a,b,c$ 
		    & $\frac{n+1}{2}$: $d,b,c,a$ & $\set{\frac{n+3}{2}}{n}$: $c,a,d,b$
			\end{profile}
	    
	    We have that $f(\hat R,d)=1$ because of the absolute winner property. Hence, \pc-strategyproofness requires that $f(\bar R^0,b)\geq f(\bar R^0, c)+f(\bar R^0,a)$. Since $f(\bar R^0,b)=0$, this means that $f(\bar R^0,c)=f(\bar R^0,a)=0$ and therefore $f(\bar R^0,d)=1$. Based on this observation, we can now use \Cref{lem:3alts} to derive that $f(R', a)\geq f(R',c)$. Consider for this an index $i\in \{1,\dots, \frac{n-1}{2}\}$ and suppose that $f(\bar R^{i-1},a)\geq f(\bar R^{i-1},c)$. The contraposition of \Cref{lem:3alts} shows that $f(\bar R^i,a)\geq f(\bar R^i,c)$ because the deviating voter~$i$ prefers $a$ to $d$ to $c$. Finally, since $f(\bar R^0,a)=f(\bar R^0,c)=0$, repeatedly applying the previous argument and noting that $R' = \bar R^{\frac{n-1}{2}}$, we obtain $f(R',a)\geq f(R',c)$.
	    This establishes Claim~2.
	\end{proof}
	
	Next, we turn to the proof for electorates with an even number of voters. Note that the proof follows a similar structure but requires more involved arguments because we cannot change the absolute winner by only modifying a single preference relation.
	
	\begin{customlemma}{1b)}\label{lem:SPEffneven}
		Every \pc-efficient SDS that satisfies the absolute winner property is \pc-manipulable if $|N|\geq 8$ is even and $m\geq 4$.
	\end{customlemma}
	\begin{proof}
	    Consider an arbitrary electorate $N\in \mathcal{F}(\mathbb{N})$ with $n=|N|\geq 8$ even and assume for contradiction that there is an SDS $f$ for $m\geq 4$ alternatives that satisfies \pc-efficiency, \pc-strategyproofness, and the absolute winner property on $N$. We focus on the case $m=4$ because we can generalize the constructions to larger values of $m$ by simply ranking the additional alternatives at the bottom. Then, \pc-efficiency requires that these alternatives obtain probability~$0$ and they therefore do not affect our analysis. We derive a contradiction by analyzing the following two profiles.

	    \begin{profile}
		    $R$: & $\set{1}{\frac{n}{2}-1}$: $a,d,b,c$ & $\{\frac{n}{2}, \frac{n}{2}+1\}$: $b,c,d,a$
		    & $\set{\frac{n}{2}+2}{n}$: $c,a,d,b$\\
		    $R'$: & $\set{1}{\frac{n}{2}-1}$: $a,d,b,c$ & $\{\frac{n}{2}, \frac{n}{2}+1\}$: $b,d,c,a$
		    & $\set{\frac{n}{2}+2}{n}$: $c,a,d,b$
	    \end{profile}
		
		In more detail, we show in Claims~1 and 2 that $f(R,a)=f(R,b)=f(R,c)=\frac{1}{3}$ and $f(R',a)=f(R',c)=f(R',d)=\frac{1}{3}$. These two claims are in conflict with \pc-strategyproofness, as the following analysis shows. Let $R''$ denote the profile ``between'' $R$ and $R'$ in which voter $\frac{n}{2}$ reports $b,d,c,a$ and voter $\frac{n}{2}+1$ reports $b,c,d,a$. 
		
		\begin{profile}
		    $R''$: & $\set{1}{\frac{n}{2}-1}$: $a,d,b,c$ & $\frac{n}{2}$: $b,d,c,a$ &  $\frac{n}{2}+1$: $b,c,d,a$ & $\set{\frac{n}{2}+2}{n}$: $c,a,d,b$
		\end{profile}
		
		Moreover, let $p=f(R)$, $q=f(R')$, and $r=f(R'')$ denote the outcome of $f$ in these profiles. \pc-strategyproofness from $R'$ to $R''$ results in the following inequality because $q(a) = q(c) = q(d)=\frac{1}{3}$.
		\begin{align*}
		    & q(b)\Big(r(d)+r(c)+r(a)\Big)+q(d)\Big(r(c)+r(a)\Big)+q(c)r(a)\\
		    & \geq  r(b)\Big(q(d)+q(c)+q(a)\Big)+r(d)\Big(q(c)+q(a)\Big)+r(c)q(a)\\
			\iff &\frac{1}{3}r(c)+\frac{2}{3}r(a)\geq r(b) + \frac{2}{3} r(d) + \frac{1}{3} r(c)
			\iff  r(a)\geq \frac{3}{2} r(b) + r(d)
		\end{align*}
		
		Moreover, we can also use \pc-strategyproofness from $R''$ to $R$ and the fact that $p(a) = p(b) = p(c) = \frac{1}{3}$ to infer the following inequality. 
		\begin{align*}
		& r(b)\Big(p(d)+p(c)+p(a)\Big)+r(d)\Big(p(c)+p(a)\Big)+r(c)p(a)\\
		    &\geq  p(b)\Big(r(d)+r(c)+r(a)\Big)+p(d)\Big(r(c)+r(a)\Big)+p(c)r(a)\\
	     \iff &\frac{2}{3}r(b) + \frac{2}{3}r(d) + \frac{1}{3}r(c) \geq \frac{1}{3}r(d) + \frac{1}{3}r(c) + \frac{2}{3} r(a)
			\iff  r(b) + \frac{1}{2} r(d) \geq  r(a)
		\end{align*}
		
		Combining these two inequalities entails that $r(b)+\frac{1}{2}r(d)\geq \frac{3}{2}r(b) + r(d)$, which is true only if $r(b) = r(d)=0$. Moreover, the second inequality upper bounds $r(a)$ and thus $r(a)=0$. This means that $f(R'',c)=r(c)=1$. However, $c$ is the worst alternative of the voters $\seti{1}{\frac{n}{2}-1}$ and \pc-strategyproofness hence requires that these voters cannot affect the outcome by misreporting their preferences. On the other hand, if we let these voters one after another change their preference relation to $b,d,a,c$, we arrive at a profile in which $b$ is top-ranked by more than half of the voters. Hence, the absolute winner property requires that $b$ is chosen with probability $1$, which is in conflict with the observation that these voters are not able to affect the outcome. This is the desired contradiction.
		Hence, to complete the proof of Lemma~\ref{lem:SPEffneven}, it remains to show the claims for $f(R)$ and $f(R')$.\bigskip
		
		\textbf{Claim 1: $f(R,a)=f(R,b)=f(R,c)=\frac{1}{3}$}
		
		Just as in the case of odd $n$, our first goal is to prove that $f(R,d)=0$. As the first step in proving this statement, we show that $f(R,a)<1$. Hence, assume for contradiction that $f(R,a)=1$, which means that the least preferred lottery of voters $\frac{n}{2}$ and $\frac{n}{2}+1$ is chosen. Moreover, if both of these voters swap $b$ and $c$, $c$ is top-ranked by more than half of the voters, so the absolute winner property requires $c$ to receive a probability of $1$. This is, however, in conflict with \pc-strategyproofness, which requires that these voters cannot affect the outcome. Hence, the assumption that $f(R,a)=1$ must have been wrong, i.e., $f(R,a)<1$.
		
		Based on this insight, we show by contradiction that $f(R,c)>0$, i.e., suppose that $f(R,c)=0$. Next, consider the profiles $R^1$ and $R^2$ shown below.
		
		\begin{profilenarrow}
			$R^1$: & $\set{1}{\frac{n}{2}-1}$: $a,d,b,c$ & $\{\frac{n}{2}, \frac{n}{2}+1\}$: $b,c,d,a$ 
		    & $\set{\frac{n}{2}+2}{n-1}$: $c,a,d,b$ & $n$: $a,c,d,b$\\
		    $R^2$: & $\set{1}{\frac{n}{2}-1}$: $a,d,b,c$ & $\{\frac{n}{2}, \frac{n}{2}+1\}$: $b,c,d,a$ & $\set{\frac{n}{2}+2}{n-2}$: $c,a,d,b$ & $\{n-1, n\}$: $a,c,d,b$
		\end{profilenarrow}
		
		Alternative $a$ is top-ranked by $\frac{n}{2}+1$ voters in $R^2$, which means that $f(R^2,a)=1$ because of the absolute winner property. Now, using \pc-strategyproofness (or more precisely \pcone-strategyproofness) from $R^1$ to $R^2$, we derive that $f(R^1,c)\geq f(R^1,d)+f(R^1,b)$. In particular, this inequality requires that $f(R^1,a)=1$ if $f(R^1,c)=0$. However, in that case, voter $n-1$ can \pc-manipulate by deviating from $R$ to $R^1$: since $f(R,c)=0$ and $f(R,a)<1$, it follows that $f(R^1)\succ_{n-1}^\sd f(R)$ and therefore also $f(R^1)\succ_{n-1}^\pc f(R)$. Hence, it must hold that $f(R^1,c)>0$. Next, we use \pc-strategyproofness from $R$ to $R^1$ and vice versa to derive the following two inequalities, where $p=f(R)$ and $q=f(R^1)$.
		\begin{align*}
		   & p(c)\Big(q(a)+q(d)+q(b)\Big)+p(a)\Big(q(d)+q(b)\Big)+p(d)q(b)\\
		   &\qquad\qquad\qquad\geq q(c)\Big(p(a)+p(d)+p(b)\Big)+q(a)\Big(p(d)+p(b)\Big)+q(d)p(b)\\
		   & q(a)\Big(p(c)+p(d)+p(b)\Big)+q(c)\Big(p(d)+p(b)\Big)+q(d)p(b)\\
		   &\qquad\qquad\qquad\geq p(a)\Big(q(c)+q(d)+q(b)\Big)+p(c)\Big(q(d)+q(b)\Big)+p(d)q(b)
		\end{align*}
		
		Adding these two inequalities and cancelling common terms yields $p(c)q(a)\geq q(c)p(a)$. Since $p(c)=0$ by assumption and $q(c)>0$ because of our previous analysis, this inequality can only be true if $p(a)=0$. Using the facts that $p(c)=p(a)=0$ and $q(c)\geq q(b)+q(d)$, we can therefore vastly simplify the first inequality. 
		\begin{align*}
		   p(d)q(b)&\geq q(c)+q(a)+q(d)p(b)\\
		   &\geq q(d)+q(b)+q(a)+q(d)p(b)
		\end{align*}
		
		It is easy to see that this inequality can only be true if $f(R,d)=p(d)=1$. We now derive a contradiction to this insight. First, from $R$, let voters $\frac{n}{2}$ and $\frac{n}{2}+1$ make $d$ into their favorite alternative. This leads to the profile $R^3$ (see below) and \pc-strategyproofness (one step at a time) requires that $f(R^3,d)=1$. Moreover, note that $d$ Pareto-dominates $b$ in $R^3$. Next, we let voters $n-1$ and $n$ swap $a$ and $c$ to obtain the profile $R^4$. \pc-efficiency requires that $f(R^4,b)=0$ as this alternative is still Pareto-dominated, and \pc-strategyproofness requires in turn that $f(R^4,d)=1$ as any other lottery with support $\{a,c,d\}$ yields a \pc-manipulation for voters $n-1$ and $n$. However, this contradicts the absolute winner property as $\frac{n}{2}+1$ voters report $a$ as their favorite alternative in $R^4$, so it cannot be the case that $f(R,d)=1$. Thus, no feasible outcome for $f(R)$ remains, which demonstrates that the assumption that $f(R,c)=0$ is wrong. That is, we must have $f(R,c) > 0$.
		
		\begin{profilenarrow}
		    $R^3$: & $[1\dots \frac{n}{2}-1]$: $a,d,b,c$ & $\{\frac{n}{2}, \frac{n}{2}+1\}$: $d,b,c,a$  & $[\frac{n}{2}+2\dots n]$: $c,a,d,b$\\
		    $R^4$: & $[1\dots \frac{n}{2}-1]$: $a,d,b,c$ & $\{\frac{n}{2}, \frac{n}{2}+1\}$: $d,b,c,a$  & $[\frac{n}{2}+2\dots n-2]$: $c,a,d,b$ & $\{n-1, n\}$: $a,c,d,b$
		\end{profilenarrow}
		
		As the next step, we can use an analogous argument as in \Cref{lem:SPEffnodd} to derive that $f(R,d)=0$ due to \pc-efficiency. Indeed, this follows immediately since the profile $R$ here and in the proof of \Cref{lem:SPEffnodd} consists of the same voter types and $f(R, c)>0$.
		
		Based on this insight, we show now that $f(R,c)\geq f(R,b)\geq f(R,a)\geq f(R,c)$, which implies that $f(R,a)=f(R,b)=f(R,c)=\frac{1}{3}$. For the first inequality, consider the profiles $R^5$ and $R^6$. 
		
		\begin{profilenarrow}
		       $R^5$: & $[1\dots \frac{n}{2}-1]$: $a,d,b,c$ & $\{\frac{n}{2}, \frac{n}{2}+1\}$: $b,c,d,a$
		    & $[\frac{n}{2}+2\dots n-1]$: $c,a,d,b$ & $n$: $a,c,b,d$\\
		    $R^6$: & $[1\dots \frac{n}{2}-1]$: $a,d,b,c$ & $\{\frac{n}{2}, \frac{n}{2}+1\}$: $b,c,d,a$
		    & $[\frac{n}{2}+2\dots n-2]$: $c,a,d,b$ & $\{n-1, n\}$: $a,c,b,d$
		\end{profilenarrow}
		
		\Cref{lem:pcpref} and \pc-efficiency imply that $f(R^5,d)=0$ as we can otherwise find a lottery that $\pc$-dominates $f(R^5)$ by redistributing probability from $d$ to $a$ and $b$. Moreover, the absolute winner property shows that $f(R^6,a)=1$. In particular, $f(R,d)=f(R^5,d)=f(R^6,d)=0$ and we can thus apply \Cref{lem:3alts} twice to derive that $f(R^5, c)\geq f(R^5, b)$ and $f(R,c)\geq f(R,b)$ since $f(R^6,c)= f(R^6,b)=0$.
		
		Next, we show that $f(R,b)\geq f(R,a)$. Consider for this the profiles $R^7$ and $R^8$ derived from $R$ by replacing the preference relations of voters $\frac{n}{2}$ and $\frac{n}{2}+1$ sequentially with $c,a,b,d$.
		
		\begin{profilenarrow}
		       $R^7$: & $[1\dots \frac{n}{2}-1]$: $a,d,b,c$ & $\frac{n}{2}$: $c,a,b,d$ &  $\frac{n}{2}+1$: $b,c,d,a$ & $[\frac{n}{2}+2\dots n]$: $c,a,d,b$\\
		    $R^8$: & $[1\dots \frac{n}{2}-1]$: $a,d,b,c$ & $\{\frac{n}{2}, \frac{n}{2}+1\}$: $c,a,b,d$ & $[\frac{n}{2}+2\dots n]$: $c,a,d,b$
		\end{profilenarrow}
		
		First, observe that $f(R^8,c)=1$ as all voters $\seti{\frac{n}{2}}{n}$ report $c$ as their best choice. Furthermore, \Cref{lem:pcpref} and \pc-efficiency show that every lottery $q$ with $q(d)>0$ is \pc-inefficient in $R^7$ because we can find a \pc-improvement by redistributing the probability of $d$ to $a$ and $b$.  Hence, $f(R,d)=f(R^7,d)=f(R^8,d)=0$ and applying \Cref{lem:3alts} twice then shows that $f(R,b)\geq f(R,a)$ because $f(R^8,b)= f(R^8,a)=0$. 
		
		Finally, we prove that $f(R,a)\geq f(R,c)$. Consider for this the profiles $\bar R^{k}$ for $k\in \{0, \dots, \frac{n}{2}-1\}$ defined as follows.
		
		\begin{profilenarrow}
	    $\bar R^{k}$: & $[1\dots k]$: $a,d,b,c$ & $[k+1\dots \frac{n}{2}-1]$: $b,a,d,c$ 
	    & $\{\frac{n}{2}, \frac{n}{2}+1\}$: $b,c,d,a$  & $[\frac{n}{2}+2\dots n]$: $c,a,d,b$
		\end{profilenarrow}
		
	 Note that $R=\bar R^{\frac{n}{2}-1}$ and that $f(\bar R^{0},b)=1$ because of the absolute winner property. Moreover, $f(\bar R^{k},d)=0$ for all $k\in \{0,\dots, \frac{n}{2}-2\}$ because of \pc-efficiency: once again, \Cref{lem:pcpref} shows that any lottery $q$ with $q(d)>0$ is \pc-dominated by the lottery $p$ with $p(a)=q(a)+\frac{\epsilon}{q(d)+q(a)}$, $p(b)=q(b)+\frac{\epsilon}{q(d)+q(b)}$, $p(c)=q(c)$, and $p(d)=q(d)-\frac{\epsilon}{q(d)+q(a)}-\frac{\epsilon}{q(d)+q(b)}$
	 (where $\epsilon>0$ is so small that $p(d)\geq 0$). 
	 Hence, $d$ receives probability~$0$ for all of these profiles. 
	 We also know that $f(R,d) = 0$, so $f(\bar R^{k},d)=0$ for all $k\in \{0,\dots, \frac{n}{2}-1\}$.
	 By inductively applying \Cref{lem:3alts}, we derive that $f(R,a)\geq f(R,c)$ because $f(\bar R^{0},a)= f(\bar R^{0},c)=0$.
	 This completes the proof of Claim~1. \bigskip
	 
	 \textbf{Claim 2: $f(R',a)=f(R',c)=f(R',d)=\frac{1}{3}$}
	 
	 For proving this claim, we show as the first step that $f(R',b)=0$. Note for this that an analogous argument as in Claim 1 proves that $f(R',c)>0$. Based on this insight, an analogous argument as in the proof of \Cref{lem:SPEffnodd} shows that $f(R',b)=0$ because of  \pc-efficiency and \Cref{lem:pcpref}. Indeed, this is straightforward as the profile $R'$ here and in the proof of \Cref{lem:SPEffnodd} consists of the same voter types and thus, the same lotteries are \pc-efficient.
	 
	 Using this observation, we show next that $f(R',c)\geq f(R',d)\geq f(R',a)\geq f(R',c)$, which implies that all three alternatives receive a probability of $\frac{1}{3}$. First, we prove that $f(R',c)\geq f(R',d)$ by considering the profiles $R^1$ and $R^2$.
	 
	 \begin{profilenarrow}
		    $R^1$: & $[1\dots \frac{n}{2}-1]$: $a,d,b,c$ & $\{\frac{n}{2}, \frac{n}{2}+1\}$: $b,d,c,a$ 
		    & $[\frac{n}{2}+2\dots n-1]$: $c,a,d,b$ & $n$: $a,d,c,b$\\
		    $R^2$: & $[1\dots \frac{n}{2}-1]$: $a,d,b,c$ & $\{\frac{n}{2}, \frac{n}{2}+1\}$: $b,d,c,a$ 
		    & $[\frac{n}{2}+2\dots n-2]$: $c,a,d,b$ & $\{n-1, n\}$: $a,d,c,b$
		\end{profilenarrow}
	 
	 Note that $f(R^2,a)=1$ because more than half of the voters rank $a$ top in $R^2$. Consequently, \pcone-strategyproofness entails that $f(R^1,c)\geq f(R^1,d)+f(R^1,b)$. Hence, if $f(R^1,c)=0$, then $f(R^1,a)=1$ and another application of \pcone-strategyproofness shows that $f(R',c)\geq f(R',d)$ because $f(R',b)=0$. On the other hand, if $f(R^1,c)>0$, \pc-efficiency requires that $f(R^1,b)=0$. In more detail, every lottery with $q(a)>0$, $q(b)>0$, and $q(c)>0$ is \pc-dominated in $R^1$ by the lottery $p$ with $p(a)=q(a)-\frac{\epsilon}{q(a)+q(d)}$, $p(b)=q(b)-\frac{\epsilon}{q(b)+q(d)}$, $p(c)=q(c)$, and $p(d)=q(d)+\frac{\epsilon}{q(a)+q(d)}+\frac{\epsilon}{q(b)+q(d)}$, whereas every lottery $q$ with $q(a)=0$, $q(b)>0$, and $q(c)>0$ is \pc-dominated in $R^1$ by the lottery $p$ with $p(a)=q(a)$, $p(b)=q(b)-\frac{\epsilon}{q(b)+q(d)}$, $p(c)=q(c)-\frac{\epsilon}{q(c)+q(d)}$, and $p(d)=q(d)+\frac{\epsilon}{q(b)+q(d)}+\frac{\epsilon}{q(c)+q(d)}$ (see \Cref{lem:pcpref}).
	 Hence, we have $f(R',b)=f(R^1,b)=f(R^2,b)=0$ and a repeated application of \Cref{lem:3alts} shows that $f(R',c)\geq f(R',d)$ because $f(R^2,c)= f(R^2,d)=0$. 
	 
	 Next, we derive that $f(R',d)\geq f(R',a)$.
	 To this end, consider the profiles $R^3$ and $R^4$ derived from $R'$ by replacing the preference relations of voters $\frac{n}{2}$ and $\frac{n}{2}+1$ with $c,a,d,b$. 
	 
	  \begin{profilenarrow}
		    $R^3$: & $[1\dots \frac{n}{2}-1]$: $a,d,b,c$ & $\frac{n}{2}$: $c,a,d,b$
		    &  $\frac{n}{2}+1$: $b,d,c,a$ & $[\frac{n}{2}+2\dots n]$: $c,a,d,b$ \\
		    $R^4$: & $[1\dots \frac{n}{2}-1]$: $a,d,b,c$ & $\{\frac{n}{2}, \frac{n}{2}+1\}$: $c,a,d,b$ & $[\frac{n}{2}+2\dots n]$: $c,a,d,b$ 
		\end{profilenarrow}
	 
	 It follows from the absolute winner property that $f(R^4,c)=1$ as more than half of the voters report $c$ as their best alternative. Moreover, we can use the same construction as for $R$ (in Claim 1) to derive that $f(R^3,c)>0$. Indeed, voters $n$ and $n-1$ have the same preference relations in $R$ and $R^3$ and they also can make $a$ into the absolute winner by swapping $a$ and $c$ in $R^3$. Analogously to $R'$, it now follows from \pc-efficiency and \Cref{lem:pcpref} that $f(R^3,b)=0$. Finally, a repeated application of \Cref{lem:3alts} shows that $f(R',d)\geq f(R',a)$ because $f(R',b)=f(R^3,b)=f(R^4,b)=0$ and $f(R^4,d)=f(R^4,a)=0$. 
		
	It remains to show that $f(R',a)\geq f(R',c)$. Consider the sequence of profiles $\hat R^{k}$ for $k\in \{0,\dots, \frac{n}{2}-1\}$ and note that $R'=\hat R^{\frac{n}{2}-1}$, which means that $f(\hat R^{\frac{n}{2}-1}, b)=0$.
		
		\begin{profilenarrow}
	    $\hat R^{k}$: & $[1\dots k]$: $a,d,b,c$ & $[k+1\dots \frac{n}{2}-1]$: $d,a,b,c$
	    & $\{\frac{n}{2}, \frac{n}{2}+1\}$: $b,d,c,a$ & $[\frac{n}{2}+2\dots n]$: $c,a,d,b$
		\end{profilenarrow}
		
		First, we show that $f(\hat R^k, b)=0$ for all other profiles $\hat R^{k}$ with $k\in \{0,\dots,\frac{n}{2}-2\}$. To this end, note that \Cref{lem:pcpref} shows that either $f(\hat R^{k},b)=0$ or $f(\hat R^{k},a)=0$ because otherwise, we can find a \pc-improvement by redistributing probability from $a$ and $b$ to~$d$. Moreover, if $f(\hat R^{k},a)=0$, then \Cref{lem:pcpref} entails that $f(\hat R^{k},b)=0$ or $f(\hat R^{k},c)=0$, because otherwise the probability from $b$ and $c$ can be redistributed to $d$. Now, assume for contradiction that $f(\hat R^{k},b)>0$ for a fixed $k$ and hence $f(\hat R^{k},a)=f(\hat R^{k},c)=0$. We proceed with a case distinction on $k$. First, suppose that $\frac{n}{2}-2\geq k\geq 2$, which means that at least two voters in $\set{1}{k}$ top-rank $a$. For this case, we consider the profiles $\hat R^{k,j}$ for $j\in \{\frac{n}{2}+1,\dots, n\}$.
		
		\begin{profile}
	    $\hat R^{k,j}$: & $[1\dots k]$: $a,d,b,c$ & $[k+1\dots \frac{n}{2}-1]$: $d,a,b,c$  & $\{\frac{n}{2}, \frac{n}{2}+1\}$: $b,d,c,a$ \\& $[\frac{n}{2}+2\dots j]$: $a,d,b,c$ 
	    &$[j+1\dots n]$: $c,a,d,b$
		\end{profile}
		
		It holds by definition that $\hat R^{k}=\hat R^{k,\frac{n}{2}+1}$. Moreover, analogous to $\hat R^k$, \pc-efficiency requires that either $f(\hat R^{k,j},a)+f(\hat R^{k,j},c)=0$ or $f(\hat R^{k,j},b)=0$ for every $j\in \{\frac{n}{2}+1,\dots, n\}$. This implies for every $j$ that if $f(\hat R^{k,j},b)>0$, then $f(\hat R^{k,j+1},b)>0$. In more detail, $f(\hat R^{k,j},b)>0$ requires that $f(\hat R^{k,j},b)+f(\hat R^{k,j},d)=1$ because of \pc-efficiency. Hence, if $f(\hat R^{k,j+1}, b)=0$, voter $j+1$ can \pc-manipulate by deviating from $\hat R^{k,j}$ to $\hat R^{k,j+1}$ since $b$ and $d$ are his worst two alternatives in $\hat R^{k,j}$. By repeatedly applying this argument, it follows that if $f(\hat R^k,b)>0$, then $f(\hat R^{k,n},b)>0$. However, this is in conflict with the absolute winner property as $\frac{n}{2}-1+k$ voters top-rank $a$ in $\hat R^{k,n}$. This proves that $f(\hat R^k,b)=0$ for all $k\in \{2,\dots, \frac{n}{2}-2\}$. 
		
		Note that the argument above fails if $k\leq 1$ as no more than $\frac{n}{2}$ voters top-rank $a$ in $\hat R^{1,n}$ and $\hat R^{0,n}$. Hence, we investigate the case $k\leq 1$ separately and consider for this the profiles $\tilde R^{k,j}$ for $j\in \{\frac{n}{2}+1,\dots, n\}$.
		
		\begin{profile}
			$\tilde R^{k,j}$: & $[1\dots k]$: $a,d,b,c$ & $[k+1\dots \frac{n}{2}-1]$: $d,a,b,c$ 
			& $\{\frac{n}{2}, \frac{n}{2}+1\}$: $b,d,c,a$ \\& $[\frac{n}{2}+2\dots j]$: $d,a,b,c$ 
			&$[j+1\dots n]$: $c,a,d,b$
		\end{profile}
		
		It holds by definition that $\hat R^{k}=\tilde R^{k,\frac{n}{2}+1}$.
		Note that the profiles $\tilde R^{k,j}$ consist of the same preference relations as $\hat R^{k}$ and hence, \pc-efficiency once again requires that $f(\tilde R^{k,j},b)=0$ or $f(\tilde R^{k,j},a)=f(\tilde R^{k,j},c)=0$. 
		(When $j = n$, even though the preference relation $c,a,d,b$ is not present, a similar argument still holds.)
		Moreover, if $f(\tilde R^{k,j},b)>0$, then $f(\tilde R^{k,j+1},b)>0$. The reason for this is that if $f(\tilde R^{k,j},b)>0$, then $f(\tilde R^{k,j},b)+f(\tilde R^{k,j},d)=1$. Hence, if $f(\tilde R^{k,j+1},b)=0$, voter $j+1$ can \pc-manipulate by deviating from $\tilde R^{k,j}$ to $\tilde R^{k,j+1}$ as $b$ and $d$ are his least preferred alternatives in $\tilde R^{k,j}$. By repeatedly applying this argument, we derive that if $f(\hat R^k, b)>0$, then $f(\tilde R^{k,n},b)>0$. However, this contradicts the absolute winner property as at least $(\frac{n}{2}-1-k)+(\frac{n}{2}-1)\geq n-3 > \frac{n}{2}$ voters top-rank $d$ in $\tilde R^{k,n}$. (Here we use the assumption that $n\ge 8$.) Hence, we also have that $f(\hat R^{k},b)=0$ if $k\leq 1$.
		
		As the last point, we prove that $f(\hat R^0,d)=1$. Then, it follows from repeated application of \Cref{lem:3alts} that $f(R',a)\geq f(R',c)$ because $f(\hat R^0,a)=f(\hat R^0,c)=0$. Thus, consider the profiles $R^5$ and $R^6$ derived from $\hat R^{0}$ by sequentially replacing the preference of voter $\frac{n}{2}$ and $\frac{n}{2}+1$ with $d,a,b,c$. 
		
		\begin{profilenarrow}
		$R^5$: & $[1\dots \frac{n}{2}-1]$: $d,a,b,c$ & $\frac{n}{2}$: $d,a,b,c$
		& $\frac{n}{2}+1$: $b,d,c,a$ & $[\frac{n}{2}+2\dots n]$: $c,a,d,b$\\
	    $R^6$: & $[1\dots \frac{n}{2}-1]$: $d,a,b,c$ & $\{\frac{n}{2}, \frac{n}{2}+1\}$: $d,a,b,c$ & $[\frac{n}{2}+2\dots n]$: $c,a,d,b$
		\end{profilenarrow}
		
		Note that an absolute majority top-ranks $d$ in $R^6$, which means that $f(R^6,d)=1$. Hence, \pcone-strategyproofness entails for $R^5$ that $f(R^5,b)\geq f(R^5,c)+f(R^5,a)$. If $f(R^5,b)=0$, we derive then that $f(R^5,d)=1$, and an application of \pcone-strategyproofness between $\hat R^0$ and $R^5$ shows that $f(\hat R^0, b)\geq f(\hat R^0 ,c)+f(\hat R^0, a)$. Since we already established that $f(\hat R^0, b)=0$, this proves that $f(\hat R^0,d)=1$. On the other hand, if $f(R^5,b)>0$, \pc-efficiency requires that $f(R^5,a)=f(R^5,c)=0$ (see \Cref{lem:pcpref}). However, then voter $n$ can \pc-manipulate in $R^5$ by reporting $d$ as his favorite alternative. Thereafter, $d$ must be chosen with probability $1$ because it is top-ranked by $\frac{n}{2}+1$ voters. However, voter $n$ \pc-prefers this lottery to $f(R^5)$ if $f(R^5,b)>0$, which contradicts the \pc-strategyproofness of $f$. Hence, it must indeed hold that $f(R^5,b)=0$ and therefore also $f(R^5,d)=1$ and $f(\hat R^0, d)=1$. Finally, as mentioned earlier in this paragraph, a repeated application of \Cref{lem:3alts} shows now that $f(R',a)\geq f(R',c)$. Therefore, we have that $f(R',c)\geq f(R',d)\geq f(R',a)\geq f(R',c)$ and $f(R',b)=0$, which implies that $f(R',a)=f(R',c)=f(R',d)=\frac{1}{3}$. 
	\end{proof}
	
	Since Lemmas~\ref{lem:SPEffnodd} and \ref{lem:SPEffneven} combined are clearly equivalent to \Cref{lem:AWPSP}, this concludes the first step for the proof of \Cref{thm:EffSP}. Next, we show that every anonymous, neutral, \pc-efficient, and \pc-strategyproof SDS also satisfies the absolute winner property. This insight together with \Cref{lem:AWPSP} immediately implies \Cref{thm:EffSP}. Note that the subsequent lemma is slightly stronger than required: we show that the implication also holds for $m\geq 3$ and all electorates.
    
	\begin{lemma}\label{lem:abswinner}
	Assume that $m\geq 3$. Every SDS that satisfies \pc-efficiency, \pc-strategyproofness, neutrality, and anonymity also satisfies the absolute winner property. 
	\end{lemma}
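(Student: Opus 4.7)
The plan is to follow the strategy of the even case of \Cref{thm:EffSP}: construct a highly symmetric profile $R^1$ on which anonymity, neutrality, and \pc-efficiency force $f(R^1,a)=1$, and then propagate this conclusion to an arbitrary profile with absolute winner $a$ via a sequence of one-voter preference changes, each controlled by \pc-strategyproofness (or, in two steps, by \pc-efficiency applied to a Pareto-dominating alternative). Alternatives outside $\{a,b,c\}$ are kept Pareto-dominated throughout, so that \pc-efficiency pins their probabilities to zero.

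For $R^1$ I would take two voters with preferences $a,b,c,*$ and $a,c,b,*$ (where $*$ lists the remaining alternatives in a fixed order at the bottom), together with matched groups of voters reporting $b,a,c,*$ and $c,a,b,*$. The profile is invariant under swapping $b$ with $c$ combined with an appropriate renaming of the matched voters, so anonymity and neutrality give $f(R^1,b)=f(R^1,c)$; \pc-efficiency rules out probability on the Pareto-dominated fillers; and the degenerate lottery $[a:1]$ \pc-dominates any remaining candidate $(\alpha,\beta,\beta,0,\dots,0)$ with $\beta>0$, because the two $a$-top voters strictly \pcone-prefer it while the matched groups are \pcone-indifferent (their winning and losing masses against $a$ cancel exactly). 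Hence $f(R^1,a)=1$.

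To reach an arbitrary target profile $R$ with absolute winner $a$ from $R^1$, I would proceed in three stages. In stage one, the non-top-rankers of $a$ in $R^1$ that should top-rank $a$ in $R$ move one at a time to some preference that top-ranks $a$; any drop of $f(\cdot,a)$ below $1$ at such a step would let the deviator---who now top-ranks $a$---manipulate back to the previous profile, contradicting \pc-strategyproofness. In stage two, the remaining non-top-rankers are routed through profiles in which some designated alternative (say $c$) Pareto-dominates all of $A\setminus\{a,c\}$; \pc-efficiency then confines the outcome to $\{a,c\}$, and any positive probability on $c$ after a one-voter change would give the deviator, who still prefers $c$ to $a$, a strict \pc-improvement over $[a:1]$, contradicting \pc-strategyproofness again. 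Stage three cleans up: voters top-ranking $a$ can freely permute $A\setminus\{a\}$ via back-and-forth \pc-strategyproofness around $[a:1]$, and voters not top-ranking $a$ can freely change their entire preference because $[a:1]$ is their worst outcome and any lowering of $f(\cdot,a)$ would be a strict \pc-improvement for them. Anonymity and neutrality then lift the conclusion to all profiles with absolute winner $a$.

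The main obstacle is constructing $R^1$ when $|N|$ is small or when the number of non-top-rankers of $a$ is odd and $m=3$, since then a clean $b\leftrightarrow c$-symmetric split of those voters does not exist. I expect to handle this by inserting additional symmetric top-rankers of $a$ into $R^1$ (which do not disturb the symmetry argument because $a$ is their favorite) and removing them during stage two of the transformation; for $m\ge 4$ one can alternatively use a larger symmetry involving three of the non-$a$ alternatives. The boundary cases $|N|\le 2$ and $n_R(a)=|N|$ are disposed of immediately by ex post efficiency, which already forces $f(R,a)=1$ when $a$ is unanimously top-ranked.
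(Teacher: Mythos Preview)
Your plan for even $|N|$ is essentially the paper's even-case argument and is fine. The genuine gap is the odd-$|N|$ case with $m=3$, which you flag as an obstacle but do not actually resolve. With strict preferences, no profile on an odd electorate can be invariant under the transposition $b\leftrightarrow c$: every preference relation $\succ$ satisfies $\sigma(\succ)\neq\succ$ for the swap $\sigma=(b\;c)$, so preferences pair off under $\sigma$ and the multiset of preferences has even cardinality. Hence there is no $R^1$ on an odd $N$ for which anonymity and neutrality alone force $f(R^1,b)=f(R^1,c)$. Your proposed fix of ``inserting additional symmetric top-rankers of $a$ and removing them later'' does not help, because the electorate is fixed throughout: \pc-strategyproofness only lets a voter change her preference, not leave or join, and cancellation is not among the hypotheses. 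The alternative of using a $3$-cycle symmetry for $m\geq 4$ runs into the same parity obstruction (orbits under a $3$-cycle have size $3$, so this only covers $|N|$ divisible by $3$) and in any case does not address $m=3$, which the lemma explicitly includes.

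The paper's proof for odd $|N|$ is accordingly quite different from the symmetric-profile strategy. It takes a profile $R^1$ with $\frac{n-1}{2}$ voters reporting $b,a,c$, one voter reporting $a,b,c$, and $\frac{n-1}{2}$ voters reporting $c,a,b$ (so $R^1$ is \emph{not} $b\leftrightarrow c$-symmetric), together with its $b\leftrightarrow c$-image $R^2$. \pc-efficiency (via the fact that $a$ is never bottom-ranked and once top-ranked) only yields the disjunction $f(R^1,b)=0$ \emph{or} $f(R^1,c)=0$. Ruling out $f(R^1,c)>0$ is a one-step manipulation between $R^1$ and $R^2$, but ruling out $f(R^1,b)>0$ requires a substantial detour: two auxiliary sequences of profiles $(\bar R^i)$ and $(\hat R^i)$ along which \pc-efficiency repeatedly forces $f(\cdot,c)=0$ (so \Cref{lem:2alts} freezes the outcome), until one reaches a profile where the frozen value contradicts what anonymity and neutrality say about $R^1$ versus $R^2$. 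Only after this does one obtain $f(R^1,a)=1$ and can proceed with the propagation stages you describe. Your proposal is missing this entire branch.
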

	\begin{proof}
	    Let $f$ denote an SDS that satisfies anonymity, neutrality, \pc-efficiency, and \pc-strategyproofness for $m\geq 3$ alternatives. First, note that for electorates $N$ with $n=|N|\leq 2$, the absolute winner property requires that an alternative is chosen with probability $1$ if it is top-ranked by all voters. This is clearly implied by \pc-efficiency and we thus focus on the case that $n\geq 3$. Since the construction in the main body (in the proof of \Cref{thm:EffSP}) works for every even $n\geq 4$, we only need to show the lemma for the case that the number of voters is odd. Hence, consider an electorate $N$ with an odd number of voters $n\geq 3$. Moreover, we focus on the case that there are $m=3$ alternatives because we can generalize all steps by simply adding additional alternatives at the bottom of all preference relations. \pc-efficiency then requires that these alternatives get probability~$0$ and they thus do not affect our analysis.  
		
		We start our analysis by considering the profiles $R^1$ and $R^2$ described below.
		
		\begin{profile}
	    $R^1$: & $[1\dots \frac{n-1}{2}]$: $b,a,c$ & $\frac{n+1}{2}$: $a,b,c$
	    & $[\frac{n+3}{2}\dots n]$: $c,a,b$ \\
	    $R^2$: & $[1\dots \frac{n-1}{2}]$: $b,a,c$ & $\frac{n+1}{2}$: $a,c,b$
	    & $[\frac{n+3}{2}\dots n]$: $c,a,b$ 
		\end{profile}
		
		First, note that anonymity and neutrality imply that $f(R^1,a)=f(R^2,a)$, $f(R^1,b)=f(R^2,c)$, and $f(R^1,c)=f(R^2,b)$.
		Furthermore, \pc-efficiency shows that $f(R^1,b)=f(R^2,c)=0$ or $f(R^1,c)=f(R^2,b)=0$. Subsequently, we show that $f(R^1,b)=f(R^1,c)=0$ must be true, which means that $f(R^1,a)=1$.
		
		Assume for contradiction that $f(R^1,c)=f(R^2,b)>0$. Then, our previous observation implies that $f(R^2,c)=f(R^1,b)=0$. However, this means that voter $\frac{n+1}{2}$ can manipulate by deviating from $R^1$ to $R^2$ because he \pc-prefers $f(R^2)$ to $f(R^1)$ (he even \sd-prefers $f(R^2)$ to $f(R^1)$). Hence, $f$ is \pc-manipulable if $f(R^1,c)>0$, contradicting our assumptions. 
		
		As the second case, assume that $f(R^1,b)=f(R^2,c)>0$ (note that this is \emph{not} symmetric to the case studied in the previous paragraph) and consider the following sequence of preference profiles $\bar R^{i}$ for $i\in \{0, \dots, \frac{n-1}{2}\}$.
		
		\begin{profilenarrow}
	    $\bar R^{i}$: & $[1\dots \frac{n-1}{2}]$: $b,a,c$ & $\frac{n+1}{2}$: $a,b,c$
	    & $[\frac{n+3}{2}\dots n-i]$: $c,a,b$ & $[n-i+1\dots n]$ $a,c,b$ 
		\end{profilenarrow}
		
		First, note that $R^1=\bar R^{0}$ and  \pc-efficiency shows for all profiles $\bar R^{i}$ that $f(\bar R^{i},b)=0$ or $f(\bar R^{i},c)=0$. Moreover, \pc-strategyproofness and \pc-efficiency imply that if $f(\bar R^{i},b)>0$, then $f(\bar R^{i+1})=f(\bar R^{i})$. The reason for this is that if $f(\bar R^{i},b)>0$, then $f(\bar R^{i},c)=0$ because of \pc-efficiency. This means that every lottery with $f(\bar R^{i+1}, b)=0$ is a \pc-manipulation for the deviating voter $n-i$ as he even \sd-prefers $f(\bar R^{i+1})$ to $f(\bar R^{i})$. Hence, $f(\bar R^{i+1},b)>0$, and we can now use \pc-efficiency to derive that $f(\bar R^{i+1},c)=f(\bar R^{i},c)=0$. Finally, \Cref{lem:2alts} implies that $f(\bar R^{i+1})=f(\bar R^{i})$. As a consequence, this sequence ends at a profile $R^3=\bar R^{\frac{n-1}{2}}$ with $f(R^3)=f(R^1)$.
		
		Next, consider the profile $R^4$ which is derived from $R^3$ by swapping $b$ and $c$ in the preference relation of voter $\frac{n+1}{2}$.
		
		\begin{profile}
	    $R^4$: & $[1\dots \frac{n-1}{2}]$: $b,a,c$ & $\frac{n+1}{2}$: $a,c,b$ 
	    & $[\frac{n+3}{2}\dots n]$: $a,c,b$ 
		\end{profile}
		
		Since $a$ Pareto-dominates $c$ in $R^4$, it follows that $f(R^4,c)=0$. Hence, we can use again \Cref{lem:2alts} to conclude that $f(R^4)=f(R^3)=f(R^1)$.
		
		As the last step, consider the sequence of profiles $\hat R^{i}$ for $i\in \{0,\dots, \frac{n-1}{2}\}$, which leads from $R^4$ to $R^2$.
		
		\begin{profilenarrow}
	    $\hat R^{i}$: & $[1\dots \frac{n-1}{2}]$: $b,a,c$ & $\frac{n+1}{2}$: $a,c,b$ 
	    & $[\frac{n+3}{2}\dots n-i]$: $a,c,b$ & $[n-i+1\dots n]$ $c,a,b$ 
		\end{profilenarrow}
		
		First, observe that $\hat R^{0}=R^4$ and $\hat R^{\frac{n-1}{2}}=R^2$. Moreover, \pc-efficiency requires again for every profile $\hat R^{i}$ that either $f(\hat R^{i},b)=0$ or $f(\hat R^{i},c)=0$. Even more, since $f(\hat R^{0}, c)=0$ and $f(\hat R^{\frac{n-1}{2}},b)=0$, there is at least one index $i$ such that $f(\hat R^{i},c)=f(\hat R^{i+1},b)=0$. Let $i^*\in\{0,\dots,\frac{n-3}{2}\}$ denote the smallest such index, which means that $f(\hat R^{i},c)=0$ for all $i\in \{0,\dots,i^*\}$. Therefore, we can again use \Cref{lem:2alts} to conclude that $f(\hat R^{i^*})=f(\hat R^{0})=f(R^1)$, which means in particular that $f(\hat R^{i^*},b)=f(R^1,b)>0$. Now, if $f(\hat R^{i^*+1},a)\geq f(R^{i^*},a)$, voter $n-{i^*}$ can \pc-manipulate by deviating from $\hat R^{i^*}$ to $\hat R^{i^*+1}$. This follows as voter $n-i^*$, whose preference is $a,c,b$ in $R^{i^*}$, \sd-prefers (and therefore also \pc-prefers) $f(\hat R^{i^*+1})$ to $f(\hat R^{i^*})$ in this case. Hence, \pc-strategyproofness requires that $f(\hat R^{i^*+1},a)< f(\hat R^{i^*},a)$. Since $f(\hat R^{i^*+1},b)=f(\hat R^{i^*},c)=0$, this implies that $f(\hat R^{i^*+1},c)>f(\hat R^{i^*},b) = f(R^1,b)$.
		
		Next, we prove that $f(\hat R^{i+1},c)\geq f(\hat R^{i},c)$ for all $i>i^*$. Assume for contradiction that there is an index $j$ where this is not the case. Then, there is also a minimal index $j^*>i^*$ such that $f(\hat R^{j^*+1},c)< f(\hat R^{j^*},c)$. 
		In particular, it follows from the minimality of $j^*$ that $f(\hat R^{j^*},c)\geq f(\hat R^{i^*+1},c)>0$ and \pc-efficiency then shows that $f(\hat R^{j^*},b)=0$. Now, note that voter $n-j^*$'s preference relation in $\hat R^{j^*+1}$ is $c,a,b$. Hence, if $f(R^{j^*+1},c)=0$, he clearly \pc-prefers $f(\hat R^{j^*})$ to $f(\hat R^{j^*+1})$. This means that $f(\hat R^{j^*+1},c)>0$ and consequently $f(\hat R^{j^*+1},b)=0$ because of \pc-efficiency. However, then voter $j^*$ still \pc-prefers $f(\hat R^{j^*})$ to $f(\hat R^{j^*+1})$ because $f(\hat R^{j^*+1},c)<f(\hat R^{j^*},c)$. Hence, voter $j^*$ can either way \pc-manipulate by deviating from $\hat R^{j^*+1}$ to $\hat R^{j^*}$. This contradicts the \pc-strategyproofness of $f$, and so we must have $f(\hat R^{i+1},c)\geq f(\hat R^{i},c)$ for all $i>i^*$. In particular, this implies that $f(R^2,c)\geq f(\hat R^{i^*+1},c)>f(\hat R^{i^*},b)=f(R^1,b)$ because $R^2=\hat R^{\frac{n-1}{2}}$. However, this observation is in conflict with anonymity and neutrality between $R^2$ and $R^1$, and thus, the assumption that $f(R^1,b)>0$ must be wrong. It follows that $f(R^1,b)=f(R^1,c)=0$, and so $f(R^1,a)=1$. 
		
		Finally, departing from the insight that $f(R^1,a)=1$, we can essentially apply the same steps as in the proof for even $n$ (in the main body) to show that $f$ must satisfy the absolute winner property.
	\end{proof}

\subsection[Proofs of Propositions 3 and 4]{Proofs of \Cref{prop:m3sp,prop:example-m-3}}\label{subsec:propositions}

In this subsection, we prove \Cref{prop:m3sp,prop:example-m-3}, which show that the impossibilities for $m \ge 4$ in \Cref{thm:CCSP,thm:EffSP,thm:strictpart} turn into possibilities when $m = 3$. We first provide additional insights into \pc-efficiency that facilitate the analysis of $f^1$ and $f^2$. 

\begin{lemma}\label{lem:eff3}
Consider a profile $R\in \mathcal{R}^{\mathcal{F}(\mathbb{N})}$ on three alternatives $A=\{a,b,c\}$. A lottery~$p$ is $\pc$-efficient for $R$ if it satisfies the following conditions.
\begin{enumerate}
    \item $p(x)=0$ if $x$ is Pareto-dominated in $R$.
    \item For an alternative $x\in A$ that is never bottom-ranked and at least once top-ranked in $R$, there is $y\in A\setminus \{x\}$ with $p(y)=0$.
    \item For an alternative $x\in A$ that is never top-ranked and at least once bottom-ranked in $R$, $p(x)=0$.
\end{enumerate}
\end{lemma}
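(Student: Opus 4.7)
My plan is to proceed by contradiction: assume $p$ satisfies conditions 1--3 yet some lottery $q\neq p$ \pc-dominates $p$, and derive a contradiction. Setting $\Delta := q - p$, so $\Delta(a)+\Delta(b)+\Delta(c)=0$, I would first derive the identity
\[
q\succsim_i^\pc p \iff \Delta(x)\bigl(p(x)+p(y)\bigr)\geq \Delta(z)\bigl(p(y)+p(z)\bigr)
\]
for any voter $i$ with $x\succ_i y\succ_i z$. This is a short expansion from the definition of \pc-preferences combined with $\Delta(y)=-\Delta(x)-\Delta(z)$; it reduces the whole \pc-dominance test to a system of linear inequalities in the three coordinates of $\Delta$, one per voter type.

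The proof then proceeds by case analysis on $|\supp(p)|\in\{1,2,3\}$. For $|\supp(p)|=1$, without loss of generality $p(a)=1$: condition~1 excludes $a$ from being Pareto-dominated, and condition~3 forces $a$ to be either sometimes top-ranked or never bottom-ranked. If some voter top-ranks $a$, the voter's inequality reduces to $\Delta(a)\geq 0$, but $\Delta(a)=q(a)-1\leq 0$ then forces $q=p$. Otherwise $a$ is always middle-ranked; condition~1 forces both preference types $b,a,c$ and $c,a,b$ to occur, and their inequalities combine to give $\Delta(b)=\Delta(c)$, so every voter's \pc-comparison value vanishes and no strict improvement is possible.

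For $|\supp(p)|=2$, say $p(a),p(b)>0$ and $p(c)=0$, condition~2 applied to $c$ rules out $c$ being never bottom-ranked and sometimes top-ranked (that would force $p(a)=0$ or $p(b)=0$), and conditions~1 and~3 constrain each of $a$ and $b$ to be sometimes top-ranked or never bottom-ranked. I would split on whether $c$ is always middle-ranked or is sometimes bottom-ranked. In the always-middle sub-case only the voter types $a,c,b$ and $b,c,a$ can occur, and condition~1 forces both to appear; their inequalities pin down $\Delta(a)p(a)=\Delta(b)p(b)$, making every voter's \pc-comparison tight. In the sometimes-bottom sub-case, I would first use the inequality from a voter with $c$ at the bottom together with the non-negativity $\Delta(c)\geq 0$ (which follows from $p(c)=0$) to deduce $\Delta(c)=0$, and then combine the remaining inequalities with the zero-sum constraint to force $\Delta(a)=\Delta(b)=0$. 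The case $|\supp(p)|=3$ is the cleanest: conditions~1--3 together with $p>0$ everywhere imply that no alternative is ``never bottom and sometimes top'' nor ``never top and sometimes bottom'', and this structural rigidity is enough to force $\Delta=0$.

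The main obstacle will be the case $|\supp(p)|=2$, where up to six voter types can be active and several profile sub-configurations must be distinguished. I plan to handle it by expressing the dominance inequalities as affine constraints in $(\Delta(a),\Delta(b),\Delta(c))$, leveraging conditions~1--3 to identify which voter types must be present, and summing the inequalities with non-negative multipliers chosen so that the $\Delta$-coefficients telescope, ultimately yielding $\Delta=0$ and hence $q=p$.
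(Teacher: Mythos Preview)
Your linear identity
\[
q\succsim_i^\pc p \iff \Delta(x)\bigl(p(x)+p(y)\bigr)\geq \Delta(z)\bigl(p(y)+p(z)\bigr)
\]
for a voter with $x\succ_i y\succ_i z$ is correct and is a nice tool. However, the target you set for yourself in most sub-cases---forcing $\Delta=0$, i.e.\ $q=p$---is \emph{false}, and this is not just a matter of loose phrasing. Take the profile with two voters $a,b,c$ and $c,b,a$, and $p(a)=p(b)=\tfrac12$, $p(c)=0$. All three conditions of the lemma are vacuously satisfied, and $p$ is indeed \pc-efficient. But the lottery $q=(0.6,0.2,0.2)$ has $\Delta=(0.1,-0.3,0.2)\neq 0$ and satisfies both voters' inequalities \emph{with equality}; so in your ``$c$ sometimes bottom-ranked'' sub-case you cannot deduce $\Delta(c)=0$ from the bottom-$c$ voter's inequality plus $\Delta(c)\ge 0$, and you certainly cannot deduce $\Delta=0$. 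The same failure occurs in your $|\supp(p)|=3$ case when one alternative is always middle-ranked (e.g.\ voters $b,a,c$ and $c,a,b$ with $p$ uniform: take $\Delta=(-2\epsilon,\epsilon,\epsilon)$). What you \emph{can} show in these situations is that every voter's inequality must hold with equality, so no voter is strictly better off and $q$ does not \pc-dominate $p$. That is the correct endpoint of the argument, and once you aim for tightness rather than $\Delta=0$, your linearization makes the sub-cases go through.

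For comparison, the paper's proof organizes the case analysis differently and avoids this pitfall: instead of splitting on $|\supp(p)|$, it fixes (without loss of generality) the sign pattern $q(a)>p(a)$, $q(b)<p(b)$ and then splits into three cases according to the sign of $q(c)-p(c)$. In each case it uses the simple observation that a voter whose top alternative loses probability and whose bottom alternative gains probability must strictly \sd-prefer $p$ to $q$, which immediately pins down which voter types can be present. Conditions~1--3 are then invoked directly on those voter types to get either a contradiction (Cases~1) or exactly two opposite voter types whose \pc-inequalities force tightness (Cases~2 and~3). This sign-pattern decomposition is shorter because it never needs to enumerate the many profile configurations that arise when you split on $|\supp(p)|$; you may find it a useful alternative once you fix the target of your own argument.
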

\begin{proof}
Consider an arbitrary electorate $N\in \mathcal{F}(\mathbb{N})$ and let $R\in \mathcal{R}^N$ denote a profile. Moreover, let $p$ denote a lottery that satisfies the given conditions and suppose for contradiction that there is another lottery $q$ that \pc-dominates $p$ on $R$. Hence, $p\neq q$, which means that there are alternatives $x,y$ such that $q(x)>p(x)$ and $q(y)<p(y)$. We suppose subsequently that $q(a)>p(a)$ and $q(b)<p(b)$ because our arguments are completely symmetric. Next, we proceed with a case distinction with respect to the relation between $q(c)$ and $p(c)$.\medskip

\textbf{Case 1: $q(c)=p(c)$}

As the first case, we suppose that $q(c)=p(c)$. Then, it follows for all voters $i\in N$ that $a\succ_i b$ because $q \succsim_i^\pc p$. Indeed, if $b\succ_i a$ for some $i\in N$, this voter strictly \sd-prefers $p$ to $q$. Since $q$ \pc-dominates $p$ by assumption, it thus follows that $a$ Pareto-dominates $b$. However, condition~1 then requires that $p(b)=0$, which contradicts that $q(b)<p(b)$. Hence, $q$ cannot \pc-dominate $p$ in this case.\medskip

\textbf{Case 2: $q(c)<p(c)$}

Next, suppose that $q(c)<p(c)$. Combined with $q(a)>p(a)$, $q(b)<p(b)$, and the assumption that $q$ \pc-dominates $p$, this means that no voter bottom-ranks $a$. Indeed, it is easy to see that such a voter strictly \sd-prefers $p$ to $q$, contradicting the \pc-dominance of $q$. Now, if $a$ is top-ranked by a voter in $R$, then condition~2 requires that either $p(c)=0$ or $p(b)=0$. However, this is not possible since $q(c)<p(c)$ and $q(b)<p(b)$ by assumption. Hence, every voter ranks $a$ at the second position in $R$. 

Furthermore, both $b$ and $c$ must be top-ranked at least once; otherwise, one of these alternatives is unanimously top-ranked and therefore Pareto-dominates both other alternatives, which again conflicts with $q(c)<p(c)$ and $q(b) < p(b)$. 

Hence, every voter in $R$ has the preference relation $b,a,c$ or $c,a,b$, and each of these two preferences is submitted at least once. Voters of the first type \pc-prefer $q$ to $p$ if $q(b)p(a)+q(b)p(c)+q(a)p(c)\geq p(b)q(a)+p(b)q(c)+p(a)q(c)$ and voters of the second type if $ q(c)p(a)+q(c)p(b)+q(a)p(b)\geq p(c)q(a)+p(c)q(b)+p(a)q(b)$. Clearly, both inequalities are only true if they hold with equality. However, then no voter strictly \pc-prefers $q$ to $p$ and hence $q$ does not \pc-dominate $p$ in this case either.\medskip

\textbf{Case 3: $q(c)>p(c)$}

As the last case, we assume that $q(c)>p(c)$. Since $q(a)>p(a)$, $q(b)<p(b)$, and $q$ \pc-dominates $p$ in $R$, no voter top-ranks $b$ in $R$. Indeed, such a voter strictly \sd-prefers $p$ to $q$, which contradicts that $q$ \pc-dominates $p$. Now, if $b$ is bottom-ranked by at least one voter in $R$, then condition~3 requires that $p(b)=0$, which contradicts $q(b)<p(b)$. Hence, $b$ is second-ranked by all voters in $R$. Next, both $a$ and $c$ are top-ranked at least once in $R$; otherwise, $b$ is Pareto-dominated which again contradicts $q(b)<p(b)$. Hence, every voter in $R$ has the preference $a,b,c$ or $c,b,a$, and both preferences are reported at least once. An analogous argument as in Case 2 implies that $q$ cannot \pc-dominate $p$ in this case either.
\end{proof}

Next, we use \Cref{lem:eff3} to prove \Cref{prop:m3sp}. Recall the definition of $f^1$ (where $\mathit{CW}(R)$ is the set of Condorcet winners in $R$ and $\mathit{WCW}(R)$ the set of weak Condorcet winners).
\[f^1(R)=\begin{cases}
[x:1] &\text{if } \textit{CW}(R)=\{x\} \\
[x:\frac{1}{2}; y:\frac{1}{2}]&\text{if } \textit{WCW}(R)=\{x,y\} \\
[x:\frac{3}{5}; y:\frac{1}{5}; z:\frac{1}{5}] &\text{if } \textit{WCW}(R)=\{x\} \\
[x:\frac{1}{3}; y:\frac{1}{3}; z:\frac{1}{3}] &\text{otherwise}
\end{cases}
\]

\possp*
\begin{proof}
    The proposition consists of two claims: on the one hand, we need to show that $f^1$ satisfies all axioms of the proposition, and on the other hand, that $f^1$ is the only SDS satisfying these axioms. We consider both claims separately and start by showing that $f^1$ satisfies all axioms of the proposition.\medskip
    
    \textbf{Claim 1: $f^1$ satisfies anonymity, neutrality, cancellation, \pc-efficiency and \pc-strategyproofness.}
    
    First, note that $f^1$ satisfies cancellation because adding two voters with inverse preferences does not affect whether an alternative is a (weak) Condorcet winner. Furthermore, the definition of $f^1$ immediately shows that it is anonymous and neutral. 
    
    For proving that $f^1$ is \pc-efficient, we consider an arbitrary preference profile $R\in\mathcal{R}^{\mathcal{F}(\mathbb{N})}$. Now, if an alternative $x$ is Pareto-dominated in $R$, then it is never top-ranked. Consequently, there is either a Condorcet winner $y\neq x$ (if more than half of the voters top-rank $y$) or the remaining two alternatives $y,z$ are weak Condorcet winners (if both $y$ and $z$ are top-ranked by exactly half of the voters). In both cases, $f^1(R,x)=0$, 
    which shows that $f^1(R)$ satisfies condition~1 of \Cref{lem:eff3} for all profiles $R$. Similarly, if there is an alternative $x$ that is never top-ranked and at least once bottom-ranked, then either there is a Condorcet winner $y\neq x$, or the remaining two alternatives $y,z$ are weak Condorcet winners. Hence, $f^1(R,x)=0$, which proves that $f^1(R)$ also satisfies condition~3 of \Cref{lem:eff3}. Finally, if there is an alternative $x$ that is never bottom-ranked and at least once top-ranked in $R$, then there is an alternative $z\neq x$ with $g_R(x,z)>0$. We claim that $f^1(R,z)=0$, which proves condition~2 of \Cref{lem:eff3}. If $x$ or the third alternative $y$ is a Condorcet winner, this follows immediately. On the other hand, if neither $x$ nor $y$ are Condorcet winners, both of them are weak Condorcet winners because $y$ must be top-ranked by at least half of the voters if $x$ is not a Condorcet winner. Hence, we have two weak Condorcet winners and the definition of $f^1$ again shows that $f^1(R,z)=0$. Since all conditions of \Cref{lem:eff3} hold, it thus follows that $f^1(R)$ is \pc-efficient for all profiles $R$. 
    
    Finally, we need to show that $f^1$ is \pc-strategyproof.
    Assume for contradiction that this is not the case. Then, there are an electorate $N$, two preference profiles $R,R'\in\mathcal{R}^N$, and a voter $i\in N$ such that $f^1(R')\succ_i^\pc f^1(R)$ in $R$, and $R_{-i}=R_{-i}'$. Subsequently, we discuss a case distinction with respect to the definition to $f^1$. In more detail, we have for both $R$ and $R'$ five different options: there is a Condorcet winner ($CW$), or there is no Condorcet winner but $k\in \{0,1,2,3\}$ weak Condorcet winners ($kWCW$). We label the cases with a shorthand notation: for instance, $CW\rightarrow 1WCW$ is the case where there is a Condorcet winner in $R$ and a single weak Condorcet winner in $R'$. 
    
    To keep the length of the proof manageable, we subsequently focus only on the case that $R$ and $R'$ are defined by an odd number of voters. This assumption means that there are no weak Condorcet winners and thus significantly reduces the number of cases that need to be considered. For the case that $R$ and $R'$ are defined by an even number of voters, we refer to pages 20--22 of a preprint of this paper \citep{BLS22b}. When the number of voters $n$ is odd, there are only four possible types of manipulations. 
    \begin{itemize}
        \item $CW\rightarrow CW$: Suppose that $a$ is the Condorcet winner in $R$. If $a$ is also the Condorcet winner in $R'$, then $f^1(R)=f^1(R')$ and deviating from $R$ to $R'$ is no \pc-manipulation. On the other hand, if another alternative $b$ is the Condorcet winner in $R'$, we must have $a\succ_i b$ in $R$. Since $f^1(R,a)=f^1(R',b)=1$, this is no \pc-manipulation.
        \item $CW\rightarrow 0WCW$: Suppose that $a$ is the Condorcet winner in $R$, and there is no Condorcet winner in $R'$. This means that voter $i$ reinforces an alternative $b$ against~$a$, i.e., $a$ is ranked either second or third. Since $f^1(R,a)=1$ and $f^1(R',x)=\frac{1}{3}$ for all $x\in A$, this proves that deviating from $R$ to $R'$ is no \pc-manipulation.
        \item $0WCW\rightarrow CW$: Suppose there is no Condorcet winner in $R$, but $a$ is the Condorcet winner in $R'$. Hence, voter $i$ needs to reinforce $a$ against at least one other alternative $b$. This means that $a$ is not voter $i$'s favorite alternative in~$R$. Since $f^1(R,x)=\frac{1}{3}$ for all $x\in A$ and $f(R',a)=1$, this observation proves that $f^1$ is \pc-strategyproof in this case.
        \item $0WCW\rightarrow 0WCW$: We have $f^1(R)=f^1(R')$ in this case, which contradicts that voter $i$ can \pc-manipulate. 
    \end{itemize}\medskip
    
    \textbf{Claim 2: $f^1$ is the only SDS that satisfies anonymity, neutrality, cancellation, \pc-efficiency, and \pc-strategy\-proofness.}
    
    Consider an arbitrary SDS $f$ for $m=3$ alternatives that satisfies all given axioms. We show that $f(R)=f^1(R)$ for all profiles $R\in\mathcal{R}^{\mathcal{F}(\mathbb{N})}$, which proves this claim. For this, we name the six possible preference relations ${\succ_1}=a,b,c$, ${\succ_2}=c,b,a$, ${\succ_3}=b,c,a$, ${\succ_4}=a,c,b$, ${\succ_5}=c,a,b$, and ${\succ_6}=b,a,c$. Moreover, given a profile $R$, let $n_i$ denote the number of voters who report preference relation $\succ_i$ in $R$. Using this notation, we can describe the majority margins of $R$ as follows. 
    \begin{align*}
        &g_R(a,b)=(n_1-n_2)-(n_3-n_4)+(n_5-n_6) \\
        &g_R(b,c)=(n_1-n_2)+(n_3-n_4)-(n_5-n_6) \\
        &g_R(c,a)=-(n_1-n_2)+(n_3-n_4)+(n_5-n_6)
    \end{align*}
     It is not difficult to derive from these equations that
     \begin{align*}
         &n_1=\frac{g_R(a,b)+g_R(b,c)}{2}+n_2 \\
         &n_3=\frac{g_R(b,c)+g_R(c,a)}{2}+n_4 \\
         &n_5=\frac{g_R(c,a)+g_R(a,b)}{2}+n_6.
     \end{align*}
    
    Next, consider an arbitrary preference profile $R$. Based on cancellation, we can use the above equations to remove pairs of voters with inverse preferences from $R$ until $n_{2k}=0$ or $n_{2k-1}=0$ for each $k\in \{1,2,3\}$. Unless all majority margins are $0$, this leads to a minimal profile $R'$, which we consider in the subsequent case distinction. Note that the removal of voters with inverse preferences does not affect the majority margins and therefore also not the (weak) Condorcet winners. In particular, this means that $f^1(R)=f^1(R')$. Analogously, cancellation yields for $f$ that $f(R)=f(R')$. Hence, we will consider multiple cases depending on the structure of $R'$ and prove that $f(R)=f(R')=f^1(R')=f^1(R)$ in every case. On the other hand, if all majority margins are $0$, we need a separate argument, which we discuss in our first case below. Taken together, our cases imply that $f(R)=f^1(R)$ for every profile $R$.\smallskip

    \textit{Case 2.1: $g_R(a,b)=g_R(b,c)=g_R(c,a)=0$.}
    
    First, suppose that $g_R(a,b)=g_R(b,c)=g_R(c,a)=0$, which means that all three alternatives are weak Condorcet winners in $R$. Our equations show that $n_1=n_2$, $n_3=n_4$, and $n_5=n_6$. Let $n^*$ denote the maximum among all $n_i$. Using cancellation, we can add pairs of voters with inverse preferences until $n_k=n^*$ for every $k\in \{1,\dots, 6\}$. Moreover, cancellation implies that $f(R)=f(R'')$ for the new profile $R''$. Finally, all alternatives are symmetric to each other in $R''$ since all preference relations appear equally often. Hence, anonymity and neutrality require that $f(R'',x)=\frac{1}{3}$ for all $x\in A$, which means that $f(R)=f(R'')=f^1(R)$.\smallskip
    
    \textit{Case 2.2: An alternative $x$ is top-ranked by more than half of the voters in $R'$.}
    
    As the second case, suppose that $R'$ is well-defined and that an alternative $x$ is top-ranked by more than half of the voters in this profile. Then, it holds that $f(R',x)=1$ because \Cref{lem:abswinner} implies that $f$ satisfies the absolute winner property. Since $x$ is the Condorcet winner in $R'$, it holds that $f(R')=f^1(R')$.\smallskip
    
    \textit{Case 2.3: Two alternatives are top-ranked by exactly half of the voters in $R'$.}
    
    Next, suppose that $R'$ is well-defined and that two alternatives, say $a$ and $b$, are top-ranked by exactly half of the voters in $R'$. Then, $a$ and $b$ are weak Condorcet winners. Moreover, $c$ is not a weak Condorcet winner in $R'$ since not all majority margins in $R$ can be $0$, which implies that there is a voter who ranks $c$ last. Due to symmetry, we can assume that this voter's preference relation is $a,b,c$. Now, if there is a voter with preference relation $a,c,b$ in $R'$, then the last possible preference relation is $b,a,c$; otherwise, $R'$ is not minimal. Hence, $a$ Pareto-dominates $c$ in $R'$. Similarly, if there is no voter with the preference $a,c,b$, all voters prefers $b$ to $c$ and $c$ is again Pareto-dominated. Therefore, it follows in both cases that $f(R',c)=0$ because of \pc-efficiency. Moreover, we can let the voters with $a,c,b$ and $b,c,a$ (if any) push down $c$. Then, $c$ stays Pareto-dominated and therefore still receives probability $0$ from $f$. Hence, \Cref{lem:2alts} shows that the probability of $a$ and $b$ does not change during these steps. Finally, this process results in a profile $R''$ in which half of the voters report $a,b,c$ and the other half $b,a,c$. Anonymity, neutrality, and \pc-efficiency imply for this profile $R''$ that $f(R'',a)=f(R'',b)=\frac{1}{2}$. Hence, we have that $f(R')=f(R'')=f^1(R')$ because $a$ and $b$ are the only weak Condorcet winners in $R'$.\smallskip
    
    \textit{Case 2.4: Each alternative is top-ranked at least once and one alternative is top-ranked by exactly half of the voters in $R'$.}
    
    Next, suppose that an alternative is top-ranked by exactly half of the voters and the other two alternatives are top-ranked at least once. Without loss of generality, assume that there is a voter with preference relation $a,b,c$ in $R'$. Since $c$ is top-ranked by a voter, there is also a voter with preference relation $c,a,b$; note for this that no voter can report $c,b,a$ in $R'$ because of the minimality of $R'$. By an analogous argument, we also derive that there is a voter with preference relation $b,c,a$. In summary, we have that $n_1>0$, $n_3>0$, $n_5>0$, and $n_2=n_4=n_6=0$. Moreover, one alternative is top-ranked by half of the voters; suppose without loss of generality that this alternative is $a$. Hence, $n_1=n_3+n_5$. 
    We prove that $f(R',a)=\frac{3}{5}$ and $f(R',b)=f(R',c)=\frac{1}{5}$ by considering the following preference profiles, where $l=n_1+n_3$.
    
     \begin{profilenarrow}
	   $R^{1,n_3,n_5}$: & $[1\dots n_1]$: $a,b,c$ & $[n_1+1\dots l]$: $b,c,a$
	    & $[l+1\dots n-1]$: $c,a,b$  & $n$: $c,b,a$\\
	   $R^{2,n_3,n_5}$: & $[1\dots n_1]$: $a,b,c$ & $[n_1+1\dots l]$: $b,c,a$ 
	   & $[l+1\dots n]$: $c,a,b$\\
	   $R^{3,n_3,n_5}$: & $[1\dots n_1]$: $a,b,c$ & $[n_1+1\dots l-1]$: $b,c,a$ 
	    & $l$: $c,b,a$ & $[l+1\dots n]$: $c,a,b$
	\end{profilenarrow}
	
	Anonymity implies that $f(R')=f(R^{2,n_3,n_5})$. 
	Hence, our goal is to show that $f(R^{2,n_3,n_5},a)=\frac{3}{5}$ and $f(R^{2,n_3,n_5},b)=f(R^{2,n_3,n_5},c)=\frac{1}{5}$ for all $n_3>0$ and $n_5>0$. Note for this that, in $R^{1,n_3,n_5}$ and $R^{3,n_3,n_5}$, we can use cancellation to remove voters $1$ and $n$ or voters $1$ and $l$, respectively. This step leads to the profile $R^{2,n_3,n_5-1}$ or $R^{2, n_3-1, n_5}$, which proves that $f(R^{1,n_3,n_5})=f(R^{2,n_3,n_5-1})$ and $f(R^{3,n_3,n_5})=f(R^{2,n_3-1,n_5})$. Moreover, note that if $n_3=0$, then $a$ and $c$ are top-ranked by half of the voters in $R^{2,n_3,n_5}$. Hence, we have that $f(R^{2,0,n_5},a)=f(R^{2,0,n_5},c)=f(R^{3,1,n_5},a)=f(R^{3,1,n_5},c)=\frac{1}{2}$ by Case 2.3. An analogous argument also shows that $f(R^{2,n_3,0},a)=f(R^{2,n_3,0},b)=f(R^{1,n_3,1},a)=f(R^{1,n_3,1},b)=\frac{1}{2}$. Based on these insights, we now prove our claim on $f(R^{2,n_3,n_5})$ with an induction on $n_3+n_5$.
	
	First, we consider the induction basis that $n_3=n_5=1$. The previous paragraph implies that $f(R^{1,n_3,n_5},a)=f(R^{1,n_3,n_5},b)=f(R^{3,n_3,n_5},a)=f(R^{3,n_3,n_5},c)=\frac{1}{2}$. Hence, \pc-strategyproofness from $R^{1,n_3,n_5}$ to $R^{2,n_3,n_5}$ and from $R^{2,n_3,n_5}$ to $R^{3,n_3,n_5}$ entails the following inequalities, where $p=f(R^{2,n_3,n_5})$. 
	\begin{align*}
	    \frac{1}{2}p(a)\geq p(c)+\frac{1}{2}p(b)\qquad\qquad\qquad p(b)+\frac{1}{2}p(c)\geq \frac{1}{2}p(a)
	\end{align*}
	
	Moreover, note that voter $n$ can ensure in $R^{2,n_3,n_5}$ that $a$ is chosen with probability~$1$ by reporting it as his favorite alternative because of the absolute winner property. Hence, we also get that $p(c)\geq p(b)$ from $\pc$-strategyproofness. Finally, it is easy to see that these three inequalities are true at the same time only if $p(a)=3p(b)=3p(c)$. Using the fact that $p(a)+p(b)+p(c)=1$, we hence derive that $p(a)=\frac{3}{5}$ and $p(b)=p(c)=\frac{1}{5}$. 
	
	Next, we prove the induction step and thus consider some fixed $n_3>0$ and $n_5>0$ such that $n_3+n_5>2$. The induction hypothesis is that $f(R^{2,n_3',n_5'},a)=\frac{3}{5}$ and $f(R^{2,n_3',n_5'},b)=f(R^{2,n_3',n_5'},c)=\frac{1}{5}$ for all $n_3'>0$ and $n_5'>0$ with $n_3'+n_5'=n_3+n_5-1$. Now, recall that $f(R^{1,n_3,n_5})=f(R^{2,n_3,n_5-1})$, which means that $f(R^{1,n_3,n_5},a)=\frac{3}{5}$ and $f(R^{1,n_3,n_5},b)=f(R^{1,n_3,n_5},c)=\frac{1}{5}$ if $n_5>1$ because of the induction hypothesis. \pc-strategyproofness from $R^{1, n_3, n_5}$ to $R^{2, n_3, n_5}$ implies then the following inequality, where $p=f(R^{2,n_3,n_5})$. 
	\begin{align*}
	    \frac{2}{5}p(a)+\frac{1}{5}p(b)\geq \frac{4}{5}p(c)+\frac{3}{5}p(b) \qquad \iff \qquad \frac{1}{2}p(a)\geq p(c)+\frac{1}{2}p(b)
	\end{align*}
	On the other hand, if $n_5=1$, then $f(R^{1,n_3,n_5},a)=f(R^{1,n_3,n_5},b)=\frac{1}{2}$, and \pc-strategyproofness results in the same inequality. 
	
	Similarly, if $n_3>1$, then $f(R^{3,n_3,n_5},a)=\frac{3}{5}$ and $f(R^{3,n_3,n_5},b)=f(R^{3,n_3,n_5},c)=\frac{1}{5}$ because of the induction hypothesis and cancellation. Hence, we derive the following inequality from \pc-strategyproofness between $R^{2,n_3,n_5}$ and $R^{3,n_3,n_5}$. 
	\begin{align*}
	    \frac{4}{5}p(b)+\frac{3}{5}p(c) \geq \frac{1}{5}p(c)+\frac{2}{5}p(a)
	    \qquad \iff\qquad p(b)+\frac{1}{2}p(c) \geq \frac{1}{2}p(a)
	\end{align*}
	On the other hand, if $n_3=1$, then $f(R^{3,n_3,n_5},a)=f(R^{3,n_3,n_5},c)=\frac{1}{2}$. Applying \pc-strategyproofness in this case results in the same inequality as above. 
	
	Finally, it must hold that $p(c)\geq p(b)$.
	Indeed, otherwise voter $n$ could \pc-manipulate in $R^{2,n_3,n_5}$ by reporting $a$ as his favorite option---$a$ would then chosen with probability~$1$ because of the absolute winner property. Since $p(a)+p(b)+p(c)=1$, it can be verified that the only possible solution to the three inequalities that we have derived is $p(a)=\frac{3}{5}$ and $p(b)=p(c)=\frac{1}{5}$. This proves the induction step and therefore that $f(R')=f(R^{2,n_3,n_5})=f^1(R')$.\smallskip
	
	\textit{Case 2.5: Every alternative is top-ranked by less than half of the voters in $R'$.}
    
    As the last case, suppose that every alternative is top-ranked by less than half of the voters in $R'$. In particular, this means that every alternative is top-ranked at least once. We suppose again without loss of generality that a voter reports $a,b,c$ in $R'$ and hence, the same analysis as in the previous case shows that the only possible preference relations in $R'$ are $\succ_1=a,b,c$, $\succ_3=b,c,a$, and $\succ_5=c,a,b$. In particular, we have that $n_1>0$, $n_3>0$, $n_5>0$, and $n_2=n_4=n_6=0$. Moreover, since no alternative is top-ranked by at least half of the voters, we have that $n_1<n_3+n_5$, $n_3<n_1+n_5$, and $n_5<n_1+n_3$. This shows that there is not even a weak Condorcet winner in $R'$, and our goal hence is to show that $f(R',x)=\frac{1}{3}$ for all $x\in A$. Suppose that this is not the case, which means that that either $f(R',a)<f(R',c)$, $f(R',b)<f(R',a)$, or $f(R',c)<f(R',b)$; otherwise, $f(R',a)\geq f(R',c)\geq f(R',b)\geq f(R',a)$, which implies that all alternatives get a probability of $\frac{1}{3}$. We assume in the sequel that $f(R',a)<f(R',c)$ as all cases are symmetric. Now, in this case, we let the voters $i$ with preference relation $a,b,c$ one after another swap $a$ and $b$. For each step, \Cref{lem:3alts} implies that the probability of $a$ remains smaller than that of $c$. However, this process results in a profile $R''$ in which $n_1+n_3$ voters report $b$ as their favorite alternative. Since $n_1+n_3>n_5$, $b$ is the absolute winner and Case~2.2 shows that $f(R'',b)=1$. However, this contradicts $f(R'',a)<f(R'',c)$ and hence, the claim that $f(R',a)<f(R',c)$ must be wrong. This proves that $f(R',x)=\frac{1}{3}=f^1(R',x)$ for all $x\in A$. 
\end{proof}

Finally, we prove \Cref{prop:example-m-3}. Recall for this that $n_R(x)$ denotes the number of voters who top-rank alternative $x$ in $R$, and let $B(R)$ be the set of alternatives that are never bottom-ranked in $R$. Moreover, the uniform random dictatorship $\rd$ is defined by $\rd(R,x)=\frac{n_R(x)}{\sum_{y\in A}n_R(y)}$ for all $x\in A$ and $R\in\mathcal{R}^{\mathcal{F}(\mathbb{N})}$. As discussed in \Cref{sec:RD-ML}, \rd is known to satisfy strict \sd-participation and therefore satisfies also strict \pc-participation, but fails \pc-efficiency. We consider the following variant of \rd called $f^2$: if $|B(R)|\in\{0,2\}$, then $f^2(R)=\rd(R)$.
On the other hand, if $|B(R)|=1$, let $x$ denote the single alternative in $B(R)$ and let $C$ denote the set of alternatives $y\in A\setminus \{x\}$ with minimal $n_y(R)$. Then, $f^2(R,x)=\frac{n_R(x)+\sum_{y\in C} n_R(y)}{\sum_{y\in A} n_R(y)}$, $f(R^2,y)=0$ for $y\in C$, and $f^2(R,z)=\rd(R,z)$ for $z\not\in C\cup \{x\}$. Intuitively, if $|B(R)| = 1$, $f^2$ removes the alternatives in $A\setminus B(R)$ with minimal $n_R(x)$ and then computes $\rd$.

\pospart*
\begin{proof}
First note that $f^2$ is anonymous and neutral since its definition does not depend on the identities of voters or alternatives. 

Next, we show that $f^2$ satisfies \pc-efficiency by proving that $f^2(R)$ satisfies for all profiles $R$ the three conditions of \Cref{lem:eff3}. To this end, note first that $f^2$ is \emph{ex post} efficient: it only puts positive probability on an alternative that is never top-ranked if it is second-ranked by all voters and both other alternatives are top-ranked at least once. In this case, all three alternatives are Pareto-optimal, and thus $f^2$ is \emph{ex post} efficient. This argument also shows that an alternative that is never top-ranked and at least once bottom-ranked is always assigned probability $0$. Finally, if an alternative is never bottom-ranked and at least once top-ranked, only two alternatives can have positive probability. In more detail, either $|B(R)|=2$, which means that one alternative is bottom-ranked by all voters and receives probability $0$, or $|B(R)|=1$ and an alternative in $A\setminus B(R)$ gets probability $0$ by definition of $f^2$. Hence, all conditions of \Cref{lem:eff3} hold, which implies that $f^2$ is \pc-efficient.

Lastly, we discuss why $f^2$ satisfies strict \pc-participation---in fact, we prove the even stronger claim that it satisfies strict \sd-participation. Consider an arbitrary electorate $N\in\mathcal{F}(\mathbb{N})$, a voter $i\in N$, and two preference profiles $R\in\mathcal{R}^N$ and $R'\in\mathcal{R}^{N\setminus \{i\}}$ such that $R'=R_{-i}$. We need to show that if $i$'s top alternative is not already chosen with probability $1$ in $f^2(R')$, then $f^2(R)\succ_i^\sd f^2(R')$. First, note that this is obvious if $f^2(R)=\rd(R)$ and $f^2(R')=\rd(R')$ because $\rd$ satisfies strict \sd-participation. Moreover, $|B(R')|-1\leq |B(R)|\leq |B(R')|$ because voter $i$ can only bottom-rank a single alternative. These two observations leave us with three interesting cases: $|B(R')|=2$ and $|B(R)|=1$, $|B(R')|=|B(R)|=1$, and $|B(R')|=1$ and $|B(R)|=0$.

First, consider the case where $|B(R')|=1$ and $|B(R)|=0$. Without loss of generality, we assume that $B(R')=\{a\}$, which means that $a$ is voter $i$'s least preferred alternative. Moreover, we call voter $i$'s best alternative $z\in\{b,c\}$. The following case distinction proves that $f^2$ satisfies strict \sd-participation under the given assumptions.
\begin{itemize}
    \item If $n_{R'}(b)=n_{R'}(c)$, then $f^2(R',a)=1$ and it is obvious that $f^2(R)\succ_i^\sd f^2(R')$ because $a$ is voter $i$'s least preferred outcome and $f^2(R,z)=\rd(R,z)>0$.
    \item If $n_{R'}(b)>n_{R'}(c)$, we have that $f^2(R',a)=\frac{n_{R'}(a)+n_{R'}(c)}{\sum_{x\in A} n_{R'}(x)}>\frac{n_{R'}(a)}{1+\sum_{x\in A} n_{R'}(x)}=f^2(R,a)$ and $f^2(R',z)\leq \frac{n_{R'}(z)}{\sum_{x\in A} n_{R'}(x)}<\frac{1+n_{R'}(z)}{1+\sum_{x\in A} n_{R'}(x)}=f^2(R,z)$. It is now easy to see that $f^2(R)\succ_i^\sd f^2(R')$.
    \item The case $n_{R'}(b)<n_{R'}(c)$ is symmetric to the previous one.
\end{itemize}

Next, consider the case where $|B(R')|=2$ and $|B(R)|=1$. Without loss of generality, we suppose that $B(R')=\{a,b\}$ and $B(R)=\{a\}$, which means that voter $i$ bottom-ranks~$b$. Moreover, note that all voters in $N\setminus \{i\}$ bottom-rank $c$ as otherwise $B(R')=\{a,b\}$ is not possible. This means that $f^2(R',c)=0$, $n_{R'}(c)=0$, and $n_{R}(c)\leq 1$. We consider again several subcases.
\begin{itemize}
    \item If $n_{R}(b)>n_{R}(c)$, then $f^2(R,c)=0=f^2(R',c)$, $f^2(R,a)\ge\frac{1+n_{R'}(a)}{1+\sum_{x\in A} n_{R'} (x)}>\frac{n_{R'}(a)}{\sum_{x\in A} n_{R'} (x)}=f^2(R',a)$, and thus $f^2(R,b)<f^2(R',b)$. Hence, $f^2(R)\succ_i^\sd f^2(R')$ as $b$ is voter $i$'s worst alternative.
    \item If $n_{R}(c)>n_{R}(b)$, then $f^2(R,b)=0\leq f^2(R',b)$ and $f^2(R,c)>0=f^2(R',c)$. 
    If $i$ top-ranks $c$, we have $f^2(R)\succ_i^\sd f^2(R')$.
    Else, $i$ top-ranks $a$, and we have $f^2(R,a)\ge\frac{1+n_{R'}(a)}{1+\sum_{x\in A} n_{R'} (x)}>\frac{n_{R'}(a)}{\sum_{x\in A} n_{R'} (x)}=f^2(R',a)$, so again $f^2(R)\succ_i^\sd f^2(R')$.
    \item If $n_R(c)=n_R(b)=0$, all voters (including $i$) report $a$ as their best option and thus $f^2(R',a)=f^2(R,a)=1$, which satisfies strict \sd-participation because $f^2(R)$ is voter $i$'s favorite lottery.
    \item If $n_{R}(c)=n_{R}(b)=1$, then voter $i$'s preference relation is $c,a,b$. Moreover, $f^2(R',c)=0 = f^2(R,c)$ and $f^2(R',b)>0=f^2(R,b)$. This proves again that $f^2(R)\succ_i^\sd f^2(R')$
\end{itemize}

As the last case, suppose that $|B(R')|=|B(R)|=1$ and let $a$ denote the alternative in $B(R)=B(R')$. Since $a\in B(R)$, voter $i$ does not bottom-rank $a$. We consider again a case distinction.
\begin{itemize}
    \item First, suppose that voter $i$ top-ranks $a$, which means that $n_R(b)=n_{R'}(b)$ and $n_{R}(c)=n_{R'}(c)$.
    \begin{itemize}
        \item If $n_{R'}(b)=n_{R'}(c)$, we have that $f^2(R,a)=f^2(R',a)=1$ and strict \pc-participation holds as this is voter $i$'s favorite lottery. 
        \item If $n_{R'}(b)>n_{R'}(c)$. Then, $f^2(R,a)=\frac{n_{R'}(a)+n_{R'}(c)+1}{1+\sum_{x\in A} n_{R'}(x)}>\frac{n_{R'}(a)+n_{R'}(c)}{\sum_{x\in A} n_{R'}(x)}=f^2(R',a)$, $f^2(R,c)=f^2(R',c)=0$, and hence $f^2(R,b)<f^2(R',b)$. It is now easy to verify that $f(R)\succ_i^\sd f(R')$. 
        \item The case $n_{R'}(b)<n_{R'}(c)$ is symmetric to the previous one.
    \end{itemize}
    \item Next, suppose that voter $i$ places $a$ second. We assume without loss of generality that ${\succ_i}=b,a,c$ because the case ${\succ_i}=c,a,b$ is symmetric. This assumption means that $n_{R'}(b)+1=n_R(b)$ and $n_{R'}(x)=n_R(x)$ for $x\in \{a,c\}$.
    \begin{itemize}
        \item If $n_{R'}(b)\geq n_{R'}(c)$, then $f^2(R',b)\leq\frac{n_{R'}(b)}{\sum_{x\in A} n_{R'}(x)} <\frac{1+n_{R'}(b)}{1+\sum_{x\in A} n_{R'}(x)}=f^2(R,b)$, $f^2(R',c)=f^2(R,c)=0$, and hence $f^2(R',a)>f^2(R,a)$, which proves that $f(R)\succ_i^\sd f(R')$.
        \item If $n_{R'}(b)+1=n_{R'}(c)$, then $f^2(R',b)=0=f^2(R,b)$, $f^2(R',c)>0=f^2(R,c)$, and thus $f^2(R',a)<1=f^2(R,a)$. It can again be verified that $f(R)\succ_i^\sd f(R')$.
        \item If $n_{R'}(b)+1<n_{R'}(c)$, then $f^2(R',b)=0=f^2(R,b)$, $f^2(R',c)=\frac{n_{R'}(c)}{\sum_{x\in A} n_{R'}(x)}>\frac{n_{R'}(c)}{1+\sum_{x\in A} n_{R'}(x)} = f^2(R,c)$, and hence $f^2(R',a)<f^2(R,a)$. Once again, it holds that $f^2(R)\succ_i^\sd f^2(R')$. \qedhere
    \end{itemize}
\end{itemize}
\end{proof}

\end{document}